\documentclass[envcountsame]{llncs}

\usepackage{paralist}
\usepackage[utf8]{inputenc}
\usepackage[T1]{fontenc}
\usepackage[numbers]{natbib}
\usepackage{amsfonts}
\usepackage{mathtools}
\usepackage{latexsym, amsmath, amssymb}
\usepackage{stmaryrd}
\usepackage{wasysym}
\usepackage{rotating}
\usepackage{pifont}
\usepackage{booktabs}
\usepackage{array}
\usepackage{multirow}
\usepackage{tikz}
\usepackage{color}
\usepackage{xcolor}
\usepackage{enumitem}
\usepackage{xspace}
\usepackage{thm-restate}
\usepackage{bm}



\spnewtheorem{definition}{Definition}{\bfseries}{\rmfamily}

\usepackage{xparse}
\usepackage{misc}

%
%

%


\DeclareDocumentCommand\mmsc{mmo}{%
    \IfNoValueT{#3}{\ensuremath{{\sf mmsc}\left({#1}, {#2}\right)}\xspace}%
    \IfNoValueF{#3}{\ensuremath{{\sf mmsc}\left({#1}, {#2}, {#3}\right)}\xspace}%
}
\newcommand{\dgpath}[1]{{\mathbf{#1}}}

\newcommand{\dgraph}{{\mathcal{G}}}
\newcommand{\C}{{\mathbf{C}}}

\newcommand{\mvf}{\ensuremath{{\sf mvf}}}
\newcommand{\mmvf}{\ensuremath{{\sf mmvf}}}
\newcommand{\reach}{\ensuremath{{\sf reach}}}
\newcommand{\weight}{\ensuremath{{\sf weight}}}
\newcommand{\SCC}{\ensuremath{{\sf SCC}}}
\newcommand{\scc}{\ensuremath{{\sf scc}}}
\newcommand{\lend}{{\ensuremath{{\sf v_{num}}}}}

\newcommand{\app}{approx}
\newcommand{\base}{\mathcal{B}}
\newcommand{\powerset}[1]{{\mathcal{P}({#1})}}
\newcommand{\newdsym}{d}
\newcommand{\newd}[2]{{\newdsym_{#2}\left(#1\right)}}

\newcommand{\assertion}[1]{\ensuremath{{\sf #1}}\xspace}

\setcounter{secnumdepth}{1} 

\newcommand{\J}{\mathcal{J}}

\usepackage[switch]{lineno} 
\usepackage[ruled,vlined,noend]{algorithm2e}
\newif\ifappendix
\appendixtrue 

\setcounter{secnumdepth}{1} 

%

\newcommand{\ourEL}{\ensuremath{{\cal E\!L}^\bot}\xspace}
\title{Mining \ourEL Bases with Adaptable Role Depth}
\author {
Ricardo Guimarães, 
Ana Ozaki, 
Cosimo Persia, 
 Baris Sertkaya 
}
\institute {
    Department of Informatics, University of Bergen, Norway\\
    Frankfurt University of Applied Sciences, Germany \\
    Ricardo.Guimaraes@uib.no, Ana.Ozaki@uib.no, Cosimo.Persia@uib.no, Sertkaya@fb2.fra-uas.de
}
\begin{document}
\maketitle

\begin{abstract}
In Formal Concept Analysis, a base for a finite structure is a 
set of implications that characterizes all valid implications of
the structure. This notion can be adapted to the context of Description Logic,
where the base consists of a set of concept inclusions instead of implications.
In this setting, concept expressions can be arbitrarily large.  Thus, it is not
clear whether a finite base exists and, if so, how large concept expressions may need to be.  
We first revisit results in the literature for mining \ourEL bases from finite interpretations. 
Those  mainly focus on finding a finite  base or on fixing the role depth but potentially losing
some of the valid concept inclusions with higher role depth. 
We then present a new strategy for mining \ourEL bases which
is adaptable  in the sense that it can bound the role depth of concepts
depending on the local structure of the interpretation. Our strategy
  guarantees to capture \emph{all}
\ourEL concept inclusions holding in the
interpretation,
  not only the ones up to a fixed role depth.
\end{abstract}

\section{Introduction}
Among its many applications in artificial intelligence, logic is used to formally represent knowledge.
Such knowledge, often in the form of facts and rules, enables machines to process 
complex relational data, deduce new knowledge from it,
and extract 
hidden 
relationships  in a specific domain.
A well-studied  formalism for  knowledge representation
is given by a family
of logics 
known as description logics (DLs)~\cite{DBLP:books/daglib/0041477}.
DL is the logical formalism behind 
the design of many knowledge-based applications.
However, it is often difficult and time-consuming to manually model 
in a formal language  rules and constraints that hold in a  domain of knowledge.

In this work, we consider an automatic method 
to extract 
 rules (concept inclusions (CIs)) formulated in DL from data.
This data can be, 
for instance, a collection of facts 
in a database or 
a 
knowledge graph. 
For instance, in the DBpedia knowledge graph~\cite{DBLP:journals/semweb/LehmannIJJKMHMK15},  one can represent
the 
relationship  between a city 
`${\sf a }$'
and the region 
`${\sf b }$'  
it belongs to
with the facts
\assertion{city(a)},
\assertion{region(b)},
\assertion{partof(a,b)}, and
\assertion{capital(b,a)}.
From this data, one can mine a CI 
expressing that a capital is a city that is part of a region.

To mine CIs that hold in a dataset, 
we combine notions of 
Formal Concept Analysis (FCA) \cite{ganter1999formal}
and DLs. 
FCA is a subfield of lattice theory
that provides methods for analysing datasets and identifying the dependencies
in them.
In FCA a dataset, 
also called a \emph{formal context}, 
is a table showing which objects have which attributes. 
Given a formal context, FCA methods are used to extract the dependencies
between the attributes, also called implications~(Figure~\ref{f:formal_context}).
%
A \emph{base}
is a set of implications that 
entails every valid implication of the dataset and only those 
(soundness and completeness).
It can be used for detecting erroneous or missing items in the dataset~\cite{BGSS07}.
In the DL setting, a base is a set of  CIs (an ontology)
which can serve as a starting point for ontology engineers to build 
an ontology in a domain of interest.

However, 
for some DLs and datasets, it may happen that no finite base exists.
Cyclic relationships are common in knowledge graphs and
they are the
main 
challenge for finding a finite DL base.
With only one cyclic relationship, we already have that infinitely many concepts hold in the dataset.
%
Strategies for limiting the size of concepts in the presence of cyclic dependencies
have already been investigated in the literature. 
Baader and Distel (2008) and Distel (2011)
propose
a way of mining  DL finite bases 
expressible in   
the DL $\EL^\bot_{gfp}$ 
which is the addition of
greatest fix-point semantics 
to the DL language $ \EL^\bot $~\cite{BaDi08,Distel2011}.
The semantics offered by $ \EL^\bot_{gfp} $
elegantly solves the difficulty of mining CIs from
cyclic relationships in the data. 
However, this semantics comes with two drawbacks.
Firstly, 
$ \EL^\bot_{gfp} $ concepts may be difficult to understand,
and learned CIs may be too complex to validate by
domain experts.
Secondly, there is no efficient implementation of a reasoner for
$ \EL^\bot_{gfp} $, 
even though the reasoning complexity is tractable, like for $ \EL^\bot $.
The authors also show how to transform an $\EL^\bot_{gfp}$ base into 
an \EL base. However, it is far from being trivial 
to avoid 
the step of creating an $\EL^\bot_{gfp}$ base in their approach. 

A simplification of the mentioned work has been proposed by 
Borchmann, Distel, and Kriegel (2016) where they show how to mine
$ \EL^\bot $ finite bases with a predefined and fixed  role depth for concept expressions~\cite{Borchmann2016}.
As a consequence, the  base is sound and complete
only w.r.t. CIs containing concepts with bounded role depth.
Their approach avoids the step of creating an $\EL^\bot_{gfp}$ base
but also avoids 
the main challenge
in creating a finite base for \ourEL, 
which is the fact that the role depth
of concepts can be arbitrarily large. 
 
Our work brings together the best of the approaches
by 
Distel (2011)
and Borchmann, Distel, and Kriegel (2016): 
we \emph{directly} compute a finite \ourEL base that captures the \emph{whole} language
(not only up to a certain role depth).
In particular, we present a new approach for
computing the role depth of concepts which \emph{adapts} 
depending on the objects considered during the computation of CIs.
\mypar{Related work} 
Several authors have worked on combining FCA and DLs or on applying methods 
from one field to the other~\cite{anasurvey}.
Baader 
  uses FCA to compute the
subsumption hierarchy of the conjunction of predefined con-
cepts~\cite{Baader95computinga}. 
uses FCA to compute the subsumption hierarchy of the conjunction of predefined concepts.
Rudolph  uses the DL $ \mathcal{FLE} $ for the definition of FCA attributes and FCA techniques for generating a knowledge base~\cite{Rudo04,Rudo06}.
Baader et al. uses FCA for completing missing knowledge in
a DL knowledge base~\cite{BGSS07} .
Baader et al.  proposes a method for building DL ontologies through the interaction of domain experts~\cite{DBLP:conf/iccs/BaaderM00}. 
Sertkaya  presents a survey on applications of FCA methods
in DLs~\cite{Sert10}.
Borchmann and Distel  provide a practical application of the
theory developed by Distel on knowledge graphs~\cite{BoDi11}. 
Borchmann  shows how a base of confident $\ourEL_{gfp}$
concept inclusions can be extracted from a DL interpretation~\cite{Borchmann14}.
Monnin et al.  
compare, using FCA techniques, data present in  DBpedia 
with the constraints of a given ontology to check if the data is compliant 
with it~\cite{DBLP:conf/ismis/MonninLNC17}.
Kriegel~\cite{Kriegel2019} among other contributions
employs FCA notions to build ontologies in logics more expressive than
\ourEL, building upon the framework already established for
\ourEL~\cite{Borchmann2016} and $\EL^\bot_{gfp}$~\cite{Distel2011}.  
He also investigates the same problem for probabilistic DLs~\cite{Kriegel-JELIA19}.

\smallskip


In the next section, we present basic definitions and notation. 
In Section~\ref{sec:mining}, we present the problem of mining 
\ourEL CIs and
establish
lower bounds
for this problem. 
In Section~\ref{sec:adaptable}, we present our main result
for mining \ourEL
 bases with adaptable role depth. 
 Our result uses a notion that relates each vertex in a graph to a set 
 of vertices, 
 called \emph{maximum vertices from} (MVF). In Section~\ref{sec:mvf},
 we show that the MVF of a vertex in a graph 
 can be computed in linear
 time in the size of the graph.
Missing proofs can be found in the long version~\cite{Guimaraes2020}.

\section{Preliminaries}
\label{sec:preliminaries}

%
 \begin{figure}[t]
 	\centering
 	\begin{tabular}{ c | c c c c  }
 		& $\sf{City}$ & $\sf{Region}$ & $ { \sf \exists partof.\top } $ & $ { \sf Settlement } $  \\
 		\hline
 		$ a $ & $ \times $ &  & $ \times $& $ \times $  
 		\\ 
 		$ b $ & $ \times $ &  & $ \times $&$ \times $ 
 \\  
 		$ c $ &  & $ \times $  & $ \times $&
 	\end{tabular}
 	\caption{
 		(a) A dataset 
 		with 
 		 4 attributes
 		 and 3 objects.
 		(b) The 
 		implications $ \sf{City} \rightarrow \sf{\exists partof.\top} $
 		and $ \sf{City} \rightarrow \sf{Settlement}$ hold in the dataset
 		but not $ \sf{City} \rightarrow \sf{Region}$. 
 	}
 	\label{f:formal_context}
 \end{figure}
We introduce the syntax and semantics of \ourEL and 
basic definitions related to description graphs used in the paper. 

\subsection{The Description Logic \ourEL}
\ourEL~\cite{BaBL05} is a lightweight DL,
which only allows for expressing conjunctions and  existential restrictions. 
Despite this rather low expressive power, slight extensions of it have turned 
out to be highly successful in practical applications, especially in the 
medical domain~\cite{SpCC97}. 

We use two \emph{finite} and \emph{disjoint} 
sets, $\NC$ and $\NR$, of \emph{concept} and \emph{role} 
names to define the syntax and semantics of \ourEL.
\ourEL \emph{concept expressions} 
are built according to the 
grammar rule
$C,D::= A  \mid \top\mid\bot\mid C\sqcap D\mid \exists r.C $
with $A\in\NC$ and $r\in\NR$.
We write $ \exists r^{n+1}.C $
as a shorthand for $ \exists r.(\exists r^n.C) $
where $\exists r^1.C:=\exists r.C$.
An \ourEL  \emph{TBox} is a finite set 
of  \emph{concept inclusions} (CIs) $C\sqsubseteq D$, where $C,D$ are \ourEL concept expressions.
We may omit `\ourEL' when we speak of concept expressions, 
CIs, and TBoxes, if this is clear from the context.
We may write $C \equiv D$ (an equivalence) as a short hand for 
when we have both $C\sqsubseteq D$ and $D\sqsubseteq C$.
%
The \emph{signature} of a  concept expression, a CI,
or  a TBox is the set of concept and role names occurring 
in it.

The semantics of  \ourEL is based on \emph{interpretations}. 
An interpretation \Imc is a pair $(\Delta^\Imc,\cdot^\Imc)$ 
where $\Delta^\Imc$ is a non-empty set, called the \emph{domain of \Imc}, and $\cdot^\Imc$ is a function 
mapping each $A\in\NC$ to a subset $A^\Imc$ of $\Delta^\Imc$
and each $r\in\NR$ to a subset $r^\Imc$ of $\Delta^\Imc\times \Delta^\Imc$. 
The function $\cdot^\Imc$ extends to arbitrary \ourEL
concept expressions as usual: 
\begin{gather*}
(C\sqcap D)^\Imc := {}  C^\Imc\cap D^\Imc \quad  (\top)^\Imc:=\Delta^\Imc \quad  (\bot)^\Imc:=\emptyset \\
(\exists r.C)^\Imc := {}  \{x\in\Delta^\Imc\mid  (x,y)\in r^\Imc  \text{ and } y\in C^\Imc  \}
\end{gather*}
An interpretation \Imc \emph{satisfies} a CI $C\sqsubseteq D$, in symbols $\Imc\models C\sqsubseteq D$, iff 
$C^\Imc \subseteq D^\Imc$. 
It satisfies a TBox $\Tmc$ if it satisfies
all CIs in \Tmc.
%
%
%
A TBox \Tmc \emph{entails} a CI $C\sqsubseteq D$, written
$\Tmc\models C\sqsubseteq D$, iff all interpretations 
satisfying \Tmc also satisfy $C\sqsubseteq D$.
We write $\Sigma_\Imc$ for the set
of concept or role names $X$ 
such that  $X^\Imc\neq \emptyset$.
A \emph{finite interpretation} is an interpretation
with a finite domain. 
 
\subsection{Description Graphs, Products, and Unravellings}
We also use the notion of description graphs~\cite{Baader2003}.
The \emph{description graph} $ \dgraph(\I) =(V_\I,E_\I,L_\I) $ of an interpretation \I 
 is defined as (e.g. Figure~\ref{f:descriptiongraph}):
 \begin{enumerate}
\item $ V_\I = \Delta^\I $;
\item $ E_\I = \{ (x,r,y) \mid r\in \NR \text{ and } (x,y)\in r^\I \} $;
\item $ L_\I(x) = \{ A\in \NC\mid x \in A^\I \} $.
 \end{enumerate}

The \emph{description tree} of an \ourEL concept expression $C$
over the signature $\Sigma$ is the finite directed 
tree \(\dgraph(C) = (V_C, E_C, L_C)\) where \(V_C\) is the set of nodes,  
\(E_C \subseteq V_C \times \NR \times V_C\) is the set of edges, and 
\(L_C : V \to 2^{\NC}\) 
is the labelling function.
$ \dgraph(C) $ is  defined inductively: 
\begin{enumerate}
\item for $ C = \top $, $ V_C = \{ \rho_C \} $ and $ L_C(\rho_C) = \emptyset $
where $\rho_C$ is the root node of the tree;
\item for $ C = A\in\NC $, $ V_C = \{ \rho_C \} $ and $ L_C(\rho_C) = A $;
\item for $ C = D_1\sqcap D_2 $, $\dgraph(C)$ is obtained by merging the roots 
$ \rho_{D_1} $, $ \rho_{D_2} $ in one $ \rho_C $ with 
$ L_C(\rho_C) = L_{D_1}(\rho_{D_1}) \cup  L_{D_2}(\rho_{D_2})$;
\item for $ C = \exists r.D $, $\dgraph(C)$ is built from $\dgraph(D)$ by adding 
a new node (root) $ \rho_C $ to $ V_D $ and an edge $ (\rho_C, r, \rho_D) $ to $ E_D $.
\end{enumerate}
The \emph{concept expression} (unique up to logical equivalence) 
$ \C(\dgraph_v) $ of a tree shaped graph
$ \dgraph_v = (V,E,L) $
rooted in $ v $
is
\[
\bigsqcap_{i = 1}^{k} P_i \sqcap \bigsqcap_{j = 1}^l \exists
r_j.\C(\dgraph_{w_j}),
\]
where 
\(L(v) = \{P_i \mid 1 \leq i \leq k\}\), 
$(v,r_j,w_j)\in E$ (and there are $l$ such edges)
and 
 $ \C(\dgraph_{w_j})$ is inductively defined,
with $ \dgraph_{w_j} $ being the subgraph of $\dgraph$ rooted in $w_j$.

\begin{figure}[btp]
	\centering
	\fbox{
		\begin{tikzpicture}
		\tikzset{d/.style={circle,fill=black,inner sep=0pt,minimum size=3pt}}
		\tikzset{l/.style={fill=white,opacity=0,text opacity=1,align=left}}
		\tikzset{my loop/.style =  {to path={
					\pgfextra{\let\tikztotarget=\tikztostart}
					[looseness=12,min distance=10mm]
					\tikz@to@curve@path},font=\sffamily\small
		}}
		
		\def\x{0}
		\def\y{-.5}
		\node[l,scale=0.9] () at (\x-2.1,\y+1) 
		{$ \sf{City} \sqcap  \exists {\sf government}. {\sf Party} \sqcap
			$ \\ $\exists \sf{partof}.(\sf{Region} \sqcap \exists \sf{capital}.\top)$ };
		
		\def\x{1.6}
		\def\y{0.5}
		\node () [] at (\x,\y){
			\begin{tikzpicture}[scale=0.8, every node/.style={scale=0.8}]
			\def\x{-4.2}
			\def\y{-.9}
			\node[d] (1) at (\x,\y) {};
			\node[d] (2) at (\x+.5,\y+.5) {};
			\node[d] (3) at (\x+1.3,\y) {};
			\node[d] (4) at (\x+2.3,\y) {};
			\node[l] () at (\x-0.7,\y) {$\sf{City}$};
			\node[l] () at (\x+1,\y+0.7) {${\sf Party}$};
			\node[l] () at (\x+1.3,\y+0.25) {$\sf{Region}$};
			\draw[->] (1) to node[l,above,shift={(-0.5,-0.1)}] {$ {\sf govern.} $} (2);
			\draw[->] (1) to node[l,above,shift={(0,-0.55)}] {$ \sf{partof} $} (3);
			\draw[->] (3) to node[l,above,shift={(-0,-0.55)}] {$ \sf{capital} $} (4);
			\end{tikzpicture}
		};
		
%
		%
		
	\end{tikzpicture}
}
\caption{A concept expression and its description graph.}
\label{f:descriptiontree}
\end{figure}

%
A \emph{walk} in a description graph
$\dgraph = (V, E, L)$ between two nodes $u, v \in V$ is a word
$\dgpath{w} = v_0 r_0 v_1 r_1 \ldots r_{n-1} v_n$ where
$v_0 = u$, $v_n = v$, $v_i \in V$, $r_i \in \NR$ and 
$(v_i, r_i, v_{i+1}) \in E$ for all $i \in \{0, \ldots, n-1\}$.
The length of $\dgpath{w}$ in this case is $n$, in symbols, \(|\dgpath{w}| = n\). 
Walks with length  $n=0$ are possible, it means that the walk has just one 
vertex (no edges). Vertices and edges may occur multiple times in a walk.
Let \(\dgraph = (V, E, L)\) be an \(\ourEL\) description graph with $x\in V$ and 
\(d \in \mathbb{N}\). Denote by \(\delta(\dgpath{w})\)  the last vertex in
the walk \(\dgpath{w}\). The \emph{unravelling} of
\(\dgraph\) up to depth \(d\) is the description graph \(\dgraph^x_d
= (V_d, E_d, L_d)\) starting at node $x$ defined as follows:
\begin{enumerate}
	\item 
\(V_d\) is the set of all directed walks in \(\dgraph\) that
start at \(x\) and have length at most \(d\);
	\item 
\(E_d = \{(\dgpath{w}, r, \dgpath{w}rv) \mid v \in V, r \in
\NR,  \dgpath{w}, \dgpath{w}rv \in V_d\}\);
	\item 
\(L_d(\dgpath{w}) = L(\delta(\dgpath{w}))\).  
\end{enumerate}
A \emph{path} is a walk
where vertices do not repeat.
\begin{figure}[hbtp]
\centering
\fbox{
	\begin{tikzpicture}
	\tikzset{d/.style={circle,fill=black,inner sep=0pt,minimum size=3pt}}
	\tikzset{l/.style={fill=white,opacity=0,text opacity=1,align=left}}
	\tikzset{my loop/.style =  {to path={
				\pgfextra{\let\tikztotarget=\tikztostart}
				[looseness=12,min distance=10mm]
				\tikz@to@curve@path},font=\sffamily\small
	}}

	\def\x{0}
	\def\y{0}
	\node[l] () at (\x+.2,\y+1) {$ (i)$};
	\node[] (1) at (\x+1,\y+0.8) {$ a $};
	\node[] (2) at (\x+1,\y) {$ b $};
	\node[l] () at (\x+1,\y+1.05) {$\sf{City}$};
	\node[l] () at (\x+1,\y-0.35) {$\sf{Region}	$};
	\draw[->] (1) to[in=150,out=210,looseness=1.2] 
	node[l,above,shift={(-0.1,-0.2)}] {$ 1 $} (2);
	\draw[->] (2) to[in=330,out=45,looseness=1.2] 
	node[l,above,shift={(+0.1,-0.25)}] {$ 2 $} (1);
	%
	
	\def\x{1.7}
	\def\y{0}
	\node[l] () at (\x+1.3,\y+1) {$ (ii)$};
	\def\y{0.3}
	\node[] (1) at (\x+.5,\y) {$ a $};
	\node[] (2) at (\x+2,\y) {$ a1b $};
	\node[l] () at (\x+.5,\y+0.3) {$\sf{City}$};
	\node[l] () at (\x+2,\y+0.3) {$\sf{Region}$};
	\draw[->] (1) to node[l,above,shift={(-0,-0.4)}] {$ 1 $} (2);
	
	\def\x{4.5}
	\def\y{0}
	\node[l] () at (\x+1,\y+1) {$(iii)$};
	\def\y{0.3}
	\node[] (1) at (\x+.5,\y) {$ a $};
	\node[] (2) at (\x+2,\y) {$ a1b $};
	\node[] (3) at (\x+2.5,\y-0.8) {$ a1b2a $};
	\node[l] () at (\x+.5,\y+0.3) {$\sf{City}$};
	\node[l] () at (\x+2,\y+0.3) {$\sf{Region}$};
	\node[l] () at (\x+2.3,\y-0.5) {$\sf{City}$};
	
	\node[l] () at (\x+.5,\y-.5) {$\sf{Region}$};
	\node[] (4) at (\x+.5,\y-.8) {$ a1b2a1b $};
	\draw[->] (2) to[in=35,out=350,looseness=1.2] 
	node[l,above,shift={(+0.3,-0.3)}] {$ 2 $} (3);
	\draw[->] (1) to node[l,above,shift={(-0.1,-0.4)}] {$ 1 $} (2);
	\draw[->] (3) to node[l,above,shift={(-0.1,-0.4)}] {$ 1 $} (4);

	\end{tikzpicture}
}
\caption{Unravelling of the description graph of the interpretation \I 
	in ~$(i)$.
	For readability, $ \sf{partof} $ has been replaced with symbol $ 1 $
	and $ \sf{capital} $ with symbol $ 2 $.
	$(ii)$ depicts  $\dgraph(\I)^a_1$ and
	$(iii)$ depicts $ \dgraph(\I)^a_3 $.}
\label{f:unravelling}
\end{figure}

 Let 
 \(\dgraph_1, \dots,
\dgraph_n\)
 be \(n\) description
graphs such that \(\dgraph_i = (V_i, E_i, L_i)\).
Then the \emph{product} of \(\dgraph_1, \dots,
\dgraph_n\) is the description graph \((V, E, L)\)
defined as: 
\begin{enumerate}
	\item 
	\(V = \bigtimes_{i = 1}^{n} V_i\);
	\item 
    \(E = \{((v_1, \dots, v_n), r, (w_1, \dots, w_n)) \mid
            r \in \NR, (v_i, r, w_i) \in E_i, \text{ for all } 1 \leq i \leq
        n\}\);
	\item 
	\(L(v_1, \dots, v_n) = \bigcap_{i=1}^n L_i(v_i)\).
\end{enumerate}

If each \(\dgraph_i\) is a tree with root \(v_i\) then
we denote by \(\prod_{i = 1}^n \dgraph_i\) the tree rooted in
\((v_1, \dots, v_n)\) contained in the product graph of \(\dgraph_1, \dots,
\dgraph_n\).

\section{Mining \ourEL Bases }
\label{sec:mining}

The set of all \ourEL CIs that are satisfied by an interpretation $\I$ is in 
general infinite because whenever $\I \models C \sqsubseteq D$, 
$\I \models \exists r. C \sqsubseteq \exists r. D$ as well.  Therefore one is 
interested in a finite and small set of CIs that
entails the whole set of valid CIs. For mining such a set of CIs from a given
interpretation we employ ideas from FCA and recall literature results.
%

\begin{definition}\label{d:base}
	A TBox \Tmc is a \emph{base} for a finite interpretation \Imc and a DL language $L$,
	if for    every CI $C\sqsubseteq D$, formulated within $L$ and $\Sigma_\I$:
		$\Imc\models C\sqsubseteq D$ iff $\Tmc \models C\sqsubseteq D$.
\end{definition}

%
%

\begin{figure}[btp]
	\centering
	\fbox{
		\begin{tikzpicture}
		\tikzset{d/.style={circle,fill=black,inner sep=0pt,minimum size=3pt}}
		\tikzset{l/.style={fill=white,opacity=0,text opacity=1,align=left}}
		\tikzset{my loop/.style =  {to path={
					\pgfextra{\let\tikztotarget=\tikztostart}
					[looseness=12,min distance=10mm]
					\tikz@to@curve@path},font=\sffamily\small
		}}
		
		%
		\def\x{-2.8}
		\def\y{-.6}
		\node () [] at (\x,\y){
			\begin{tikzpicture}[scale=0.9, every node/.style={scale=0.9}]
			\def\x{-4.2}
			\def\y{-.9}
			\node[] (1) at (\x,\y) {$ x_1 $};
			\node[] (2) at (\x+.5,\y+.8) {$ x_3 $};
			\node[] (3) at (\x+1.3,\y) {$ x_5 $};
			\node[] (4) at (\x+2.6,\y) {$ x_6 $};
			\node[l] () at (\x-0.8,\y) {$\sf{City}$};
			\node[l] () at (\x+1.8,\y+0.9) {${\sf Party, Liberal}$};
			\node[l] () at (\x+1.3,\y+0.3) {$\sf{Region}$};
			\draw[->] (1) to node[l,above,shift={(-0.6,-0.2)}] {$ {\sf govern.} $} (2);
			\draw[->] (1) to node[l,above,shift={(0,-0.6)}] {$ \sf{partof} $} (3);
			\draw[->] (3) to node[l,above,shift={(-0,-0.55)}] {$ \sf{capital} $} (4);
			\end{tikzpicture}
		};
		
		\def\x{1.1}
		\def\y{-.6}
		\node () [] at (\x,\y){
			\begin{tikzpicture}[scale=0.9, every node/.style={scale=0.9}]
			\def\x{-4.2}
			\def\y{-.9}
			\node[] (1) at (\x,\y) {$ x_2 $};
			\node[] (2) at (\x+.55,\y+.8) {$ x_4 $};
			\node[] (3) at (\x+1.3,\y) {$ x_7 $};
			\node[l] () at (\x-0.8,\y) {$\sf{City}$};
			\node[l] () at (\x+1.75,\y+0.8) {${\sf Party }, $ \\ $\sf{Organization}$};
			\node[l] () at (\x+2.1,\y+0.1) {$\sf{Region}$};
			\draw[->] (1) to node[l,above,shift={(-0.6,-0.2)}] {$ {\sf govern.} $} (2);
			\draw[->] (1) to node[l,above,shift={(0.1,-0.6)}] {$ \sf{partof} $} (3);
			\draw[->] (3) to[in=270,out=330,looseness=1.3] node[l,above,shift={(-0,-0.45)}] {$ \sf{capital} $} (1);
			\end{tikzpicture}
		};

	\end{tikzpicture}
}
\caption{Description graph of the interpretation $ \I= \{ \{ x_1,\cdots,x_7 \} ,\cdot^\I \} $ where $ \{ x_1,x_2 \} = \sf{City}^\I $, $ \{ x_3,x_4 \} = \sf{Party}^\I $, 
	$ \{ (x_1,x_5), (x_2,x_7) \} = \sf{partof}^\I $, etc.}
\label{f:descriptiongraph}
\end{figure}

We say that a DL has the \emph{finite base property} (FBP) if, for all finite 
interpretations \I, there is a finite base
with CIs formulated within the DL language and $\Sigma_\I$. 
Not all DLs have the finite base property. 
Consider for instance the fragments $\ourEL_{rhs}$ (and $\ourEL_{lhs}$) 
of $\ourEL$ that allows only 
concept names on the left-hand (right-hand) side but \emph{complex} \ourEL concept expressions on the 
right-hand (left-hand) side of CIs.

\begin{restatable}{proposition}{nofinitebase}

	\label{p:nofinitebase}
	$ \ourEL_{rhs} $ and $ \ourEL_{lhs} $
	do not have the FBP. 
\end{restatable}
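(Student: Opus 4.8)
The plan is to prove both non-FBP claims by the same recipe: for each fragment I would fix a small finite interpretation together with an infinite family of valid concept inclusions of strictly increasing role depth, and then show by a truncation (pumping) argument that any \emph{finite} candidate base—necessarily of some bounded role depth $d$—fails to entail the member of the family of depth $d+1$. The key lemma I would isolate is: if $\mathcal{T}$ is a finite set of CIs of the fragment in which every concept expression has role depth at most $d$, then there is an interpretation $M_d$ with $M_d\models\mathcal{T}$ but $M_d\not\models\varphi_{d+1}$, where $\varphi_{d+1}$ is the depth-$(d{+}1)$ member of the family; since the fixed interpretation satisfies $\varphi_{d+1}$, completeness of $\mathcal{T}$ fails and no finite base exists.

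For $\ourEL_{rhs}$ I would take $\mathcal{I}_{1}$ with domain $\{a,b\}$, $A^{\mathcal{I}_1}=\{a\}$ and $r^{\mathcal{I}_1}=\{(a,b),(b,b)\}$, so that $a$ feeds into an \emph{unlabelled} $r$-self-loop at $b$; note $\Sigma_{\mathcal{I}_1}=\{A,r\}$, so every admissible left-hand side is the concept name $A$. The family is $\varphi_n = A \sqsubseteq \exists r^{n}.\top$, valid in $\mathcal{I}_1$ for all $n$ because $a$ has arbitrarily long outgoing $r$-walks. Given a finite $\ourEL_{rhs}$ base $\mathcal{T}$ whose right-hand sides have role depth $\le d$, let $M_d$ be the directed path $y_0,\dots,y_d$ with $r^{M_d}=\{(y_i,y_{i+1})\mid 0\le i<d\}$ and $A^{M_d}=\{y_0\}$. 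Each CI of $\mathcal{T}$ has the form $A\sqsubseteq D$ with $\operatorname{rd}(D)\le d$, and since $a\in D^{\mathcal{I}_1}$ forces all non-root nodes of the description tree of $D$ to be unlabelled, $D$ admits a homomorphism into $M_d$ rooted at $y_0$ (send a tree node at depth $j$ to $y_j$, collapsing any branching); hence $y_0\in D^{M_d}$ and $M_d\models\mathcal{T}$, while $y_0\notin(\exists r^{d+1}.\top)^{M_d}$, so $M_d\not\models\varphi_{d+1}$. The conceptual point I would stress is \emph{why} this blocks a finite base: the right-hand sides must grow arbitrarily deep into the loop, and a concept-name left-hand side cannot push an existential requirement past the root, because the looping vertex $b$ carries no concept name on which a further CI could fire.

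For $\ourEL_{lhs}$ the argument is dual: the right-hand sides are concept names and it is the \emph{left}-hand premises that must grow, so I would look for an interpretation in which the validity of an infinite family $\varphi_n = C_n \sqsubseteq B$ with $\operatorname{rd}(C_n)=n$ can only be certified by premises of unbounded role depth, and then run the same pumping lemma, constructing $M_d$ so that it satisfies every bounded-depth premise of $\mathcal{T}$ while some non-$B$ element still satisfies $C_{d+1}$. I expect \emph{this lower bound to be the main obstacle}. Unlike the $\ourEL_{rhs}$ case, cyclic interpretations frequently admit sound \emph{propagating} inclusions such as $\exists r.B\sqsubseteq B$ (or $\exists r^{k}.B\sqsubseteq B$), which chain shallow premises into arbitrarily deep ones and thereby rescue finiteness; a naive cyclic example therefore still has a finite $\ourEL_{lhs}$ base. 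The interpretation must be engineered so that no such propagating inclusion is sound—so that membership in the target concept name cannot be detected locally and re-propagated upward—after which the truncation argument applies as before. Verifying that the chosen interpretation defeats every propagation shortcut, and hence that the required left-hand premises are genuinely not finitely axiomatisable, is the delicate step; the remainder mirrors the $\ourEL_{rhs}$ construction.
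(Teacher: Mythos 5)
Your \(\ourEL_{rhs}\) half is correct and coincides with the paper's argument: the paper uses the same two-element interpretation (Figure~\ref{f:exfinitebases}~(i)) and the same family \(A \sqsubseteq \exists r^n.\top\); its appendix proof merely asserts that a finite base, which can only consist of CIs of the form \(A \sqsubseteq C\), cannot entail all members of the family, so your truncated path model \(M_d\) actually supplies a step the paper leaves implicit, and it is sound (since \(b\) is unlabelled, every sound right-hand side has all non-root nodes of its description tree unlabelled, hence maps homomorphically into the depth-\(d\) path rooted at \(y_0\)).

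The \(\ourEL_{lhs}\) half, however, has a genuine gap: you correctly diagnose that propagating inclusions such as \(\exists r^k.B \sqsubseteq B\) rescue finiteness for naive cyclic examples, but you stop at a specification (``the interpretation must be engineered so that no such propagating inclusion is sound'') without producing the interpretation --- and that construction is precisely the content of this half of the proposition. The paper's witness is \(\mathcal{I} = (\{x_1,x_2,x_3,x_4\}, \cdot^{\mathcal{I}})\) with \(r^{\mathcal{I}} = \{(x_2,x_2),(x_4,x_4),(x_3,x_2)\}\), \(s^{\mathcal{I}} = \{(x_1,x_2),(x_3,x_4)\}\), \(A^{\mathcal{I}} = \{x_1\}\), \(B^{\mathcal{I}} = \{x_2\}\) (Figure~\ref{f:exfinitebases}~(ii)), with valid family \(\exists s.\exists r^n.B \sqsubseteq A\). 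The decoy \(x_3\) does double duty: its \(r\)-edge into the \(B\)-loop puts \(x_3 \in (\exists r^k.B)^{\mathcal{I}}\) for all \(k\) while \(x_3 \notin A^{\mathcal{I}} \cup B^{\mathcal{I}}\), which makes every propagation shortcut \(\exists r^k.B \sqsubseteq B\) or \(\exists r^k.B \sqsubseteq A\) unsound; and its \(s\)-edge into the label-free \(r\)-loop at \(x_4\) makes \(x_3\) satisfy every concept \(D\) that is strictly entailed by \(\exists s.\exists r^n.B\) without being equivalent to it, so any sound CI \(D \sqsubseteq A\) that could be used to derive the family member must have \(D\) equivalent to \(\exists s.\exists r^n.B\) itself, of role depth \(n+1\); hence a base needs CIs of every role depth and cannot be finite. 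Note also that your key lemma, as stated for an arbitrary finite \(\mathcal{T}\) of depth at most \(d\), is false for lhs fragments in general --- propagating CIs chain shallow premises into arbitrarily deep consequences --- so it cannot simply be ``run the same''; it becomes available only after an interpretation with the decoy property is fixed, and the paper in fact concludes syntactically (the base must contain a left-hand side equivalent to \(\exists s.\exists r^n.B\)) rather than by constructing a countermodel \(M_d\).
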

\begin{proof}(Sketch)
	No  finite base $ \ourEL_{rhs} $ exists for the interpretation in 
	Figure~\ref{f:exfinitebases}~$ (i) $. 	For every $ n\geq 1 $, 
	the $ \ourEL_{rhs} $ base should entail the CI $ A\sqsubseteq \exists r^n.\top $.
	Similarly, no finite $ \ourEL_{lhs} $ base exists for 
	the interpretation in Figure~\ref{f:exfinitebases}~$ (ii) $.
	For every $ n\geq 1 $, the $ \ourEL_{lhs} $ base should entail the 
	CI $ \exists s.\exists r^n. B \sqsubseteq A $.
\end{proof}


\begin{figure}[hbtp]
\centering
\fbox{
	\begin{tikzpicture}
	\tikzset{d/.style={circle,fill=black,inner sep=0pt,minimum size=3pt}}
	\tikzset{l/.style={fill=white,opacity=0,text opacity=1}}
	\tikzset{my loop/.style =  {to path={
				\pgfextra{\let\tikztotarget=\tikztostart}
				[looseness=12,min distance=10mm]
				\tikz@to@curve@path},font=\sffamily\small
	}}
	
	\def\x{0}
	\def\y{0}
	\node[l] () at (\x-0.2,\y+1) {$(i)$};
	\node[d] (1) at (\x,\y) {};
	\node[d] (2) at (\x+.5,\y+.5) {};
	\node[l] () at (\x-0.22,\y) {$A$};
	\draw[->] (2) to[in=150,out=30,looseness=13.8] node[l,above] {$ r $} (2);
	\draw[->] (1) to node[l,above,shift={(-0.1,-0.1)}] {$ r $} (2);
	
	\def\x{3}
	\def\y{0}
	\node[l] () at (\x-0.2,\y+1) {$(ii)$};
	\node[d] (1a) at (\x,\y) {};
	\node[d] (2a) at (\x+.5,\y+.5) {};
	\node[l] () at (\x-0.22,\y) {$A$};
	\node[l] () at (\x+.75,\y+.4) {$B$};
	\draw[->] (2a) to[in=150,out=30,looseness=13.8] node[l,above] {$ r $} (2a);
	\draw[->] (1a) to node[l,above,shift={(-0.1,-0.1)}] {$ s $} (2a);
	
	\def\x{4.5}
	\def\y{0}
	\node[d] (1b) at (\x,\y) {};
	\node[d] (2b) at (\x+.5,\y+.5) {};
	\draw[->] (2b) to[in=150,out=30,looseness=13.8] node[l,above] {$ r $} (2b);
	\draw[->] (1b) to node[l,above,shift={(-0.1,-0.1)}] {$ s $} (2b);
	\draw[->] (1b) to[out=190,in=290,looseness=1] node[l,below,shift={(-0.1,+0.1)}] {$ r $} (2a);
	
	\end{tikzpicture}
}
\caption{Lack of the FBP   for $\ourEL_{rhs}$ $ (i) $ and
 $\ourEL_{lhs}$ $ (ii) $.}
\label{f:exfinitebases}
\end{figure}

The main difficulty in  creating an \ourEL base is knowing how 
to define the role depth of concept expressions in the base.
In a finite interpretation, an arbitrarily large role depth means
the presence of a cyclic structure in the interpretation. 
However, \ourEL concept expressions cannot 
express cycles.  
The difficulty can be overcomed by extending \ourEL with greatest fix-point semantics.
It is known that the resulting DL, called $\EL^\bot_{gfp}$, has the FBP~\cite{BaDi08,Distel2011}.
The authors then show how to transform an $\EL^\bot_{gfp}$ base into an \ourEL base,
thus,  establishing that \ourEL also enjoys the FBP.

In the following, we show that, although finite, the role depth of 
a base for \ourEL and a (finite) interpretation \I
can be exponential in the size of \I.

\begin{example}\label{e:exponentiald}
		Consider   \I represented in the shaded area in Figure~\ref{f:exponentiald}.
	For $ p_1 = 2, p_2 = 3, p_3 = 5 $ and for all $ k \in  \mathbb{N}^+$,
	we have that $ x_i \in (\exists r^{k\cdot p_i -1}.A)^\I $,
	where $ 1\leq i \leq 3 $.
	We know 
	that 
	$ 30= min(\bigcap_{i=1}^3 \{ k\cdot p_i \mid k\in \mathbb{N}^+ \} ) = \prod_{i=1}^{n} p_i
	$ (which is the least common multiple).
	We also know that for any $ n,p\in\mathbb{N}^+ $,
	$ n+1 $ is a multiple of  $ p $ iff
	$ n $ is a multiple of $p$ minus $1$.
	%
	%
	Therefore, the number 
	$$ d = min(\bigcap_{i=1}^3 \{ k\cdot p_i -1 \mid k\in\mathbb{N}^+ \} ),$$ 
	such that 
	$ \{x_1,x_2, x_3 \} =  B^\I = (\exists r^d. A)^\I$, 
	is 
	$ \prod_{i=1}^{3} p_i -1  = 29$.
	A base for \I should have the CI with role depth at 
	least $ d $ because it has to
	entail the CI  $ B \sqsubseteq \exists r^{d}.A$.
\end{example}

\begin{restatable}{theorem}{exponentialroledepth}
	\label{th:exponentialroledepth}
	There is a finite interpretation $ \I = (\Delta^\I,\cdot^\I) $ 
	such that
	 any 
	\ourEL base for \I  has 
	a concept expression with role depth exponential in the size of \I.
\end{restatable}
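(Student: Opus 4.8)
The plan is to pin down a single interpretation for which \emph{completeness} forces one concept expression of the base to have role depth at least $d=\prod_{i=1}^{n}p_i-1$, and to prove the required lower bound by a counter-model argument rather than by syntactic manipulation of the base. The construction and the arithmetic are essentially those of Example~\ref{e:exponentiald}; the substantive work is the counter-model.

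\textbf{The interpretation.} I would generalise Example~\ref{e:exponentiald} from three primes to the first $n$ primes $p_1,\dots,p_n$. For each $i$ take a directed $r$-cycle $x_i=v^i_0\to v^i_1\to\cdots\to v^i_{p_i-1}\to v^i_0$, label its root $v^i_0$ with $B$ and the vertex $v^i_{p_i-1}$ with $A$, and let $\I$ be the disjoint union of these cycles. Then $v^i_0\in(\exists r^{m}.A)^\I$ iff $m\equiv-1\pmod{p_i}$, and $B^\I=\{x_1,\dots,x_n\}$. By the Chinese Remainder Theorem (exactly as in the Example), $\I\models B\sqsubseteq\exists r^{m}.A$ iff $m+1$ is a common multiple of all $p_i$, i.e.\ iff $m\ge d=\prod_i p_i-1$; in particular $d$ is the \emph{least} exponent for which this CI holds in $\I$. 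Writing $s=|\Delta^\I|=\sum_i p_i$, the exponent $d$ grows like the primorial, hence is exponential in $n$ and, a fortiori, not polynomially bounded in $s$, which is what the statement asks for.

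\textbf{Reduction to a single CI.} Let $\Tmc$ be any \ourEL base for $\I$. By completeness and $\I\models B\sqsubseteq\exists r^{d}.A$ we have $\Tmc\models B\sqsubseteq\exists r^{d}.A$. Suppose, for contradiction, that \emph{every} concept expression occurring in $\Tmc$ has role depth $\mathrm{rd}<d$. By soundness each CI of $\Tmc$ holds in $\I$, so it suffices to build one interpretation $\J$ that satisfies \emph{all} CIs that are (i) sound for $\I$ and (ii) of role depth $<d$, while $\J\not\models B\sqsubseteq\exists r^{d}.A$: then $\J\models\Tmc$ yet $\J$ violates a CI entailed by $\Tmc$, a contradiction, forcing some concept expression of $\Tmc$ to have role depth $\ge d$.

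\textbf{The counter-model and the main obstacle.} I take $\J$ to be a single $B$-labelled element $b$ followed by an infinite, completely unlabelled $r$-ray $b=u_0\to u_1\to u_2\to\cdots$; since $A^\J=\emptyset$ we have $\J\not\models B\sqsubseteq\exists r^{d}.A$. The heart of the proof is to verify $\J\models C\sqsubseteq D$ for every $C\sqsubseteq D$ sound for $\I$ with $\mathrm{rd}(C),\mathrm{rd}(D)<d$. Using the standard homomorphism characterisation of \ourEL membership ($x\in C^\J$ iff $\dgraph(C)$ maps homomorphically to $\dgraph(\J)$ at $x$), the concepts realised in $\J$ are exactly those equivalent to $\exists r^{m}.\top$ (on the ray) or to $B\sqcap\exists r^{m}.\top$ (at $b$), because the ray carries no concept name and so no $C$ mentioning a concept name below its root can be realised. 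Hence the only sound CIs whose left-hand side is realised in $\J$ are those whose right-hand side is a depth-$(<d)$ consequence common to all of $x_1,\dots,x_n$ (when the left-hand side demands $B$ at its root) or common to every element of $\I$ (otherwise); the modular computation above shows no such common consequence of role depth $<d$ can force $\exists r^{j}.A$ for any $j$ (this would need $j\equiv-1$ modulo every $p_i$, hence $j\ge d$), nor force $B$ at any positive depth (this would need depth $\ge\prod_i p_i>d$). Every such right-hand side is therefore of the form $\exists r^{m}.\top$, possibly conjoined with $B$, and is satisfied in $\J$. The delicate point — and the main obstacle — is exactly ruling out that some clever \emph{conjunction or nesting} of individually harmless low-depth CIs combines to pin an $A$ at a depth $<d$; the conceptual reason it cannot is that \ourEL has no disjunction, so the bounded theory can never record ``$A$ occurs at distance $p_i-1$ for \emph{some} $i$'', and forcing $A$ uniformly from $B$ requires one exponent congruent to $-1$ modulo every $p_i$ at once, the smallest of which is $d$. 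Making this fully rigorous (handling conjunctive left-hand sides, nested occurrences of $A$ and $B$, and the $\bot$-case, or equivalently analysing the canonical model of the depth-$(<d)$ theory) is where the real care is needed; once it is in place, the contradiction in the previous paragraph yields the theorem.
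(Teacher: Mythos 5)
Your proposal is correct, and it is built on exactly the paper's construction: disjoint $r$-cycles of prime lengths $p_1,\dots,p_n$ with $B$ at each root and $A$ one step before it, so that by the Chinese Remainder Theorem the least $m$ with $\I \models B \sqsubseteq \exists r^{m}.A$ is $d = \prod_{i=1}^{n} p_i - 1 \geq 2^n$. Where you genuinely diverge — to your credit — is in the final step. The paper's appendix proof ends with the bare assertion that entailing $B \sqsubseteq \exists r^{d}.A$ forces a CI of role depth at least $d$; you correctly recognise that this needs justification (a priori, shallow CIs over $\{A,B,r\}$ might chain to a deep consequence, and indeed \I does satisfy shallow CIs such as $A \sqsubseteq \exists r.B$) and you supply it via a counter-model: the infinite unlabelled $r$-ray rooted at a single $B$-element satisfies every \I-sound CI whose concepts have role depth below $d$, yet refutes $B \sqsubseteq \exists r^{d}.A$. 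The ``delicate point'' you flag closes cleanly once you exploit that every vertex of \I has out-degree one: satisfaction of an arbitrary \ourEL concept at a cycle vertex then flattens to checking, for each occurrence of a concept name at nesting depth $j$, the label of the vertex $j$ steps ahead, so a left-hand side realised at the ray's root is (up to equivalence) $B$ conjoined with a name-free concept and has extension containing all cycle roots, whence a sound right-hand side could mention $A$ only at nesting depth congruent to $-1$ modulo every $p_i$ (hence $\geq d$) and $B$ only at positive depth divisible by every $p_i$ (hence $> d$); left-hand sides realised strictly on the ray are name-free, hold at every element of \I, and force a name-free right-hand side. Two small repairs: $\I \models B \sqsubseteq \exists r^{m}.A$ holds iff $m+1$ is a multiple of $\prod_i p_i$, not for all $m \geq d$ (harmless, since you only use minimality of $d$); and note that $d \geq 2^n$ against $|\Delta^\I| = \sum_i p_i = \Theta(n^2 \log n)$ is strictly superpolynomial rather than exponential in $|\Delta^\I|$ — your hedge ``not polynomially bounded'' is the honest reading, and this caveat applies equally to the paper's own statement. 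Net effect: same construction and arithmetic, but your counter-model argument makes rigorous the one step the paper leaves implicit.
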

\begin{proof}(Sketch)
	We can generalise  Example~\ref{e:exponentiald} to the case where we have an interpretation \Jmc
	that for an arbitrary $ n > 1 $, and for every $ i\in \{1,\cdots,n\} $ and $ k\in \mathbb{N}^+ $,
	there is an $ x\in \Delta^\Jmc $ that satisfies $ x\in (\exists r^{k\cdot p_i -1}.A)^\Jmc $
	where $ p_i $ is the $ i $-th prime number.
	In this case, the minimal role depth of concepts in any base for \Jmc must be $ d \geq  \prod_{i=1}^{n} p_i -1 \geq 2^n$.
\end{proof}

\begin{figure}[hbtp]
	\centering
	\fbox{
		\begin{tikzpicture}
		\tikzset{d/.style={circle,fill=black,inner sep=0pt,minimum size=3pt}}
		\tikzset{l/.style={fill=white,opacity=0,text opacity=1,align=left}}
		\tikzset{my loop/.style =  {to path={
					\pgfextra{\let\tikztotarget=\tikztostart}
					[looseness=12,min distance=10mm]
					\tikz@to@curve@path},font=\sffamily\small
		}}

		\def\x{0}
		\def\y{0}
		\node[] (1) at (\x+1,\y+0.8) {$ x_1 $};
		\node[d] (2) at (\x+1,\y) {};
		\node[l] () at (\x+0.9,\y-0.3) {$A$};
		\node[l] () at (\x+.6,\y+1.0) {$B$};
		\draw[<-] (1) to[in=165,out=210,looseness=1.2] 
		node[l,left] {$ r $} (2);
		\draw[<-] (2) to[in=320,out=25,looseness=1.2] 
		node[l,right] {$ r $} (1);
		%
		
		\def\x{1.7}
		\def\y{0}
		\node[] (1) at (\x+1,\y+0.8) {$ x_2 $};
		\node[d] (2) at (\x+1.5,\y) {};
		\node[d] (3) at (\x+0.5,\y) {};
		\node[l] () at (\x+.25,\y-0.2) {$A$};
		\node[l] () at (\x+.7,\y+1.1) {$B$};
		\draw[->] (1) to[in=40,out=0,looseness=1.2] 
		node[l,above,shift={(-0.1,-0.3)}] {$ r $} (2);
		\draw[->] (2) to[in=290,out=240,looseness=1.2] 
		node[l,above,shift={(+0.1,-0.35)}] {$ r $} (3);
		\draw[->] (3) to[in=180,out=140,looseness=1.2] 
		node[l,above,shift={(+0.1,-0.3)}] {$ r $} (1);
		%
		
		\def\x{4}
		\def\y{0}
		\node[] (1) at (\x+1,\y+0.8) {$ x_3 $};
		\node[d] (2) at (\x+2,\y+0.3) {};
		\node[d] (5) at (\x,\y+0.3) {};
		\node[d] (4) at (\x+.5,\y-0.3) {};
		\node[d] (3) at (\x+1.5,\y-0.3) {};
		\node[l] () at (\x-0.25,\y+0.2) {$A$};
		\node[l] () at (\x+.7,\y+1.1) {$B$};
		\draw[->] (1) to[in=80,out=350] 
		node[l,above,shift={(-0.1,-0.3)}] {$ r $} (2);
		\draw[->] (2) to[in=20,out=290] 
		node[l,above,shift={(-0.2,-0.1)}] {$ r $} (3);
		\draw[->] (3) to[in=315,out=225,looseness=1.2] 
		node[l,above,shift={(+0.1,-0.35)}] {$ r $} (4);
		\draw[->] (4) to[in=250,out=160,looseness=1.2] 
		node[l,above,shift={(+0.2,-0.1)}] {$ r $} (5);
		\draw[->] (5) to[in=190,out=110] 
		node[l,above,shift={(+0.2,-0.3)}] {$ r $} (1);
		
		\def\x{.2}
		\def\y{-1.1}
		\draw[dotted,fill=black,fill opacity=0.1] (\x,\y) rectangle ++(6,2.5);
		
		\def\x{5.8}
		\def\y{0}
		\node[] (1) at (\x+1,\y+0.8) {$ x_4$};
		\node[l] () at (\x+.7,\y+1.1) {$B$};
		\path (1) edge [loop below] node {$ r $} (1);
		
		\def\x{6.8}
		\def\y{0}
		\node[] (1) at (\x+1,\y+0.8) {$ x_5$};
		\node[l] () at (\x+.8,\y+1.1) {$A,B$};
		\path (1) edge [loop below] node {$ r $} (1);
		\end{tikzpicture}
	}
	\caption{Description graph of an interpretation \I.
%
		Let $ X = \{x_1,x_2,x_3\} $. 
		For all $ d  <  29$ we have 
		\(x_4 \in \C\left(\prod_{x \in X}\dgraph(\I)^x_d\right)^\I 
		= (B \sqcap \exists r^d.\top)^\I\).
		However, 
		for all $ k \geq 29 $,
		\(x_4 \not\in \C\left(\prod_{x \in X}\dgraph(\I)^x_k\right)^\I \) since 
		$x_4\not\in (\exists r^{29}.A)^\I$. 
%
	}
	\label{f:exponentiald}
\end{figure}

In addition to the role depth of the concept expressions in the base, the size
of the base itself can also be exponential in the size of the data given as input, which is 
a well-known 
result 
in classical FCA~\cite{DBLP:journals/jucs/Kuznetsov04}.
The DL setting is more challenging than classical FCA, and so, this  lower bound also holds 
in the problem we consider. 
In Section~\ref{sec:adaptable}, we 
present our definition of an \ourEL base for a finite interpretation \I 
and
highlight cases in which the role depth is polynomial in the size of \I. 


\section{Adaptable Role Depth}\label{sec:adaptable}

We present in this section our main result which is our strategy to construct \ourEL bases with adaptable role depth. 
To define an \ourEL base,
we use the notion of a model-based most specific concept (MMSC)
up to a certain role depth.
The MMSC plays a key r\^ole 
in the 
computation of a base from a given finite interpretation. 

\begin{definition}
\label{def:model-based}
		An \ourEL concept expression $C$ is a model-based most specific
	concept of $X\subseteq \Delta^\I$ with role depth $d\geq 0$
	iff
	(1) $X \subseteq C^\I$, 
	(2) $C$ has role depth at most $d$, and
	(3) for all \ourEL concept expressions $D$ with role depth at most
	\(d\), if  $X \subseteq D^\I$
	then $\emptyset \models C \sqsubseteq D$.
\end{definition}

For a given $X \subseteq C^\I$ and a role depth $d$ there may be multiple MMSCs
(always at least one~\cite{Borchmann2016}) but they are logically equivalent. 
So
we  write `\emph{the}' MMSC of $X$ with role depth $d$ (in symbols 
\(\mmsc{X}{\I}[d]\)), meaning a representative of such class of concepts. 
As a consequence of
Definition~\ref{def:model-based}, if \(X = \emptyset\) then \(\mmsc{X}{\I}[d]
\equiv \bot\) for any interpretation \(\I\) and \(d \in \mathbb{N}\).

\begin{example}\label{e:mmsc}
	Consider the interpretation $ \I $ in Figure~\ref{f:descriptiongraph} and 
	let $ X = \{ x_1,x_2\}$.  We have 
	that 
	$ \mmsc{X}{\I}[1] \equiv \sf{City \sqcap \exists government. Party \sqcap \exists partof.Region} $.
	With an increasing $k$, the concept expression $\mmsc{X}{\I}[k]$ 
	can become more and more specific.
	Indeed,  
	$ \mmsc{X}{\I}[2] 
	\equiv 
	\mmsc{X}{\I}[1]
	\sqcap \exists \sf{partof.(Region \sqcap \exists capital.\top)}$
	which is more specific than $\mmsc{X}{\I}[1]$.
	However, for any $ k\geq 2 $, we have that
	$\mmsc{X}{\I}[2] \equiv \mmsc{X}{\I}[k]$.
\end{example}

A straightforward (and inefficient) way of computing  
$ \mmsc{\{X\}}{\I}[d] $, for a fixed $ d $, 
 would be conjoining every \ourEL concept expression 
$ C $ (over \(\NC \cup \NR\)) such that $ X \subseteq C^\I $ and the depth of $C$ is bounded by $d$. A more elegant 
method for computing MMSCs is based on the product of description graphs and
unravelling cyclic concept expressions up to a sufficient role depth.

The MMSC can be written as the concept expression obtained from the
product of description graphs of an interpretation~\cite{Borchmann2016}.
Formally, if \(\I = (\Delta^\I, \cdot^\I)\) is a finite
interpretation, \(X = \{x_1, \dots, x_n\} \subseteq \Delta^\I\) and a $ d\geq
0 $, then $ \mmsc{\{X\}}{\I}[d] \equiv \C(\prod_{i=1}^n
\dgraph(\I)^{x_i}_d)$.

%



The interesting challenge is how to
identify the smallest $d$ that 
satisfies the property: if $x \in \mmsc{X}{\I}[d]^\I$, then 
$x \in \mmsc{X}{\I}[k]^\I$ for every $k > d$.
%
In the following, we develop a method for computing MMSCs with a role depth
that is suitable
for building an $\ourEL$ base of the given 
interpretation. This method is based on the already mentioned 
MVF notion, defined as follows.
  

%


\begin{definition}
    \label{def:mvf}
Given a description graph \(\dgraph= (V,E)\) with \(u \in V\), 
we define the \emph{maximum vertices from (or MVF) $u$ in \(\dgraph\)}, 
denoted \(\mvf(\dgraph, u)\), as: 
%
%
    \[ \max \{ \lend(\dgpath{w}) \mid \dgpath{w}\text{
is a walk in } \dgraph \text{ starting at } u\}\]
where  \(\lend(\dgpath{w})\) is the number of
    distinct vertices occurring in \(\dgpath{w}\).
    Additionally, we define the function \(\mmvf\) as follows:
    \[\mmvf(\dgraph) := \max_{u \in V} \mvf(\dgraph, u).\]


\end{definition}


In other words,  MVF measures the maximum number of distinct vertices that
a walk with a fixed starting point can visit in the graph. 

\begin{example}
\label{ex:mvf}
Consider the interpretation \(\I\) in Figure~\ref{f:descriptiongraph}. Any walk
in the description graph of \(\I\) starting at \(x_1\) will visit at most
three distinct vertices (including \(x_1\)).
Although there are four vertices reachable from \(x_1\),
we have that \(\mvf(\dgraph(\I), x_1)=3\).
For the vertex \(x_2\), there are walks of any finite length, but 
we visit at most three distinct vertices, namely, \(x_2, x_4,x_7\),
and \(\mvf(\dgraph(\I), x_2)=3\). 
\end{example}

For computing the MMSC up to a sufficient role depth based on MVF we use the 
following notion of simulation.

\begin{definition}
\label{def:sim1}
    Let \(\dgraph_1 = (V_1, E_1, L_1)\), \(\dgraph_2 = (V_2, E_2,
    L_2)\) be  \(\ourEL\) description graphs and \((v_1, v_2) \in V_1 \times V_2\).
    A relation \(Z \subseteq V_1 \times V_2\) is a \emph{simulation} from \((\dgraph_1,v_1)\)
    to \((\dgraph_2,v_2)\), if 
    (1) \((v_1, v_2) \in Z\),
    (2) \((w_1, w_2) \in Z\) implies \(L_1(w_1) \subseteq L_2(w_2)\), and
    (3) \((w_1, w_2) \in Z\) and \((w_1, r, w_1') \in E_1\) imply
              there is \(w_2' \in V_2\) such that \((w_2, r, w_2')
            \in E_2\) and \((w_1', w_2') \in Z\).
\end{definition}

Simulations can be used to decide whether an individual from an
interpretation domain belongs to the extension of a given concept expression.

\begin{lemma}[\cite{Borchmann2016}] 
\label{lem:memberSimu1}
    Let \(\I\) be an interpretation, 
    let \(C\) be an
    \(\ourEL\) concept expression, 
    and let
    \(\dgraph(C) = (V_C, E_C, L_C)\) be the \(\ourEL\) description
    graph of \(C\) with root $\rho_C$. For every \(x \in \Delta^\I\),
    there is a simulation from \((\dgraph(C), \rho_C)\) to \((\dgraph(\I), x)\)
iff          \(x \in C^\I\).
\end{lemma}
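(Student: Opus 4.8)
The plan is to prove both directions of the equivalence by structural induction on the concept expression $C$, exploiting the inductive definition of the description tree $\dgraph(C)$ from the preliminaries together with the two bridge facts that translate between $\dgraph(\I)$ and the semantics: $A \in L_\I(x)$ iff $x \in A^\I$, and $(x,r,y) \in E_\I$ iff $(x,y) \in r^\I$. The root of $\dgraph(C)$ is $\rho_C$, and in each inductive case the subtrees of $\dgraph(C)$ are (isomorphic copies of) the trees $\dgraph(D)$, $\dgraph(D_1)$, $\dgraph(D_2)$ of the immediate subexpressions, which is what makes the induction go through.

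For the direction $(\Leftarrow)$, assuming $x \in C^\I$ I build a simulation bottom-up along the structure of $C$. In the base cases $C = \top$ and $C = A$ the tree is the single node $\rho_C$, and $Z = \{(\rho_C, x)\}$ is a simulation: condition (3) holds vacuously since $\rho_C$ has no outgoing edges, and condition (2) holds because $L_C(\rho_C) = \emptyset$ for $\top$ and $L_C(\rho_C) = \{A\} \subseteq L_\I(x)$ for $A$, the latter encoding exactly $x \in A^\I$. For $C = \exists r.D$, membership supplies a witness $y$ with $(x,y) \in r^\I$ and $y \in D^\I$; the induction hypothesis yields a simulation $Z_D$ from $(\dgraph(D), \rho_D)$ to $(\dgraph(\I), y)$, and I set $Z = \{(\rho_C, x)\} \cup Z_D$, checking that the one new edge $(\rho_C, r, \rho_D)$ is matched by $(x,r,y) \in E_\I$ together with $(\rho_D, y) \in Z_D$. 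For $C = D_1 \sqcap D_2$, the hypothesis $x \in D_1^\I \cap D_2^\I$ gives simulations $Z_1, Z_2$ for the two conjuncts; identifying their roots with the merged root $\rho_C$, I glue them into $Z = \{(\rho_C, x)\} \cup Z_1 \cup Z_2$, where condition (2) at $\rho_C$ uses $L_C(\rho_C) = L_{D_1}(\rho_{D_1}) \cup L_{D_2}(\rho_{D_2})$ and every outgoing edge of $\rho_C$ descends into one subtree and is matched by the corresponding $Z_i$.

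For the direction $(\Rightarrow)$, I start from a simulation $Z$ from $(\dgraph(C), \rho_C)$ to $(\dgraph(\I), x)$ and again induct on $C$. The base cases read off $x \in \top^\I$ trivially and $x \in A^\I$ directly from condition (2) applied at the root. For $C = \exists r.D$, condition (3) applied to the edge $(\rho_C, r, \rho_D)$ produces $y$ with $(x,y) \in r^\I$ and $(\rho_D, y) \in Z$; restricting $Z$ to the vertex set of the subtree $\dgraph(D)$ gives a simulation to $(\dgraph(\I), y)$, so the hypothesis yields $y \in D^\I$ and hence $x \in (\exists r.D)^\I$. For $C = D_1 \sqcap D_2$, I restrict $Z$ to each subtree (keeping $(\rho_C, x)$, read as $(\rho_{D_i}, x)$), obtain simulations for $D_1$ and $D_2$, conclude $x \in D_i^\I$ from the hypothesis, and combine to get $x \in (D_1 \sqcap D_2)^\I$.

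The step that needs the most care is the gluing and restriction of simulations at the $\sqcap$ node, where two subtrees share the merged root $\rho_C$. On the $(\Leftarrow)$ side I must verify that the union of the two sub-simulations is still a simulation; its only genuinely new obligation is condition (2) at $\rho_C$, which is satisfied precisely because the label of a conjunction root is defined as the union of the two subtree labels. On the $(\Rightarrow)$ side I must check that restricting a simulation to one subtree remains a simulation, which rests on the observation that every edge of $\dgraph(D_i)$ also lies in $E_C$ and that the witness target $w_2'$ supplied by $Z$ is paired with a vertex $w_1'$ that again belongs to that subtree, so condition (3) is preserved. Finally, the constant $\bot$ is not assigned a tree by the description-tree construction of the preliminaries; since $\bot^\I = \emptyset$, both sides of the equivalence are false in that case, so the lemma is understood over the $\bot$-free fragment and $\bot$ may be set aside.
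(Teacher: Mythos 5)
Your proof is correct, but there is nothing in the paper to compare it against: the paper imports this lemma verbatim from Borchmann, Distel, and Kriegel~\cite{Borchmann2016} and never proves it (the appendix only proves the paper's own new results). What you have written is the standard, self-contained argument --- structural induction on \(C\) in both directions, which is essentially the classical simulation/homomorphism characterization of \(\EL\) concept membership --- and you identify and discharge the genuinely delicate steps: on the \((\Leftarrow)\) side, that the union of the two sub-simulations at a merged \(\sqcap\)-root satisfies condition (2) precisely because \(L_C(\rho_C) = L_{D_1}(\rho_{D_1}) \cup L_{D_2}(\rho_{D_2})\); on the \((\Rightarrow)\) side, that restricting a simulation to a subtree stays a simulation because subtrees are closed under outgoing edges, so the witness pair \((w_1', w_2')\) never escapes the restriction. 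The \(\exists\) cases correctly use the bridge \((x,y)\in r^\I\) iff \((x,r,y)\in E_\I\). Your closing remark on \(\bot\) is also the right reading of the statement: the paper's description-tree construction defines \(\dgraph(C)\) only for \(\top\), concept names, \(\sqcap\), and \(\exists\), so for any \(C\) containing \(\bot\) (even as a proper subconcept, e.g.\ \(\exists r.\bot\)) the left-hand side of the equivalence is not even well-formed, and the lemma must be understood over the \(\bot\)-free fragment --- one could be slightly more careful than ``both sides are false,'' since the left side is undefined rather than false, but setting \(\bot\) aside, as you do, is exactly how the cited source and this paper implicitly treat it. In short: a correct blind reconstruction of a cited result, not a divergence from any in-paper argument.
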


Lemma~\ref{lem:memberSimu1} together with other previous results is used below 
to prove Lemma~\ref{lem:unravelExt}, which is crucial for defining the adaptable
role depth. 
It shows the upper bound on the required role depth of the MMSC. 

\begin{restatable}{lemma}{unravelExt}
\label{lem:unravelExt}
    Let \(\I = (\Delta^\I, \cdot^\I)\) be a finite interpretation 
    and take an
    arbitrary 
    \(X = \{x_1, \dots, x_n\} \subseteq \Delta^\I\), \(x'\in\Delta^\I\), and \(k \in \mathbb{N}\). Let 
	\[
        d = \mvf\left(\prod_{i = 1}^n\dgraph(\I), (x_1, \dots, x_n)\right) \cdot \mvf(\dgraph(\I), x').
    \]
    If \(x' \in \C\left(\prod_{i = 1}^n\dgraph(\I)^{x_i}_d\right)^\I\)
    then \(x' \in \C\left(\prod_{i = 1}^n\dgraph(\I)^{x_i}_k\right)^\I\).
\end{restatable}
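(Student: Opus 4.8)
The plan is to translate the geometric statement about unravellings into a statement about \emph{round-bounded simulations} and then settle it with a single pigeonhole argument. First I would set $\dgraph_\otimes := \prod_{i=1}^n \dgraph(\I)$ and $\bar x := (x_1,\dots,x_n)$, and observe that the product of the unravellings is (isomorphic to) the unravelling of the product, so that $\C\!\left(\prod_{i=1}^n \dgraph(\I)^{x_i}_m\right)$ is exactly the concept read off the depth-$m$ unravelling of $(\dgraph_\otimes,\bar x)$. By Lemma~\ref{lem:memberSimu1}, membership $x'\in\C(\prod_i\dgraph(\I)^{x_i}_m)^\I$ is then equivalent to the existence of a simulation from this depth-$m$ tree to $(\dgraph(\I),x')$. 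Because the tree is a finite unravelling, such a simulation is captured by the round-indexed relations $S_m\subseteq V_{\dgraph_\otimes}\times\Delta^\I$ defined by $(p,q)\in S_0$ iff $L(p)\subseteq L(q)$, and $(p,q)\in S_{m+1}$ iff $(p,q)\in S_0$ and for every edge $(p,r,p')$ of $\dgraph_\otimes$ there is an edge $(q,r,q')$ of $\dgraph(\I)$ with $(p',q')\in S_m$. A straightforward induction gives $x'\in\C(\prod_i\dgraph(\I)^{x_i}_m)^\I \iff (\bar x,x')\in S_m$, and the relations are antitone, $S_0\supseteq S_1\supseteq\cdots$.

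Next I would introduce the rank $\mathrm{rk}(p,q):=\min\{m : (p,q)\notin S_m\}$ (with $\min\emptyset=\infty$). The goal becomes: if $(\bar x,x')\in S_d$, i.e. $\mathrm{rk}(\bar x,x')>d$, then $(\bar x,x')\in S_m$ for all $m$. Unfolding the definitions of $S_{m}$ and $S_{m-1}$ at a pair of finite rank $s\ge 1$ shows there is an edge $(p,r,p')$ of $\dgraph_\otimes$ and an edge $(q,r,q')$ of $\dgraph(\I)$ with $\mathrm{rk}(p',q')=s-1$ exactly. Iterating from $(\bar x,x')$ I obtain a \emph{synchronized walk} $(p_0,q_0),\dots,(p_s,q_s)$ with $p_0=\bar x$, $q_0=x'$, each $(p_j,r_j,p_{j+1})\in E_{\dgraph_\otimes}$, each $(q_j,r_j,q_{j+1})\in E_\I$, and $\mathrm{rk}(p_j,q_j)=s-j$, where $s=\mathrm{rk}(\bar x,x')$ is assumed finite.

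The crux is the counting step. The sequence $p_0\cdots p_s$ is a walk in $\dgraph_\otimes$ starting at $\bar x$, so it visits at most $\mvf(\dgraph_\otimes,\bar x)$ distinct vertices, and $q_0\cdots q_s$ is a walk in $\dgraph(\I)$ starting at $x'$, visiting at most $\mvf(\dgraph(\I),x')$ distinct vertices; hence the pairs $(p_j,q_j)$ take at most $\mvf(\dgraph_\otimes,\bar x)\cdot\mvf(\dgraph(\I),x')=d$ distinct values. If $s>d$, then among the $s+1$ positions two must carry the same pair, say $(p_a,q_a)=(p_b,q_b)$ with $a<b$; but then $\mathrm{rk}$ would assign this single pair both values $s-a$ and $s-b$, which is impossible. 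Therefore a finite rank must satisfy $\mathrm{rk}(\bar x,x')\le d$, i.e. $\mathrm{rk}(\bar x,x')>d$ forces $\mathrm{rk}(\bar x,x')=\infty$. Since the hypothesis $x'\in\C(\prod_i\dgraph(\I)^{x_i}_d)^\I$ gives $(\bar x,x')\in S_d$ and thus $\mathrm{rk}(\bar x,x')>d$, we conclude $(\bar x,x')\in S_k$ and hence $x'\in\C(\prod_i\dgraph(\I)^{x_i}_k)^\I$ for the given $k$ (indeed for every $k$).

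I expect the main obstacle to be the bookkeeping of the first paragraph: rigorously identifying the product of unravellings with the unravelling of the product, and justifying that a simulation out of a finite tree is captured exactly by the recursive relations $S_m$ (equivalently, by the $m$-round simulation game, using that the branches of the tree share only their prefixes). Once this dictionary is in place, the rank/pigeonhole argument is short: the entire weight of the bound $d=\mvf(\dgraph_\otimes,\bar x)\cdot\mvf(\dgraph(\I),x')$ is that a rank-decreasing synchronized play longer than $d$ is forced to repeat a state, contradicting that the rank is a function of the state.
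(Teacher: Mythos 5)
Your proof is correct, but it takes a genuinely different route from the paper's. The paper proves Claim~\ref{cla:dsim}: given a simulation $Z_d$ from the depth-$d$ unravelling, it \emph{explicitly constructs} a simulation $Z_k$ for every $k$, by first making $Z_d$ functional (Lemma~\ref{lem:funcSim}), applying the pigeonhole to the pairs $(\delta(\dgpath{w}), z(\dgpath{w}))$ along root paths of length $d$, and then performing surgery on the unravelling tree (the sets $\overline{V}$, $\overline{V}^+$ and the shortening functions $s$ and $f$), followed by a verification of the three simulation conditions. You instead work on the finite product graph with the bounded-round approximants $S_m$ of the simulation preorder and their rank function, and derive a contradiction from a rank-decreasing synchronized walk of length greater than $d$. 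The combinatorial core is identical --- at most $\mvf(\dgraph_\otimes,\bar x)\cdot\mvf(\dgraph(\I),x')=d$ distinct pairs can occur along any synchronized walk --- but your fixpoint/rank formulation replaces the paper's explicit construction of $Z_k$ by a short impossibility argument, at the cost of the ``dictionary'' you flag: identifying the product of unravellings with the unravelling of the product, and simulations out of finite trees with round-bounded approximants (both are routine, the gluing of child simulations working because distinct subtrees of an unravelling are disjoint). Note that the paper incurs the first identification silently as well, when it instantiates Claim~\ref{cla:dsim} with $\dgraph=\prod_{i=1}^n\dgraph(\I)$, so your approach is arguably more honest about the bookkeeping while being shorter on the main argument; it also directly yields the slightly stronger fact that every finite rank is bounded by $d$, i.e., $d$ is a uniform threshold. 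One small repair: your unfolding step ``rank $s$ yields a successor pair of rank exactly $s-1$'' is only guaranteed for $s\ge 2$, since for $s=1$ the pair lies in $S_0$, which imposes no edge condition, and $q$ may lack the required $r$-successor altogether, so the synchronized walk may terminate at rank $1$ rather than rank $0$. This is harmless: the walk still visits $s$ pairwise-distinct pairs (the rank is a function of the pair and strictly decreases along the walk), so $s\le d$ follows, and the conclusion --- any finite rank is at most $d$, hence $\mathrm{rk}(\bar x,x')>d$ forces $(\bar x,x')\in S_k$ for all $k$ --- stands unchanged.
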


\begin{proof}(Sketch)
    We show in the long version~\cite{Guimaraes2020} the following claim.
    
    \begin{restatable}{claim}{dsim}
        \label{cla:dsim}
    For all description graphs 
    \(\dgraph = (V, E, L)\) and \(\dgraph' = (V', E', L')\), all vertices 
    \(v \in V\) and \(v' \in V'\), and 
    \[d = \mvf(\dgraph, v) \cdot \mvf(\dgraph', v')\]
    if there is a simulation \(Z_d : (\dgraph_d^{v}, v) \mapsto (\dgraph'
    , v')\),
    then there is a simulation \(Z_k : (\dgraph_k^{v}, v) \mapsto (\dgraph' v')\) for all \(k \in \mathbb{N}\). 
    \end{restatable}

    If \(k \leq d\), one can  restrict \(Z_d\) to the vertices of
    \(\dgraph_k^{v}\), which would be a subgraph of \(\dgraph_d^{v}\).
    Otherwise, the intuition behind this claim is that the pairs in \(Z_d\) define a walk
    in \(\dgraph'\) for each walk in \(\dgraph\) that has length at most \(d - 1\).
    And if a walk in \(\dgraph\) has
    length at least \(d - 1\), then there is a vertex \(w\) that this walk
    visits twice while the image of this walk in \(\dgraph'\) also repeats
    a vertex at the same time.
    This paired repetition can be used to find a matching vertex in \(V'\) for each
    vertex of \(\dgraph_k^{v}\) by recursively shortening the walk that
    this vertex corresponds to if it has length \(d\) or larger.

    Lemma~\ref{lem:memberSimu1} and \(x' \in \C\left(\prod_{i = 1}^n\dgraph(\I)^{x_i}_d\right)^\I\) 
    imply
    that there is a simulation $Z_d$ from \((\prod_{i
    = 1}^n\dgraph(\I)^{x_i}_d, (x_1, \dots, x_n))\) to
    \((\dgraph(\I), x')\). Then, by Claim~\ref{cla:dsim} there is
    a simulation \(Z_k : (\prod_{i = 1}^n\dgraph(\I)^{x_i}_k, (x_1, \dots, x_n)) \mapsto (\dgraph(\I),x')\)
    (we just need to take \(\dgraph = \prod_{i = 1}^n\dgraph(\I)\),
    \(\dgraph' = \dgraph(\I)\), \(v = (x_1, \dots, x_n)\) and \(v' = x'\)).
    Therefore, Lemma~\ref{lem:memberSimu1} implies that
    \(x' \in \C\left(\prod_{i = 1}^n\dgraph(\I)^{x_i}_k\right)^\I\).
\end{proof}
Lemma~\ref{lem:unravelExt} shows that even for vertices that are parts of
cycles, there is a certain depth of unravellings, which we call a fixpoint,
that is guaranteed to be an upper bound.

Proposition~\ref{prop:mvfprod} gives an intuition about how large the MVF of
a vertex in a product graph can be when compared to the MVF of the corresponding
vertices in the product's factors.

\begin{proposition}
\label{prop:mvfprod}
    Let \(\{\dgraph_i \mid 1 \leq i \leq n\}\) be \(n\) description
    graphs such that \(\dgraph_i = (V_i, E_i, L_i)\). Also let \(v_i \in
    V_i\). Then:
    \[ \mvf\left(\prod_{i = 1}^n \dgraph_i, (v_1, \dots, v_n)\right) \leq \prod_{i = 1}^n \mvf(\dgraph_i, v_i).\]
\end{proposition}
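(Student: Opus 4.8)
The plan is to show that any walk in the product graph projects, coordinate by coordinate, onto a walk in each factor, and that the number of distinct product vertices it visits is bounded by the product of the numbers of distinct vertices visited in the individual factors. Since \(\mvf\) is defined as a maximum of \(\lend\) over all walks, and \(\lend\) is bounded above by the finite number of vertices, all the relevant maxima are attained and the inequality will follow by taking suprema.

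First I would fix an arbitrary walk \(\dgpath{W} = V_0 r_0 V_1 \cdots r_{m-1} V_m\) in \(\prod_{i=1}^n \dgraph_i\) starting at \(V_0 = (v_1, \dots, v_n)\), writing each product vertex as \(V_j = (v_j^1, \dots, v_j^n)\). By the definition of the product edge relation, each edge \((V_j, r_j, V_{j+1})\) satisfies \((v_j^i, r_j, v_{j+1}^i) \in E_i\) for every coordinate \(i\). Hence the \(i\)-th projection \(\dgpath{w}_i := v_0^i r_0 v_1^i \cdots r_{m-1} v_m^i\) is a genuine walk in \(\dgraph_i\) starting at \(v_0^i = v_i\).

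Next comes the heart of the argument: a product vertex is \emph{identical} to the tuple of its coordinates, so two positions \(j, j'\) yield the same product vertex precisely when they agree in every coordinate. Consequently, the set \(\{V_j \mid 0 \le j \le m\}\) of distinct product vertices visited by \(\dgpath{W}\) is contained in the Cartesian product \(\bigtimes_{i=1}^n \{\,v_j^i \mid 0 \le j \le m\,\}\) of the sets of vertices visited by the projected walks. Taking cardinalities gives \(\lend(\dgpath{W}) \le \prod_{i=1}^n \lend(\dgpath{w}_i)\), and since each \(\dgpath{w}_i\) starts at \(v_i\) we have \(\lend(\dgpath{w}_i) \le \mvf(\dgraph_i, v_i)\), so \(\lend(\dgpath{W}) \le \prod_{i=1}^n \mvf(\dgraph_i, v_i)\).

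Finally, as \(\dgpath{W}\) was an arbitrary walk in the product graph starting at \((v_1, \dots, v_n)\), I would take the maximum over all such walks to conclude \(\mvf(\prod_{i=1}^n \dgraph_i, (v_1, \dots, v_n)) \le \prod_{i=1}^n \mvf(\dgraph_i, v_i)\). I do not expect a genuine obstacle here, since every step is a direct unfolding of the definitions of product graph, walk, and \(\lend\). The two points worth stating explicitly are that projecting a product walk respects the edge relation (immediate from the product's edge condition) and that all maxima are well defined because \(\lend\) is bounded above by the finite number of vertices in each factor, so the suprema are attained.
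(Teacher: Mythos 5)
Your proof is correct and follows essentially the same route as the paper's: both bound \(\lend\) of an arbitrary product walk by observing that each coordinate of a visited vertex lies among the at most \(\mvf(\dgraph_i, v_i)\) vertices reachable along a walk from \(v_i\), so the tuple has at most \(\prod_{i=1}^n \mvf(\dgraph_i, v_i)\) possible values, and then take the maximum over walks. Your explicit coordinate-wise projection of the walk is just a slightly more detailed phrasing of the same counting argument (it is, in fact, the paper's Proposition~\ref{prop:prod} made inline), so there is nothing to add.
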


\begin{proof}
    Let \(\dgpath{w}\) be an arbitrary walk in
    \(\prod_{i = 1}^n \dgraph_i, (v_i)_{1 \leq i \leq
    n}\) that starts in \((v_1, \dots, v_n)\) and let \((w_1, \dots, w_n)\) be
    a vertex in this walk. It follows from the definition of product that each \(w_i\) belongs to
    a walk in \(\dgraph_i\) that begins in \(v_i\). Therefore, there are
    only \(\mvf(\dgraph_i, v_i)\) options
    for each \(w_i\). Hence, there are at most \(\prod_{i = 1}^n
    \mvf(\dgraph_i, v_i)\) possible options for \((w_1, \dots, w_n)\). In other
    words, \(\lend(\dgpath{w}) \leq \prod_{i = 1}^n \mvf(\dgraph_i, v_i)\).
    Since \(\dgpath{w}\) is arbitrary, we can conclude that
    \[\mvf\left(\prod_{i = 1}^n \dgraph_i, (v_1, \dots, v_n)\right) \leq
    \prod_{i = 1}^n \mvf(\dgraph_i, v_i).\] 
\end{proof}

Although the MVF of a product can be exponential in \(|\Delta^\I|\), there are many cases in which it
is 
linear in \(|\Delta^\I|\). Example~\ref{ex:boundedwalks} illustrates one such case.

\begin{example}
\label{ex:boundedwalks}
Consider the interpretation of Figure~\ref{f:descriptiongraph}.
The elements \(x_1, x_3, x_4, x_5\) and \(x_6\) never reach cycles, therefore,
each of them can only have walks up to a finite length.
Take \(X = \{x_1, x_2\}\). Since every walk in \(\dgraph(\I)\) starting from
\(x_1\) has length at most \(2\), the longest walk possible in \(\prod_{i \in
\{1, 2\}}\dgraph(\I)\) starting at node \((x_1, x_2)\) is:
\((x_1, x_2), {\sf{partof}} , (x_5, x_7), {\sf{capital}}, (x_6, x_2)\). Thus
\[\mvf\left(\prod_{i \in \{1, 2\}} \dgraph(\I), (x_1, x_2)\right) = 2.\]
Take \(X = \{x_1, x_7\}\). 
Since
\(x_1\) and \(x_7\) do not share labels in their outgoing edges
\[\mvf\left(\prod_{i \in \{1, 7\}} \dgraph(\I), (x_1, x_7)\right) = 1.\]
\end{example}

The observations about the MVF in Example~\ref{ex:boundedwalks} are
generalised in Lemma~\ref{lem:boundedwalks} 
which shows a sufficient condition
for polynomial (linear) role depth.

\begin{restatable}{lemma}{boundedwalks}
\label{lem:boundedwalks}
    Let \(\I = (\Delta^\I, \cdot^\I)\) be a finite interpretation
    and \(X = \{x_1, \dots, x_n\} \subseteq \Delta^\I\). If for some \(1
    \leq i \leq n\) it holds that every 
    walk in \(\dgraph(\I)\) starting at \(x_i\) has length at most \(m\) for
    some \(m \in \mathbb{N}\), then 
    \(\mvf\left(\prod_{i = 1}^n \dgraph(\I), (x_1, \dots, x_n)\right) \leq \mvf\left(\dgraph(\I), x_i\right)\).
\end{restatable}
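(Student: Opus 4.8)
The plan is to exploit the hypothesis that every walk in \(\dgraph(\I)\) starting at \(x_i\) has bounded length, which I claim forces every such walk to be a \emph{path} in the sense already defined (no repeated vertices). First I would establish this claim by a pumping argument: if some walk \(\dgpath{w}\) from \(x_i\) visited a vertex \(w\) twice, say at positions \(j < j'\), then the subword of \(\dgpath{w}\) between these two occurrences is a closed walk of length at least \(1\) which could be spliced in arbitrarily many times, producing walks from \(x_i\) of unbounded length and contradicting the hypothesis. Hence no vertex repeats in any walk starting at \(x_i\), so for each such walk \(\lend(\dgpath{w}) = |\dgpath{w}| + 1\).

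Next I would relate walks in the product to walks in the \(i\)-th factor by projection. Given an arbitrary walk \(\dgpath{w}\) in \(\prod_{j = 1}^n \dgraph(\I)\) starting at \((x_1, \dots, x_n)\), its vertices are tuples \((v_0^1, \dots, v_0^n), \dots, (v_L^1, \dots, v_L^n)\), and by the definition of the product all coordinates advance synchronously along the same role labels \(r_0, \dots, r_{L-1}\). Consequently the sequence of \(i\)-th components \(v_0^i, \dots, v_L^i\) is a genuine walk in \(\dgraph(\I)\) starting at \(x_i\). By the first step this projected walk is a path, so \(v_0^i, \dots, v_L^i\) are pairwise distinct.

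The crux of the argument is that this already makes the entire product walk a path: if two positions \(p \neq q\) carried the same product vertex, they would in particular agree in coordinate \(i\), contradicting the distinctness just established. Thus the map sending each vertex of \(\dgpath{w}\) to its \(i\)-th coordinate is injective on the vertices of \(\dgpath{w}\), so \(\lend(\dgpath{w})\) equals the number of distinct vertices of its \(i\)-th projection. Since that projection is a walk from \(x_i\), its vertex count is at most \(\mvf(\dgraph(\I), x_i)\) by the definition of \(\mvf\). As \(\dgpath{w}\) was arbitrary, taking the maximum over all product walks yields \(\mvf\!\left(\prod_{j = 1}^n \dgraph(\I), (x_1, \dots, x_n)\right) \leq \mvf(\dgraph(\I), x_i)\), as required.

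I expect the main obstacle to be making the projection-and-injectivity step fully rigorous, rather than any heavy computation: one must argue carefully that a repetition of a product vertex forces, coordinate-wise, a repetition of the \(i\)-th component, and that the role-label synchronization built into the product really does guarantee that the \(i\)-th projection is itself a valid walk in \(\dgraph(\I)\). Once the bounded-length hypothesis has been converted into simplicity of the relevant walks, both the cycle-pumping argument and the final maximization are routine.
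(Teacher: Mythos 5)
Your proposal is correct and follows essentially the same route as the paper: the pumping argument showing that bounded walk length from \(x_i\) forces all such walks to be paths is implicit in the paper's bound \(|\dgpath{w}| \leq \mvf(\dgraph(\I), x_i) - 1\), and your coordinate projection is exactly the paper's auxiliary Proposition relating product walks to factor walks of the same length. Your additional injectivity step (showing the product walk is itself a path) is slightly more than needed --- bounding \(\lend(\dgpath{w}) \leq |\dgpath{w}| + 1 \leq \mvf(\dgraph(\I), x_i)\) already suffices --- but it is correct and harmless.
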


\begin{proof}(Sketch)
    As it happens in Example~\ref{ex:boundedwalks}, it can be proven that
    whenever there is a vertex \(x_i\) for which every walk starting at it 
    has length at most \(m\), then $m$ also bounds 
		the lengths of the walks starting at \((x_1, \dots x_n)\) in  $\prod_{i
    = 1}^n \dgraph(\I)$.
\end{proof}

Combining the bounds for the fixpoint and MVF given by 
Lemmas~\ref{lem:unravelExt} and~\ref{lem:boundedwalks}, we can define a 
function that returns  an upper approximation of the fixpoint, for any subset 
of the domain of an interpretation, as follows.

\begin{definition}
\label{def:newd}
    Let \(\I = (\Delta^\I, \cdot^\I)\) be a finite interpretation 
    and \(X = \{x_1, \dots, x_n\} \subseteq \Delta^\I\).  Also let 

    \begin{align*}
        &X_{lim} = \{x \in X \mid \exists m \in \mathbb{N} : \text{every walk} \\
        &\qquad \text{starting at } x \text{ in } \dgraph(\I) \text{ has length} \leq m\}.
    \end{align*}

    The function \(\newdsym_\I : \powerset{\Delta^\I} \mapsto
    \mathbb{N}\) is defined as follows:

    \[
        \newd{X}{\I} = 
        \begin{cases} 
            d - 1 & \text{if } X_{lim} \neq \emptyset\\

            d \cdot \mmvf(\dgraph(\I)) & \text{otherwise},
        \end{cases}
    \]
    where \(d = \mvf\left(\prod_{i=1}^n \dgraph(\I), (x_1, \dots,
    x_n)\right)\).
\end{definition}


Next, we prove that function \(\newdsym_\I\) is indeed an upper bound for the
fixpoint of an MMSC. The idea sustaining Lemma~\ref{lem:mmscK} is that if \(x \in X \subseteq \Delta^\I\)
and every walk in \(\dgraph(\I)\) starting at \(x\) has length at most \(m\), then \(m\)
can be used as a fixpoint depth for the MMSC of \(X\) in \(\I\). 
Lemma~\ref{lem:unravelExt} covers the cases where vertices are the starting
point of walks of any length.

\begin{restatable}{lemma}{mmscK}
\label{lem:mmscK}
    Let \(\I = (\Delta^\I, \cdot^\I)\) be a finite interpretation 
    and \(X \subseteq \Delta^\I\). Then, for any \(k \in \mathbb{N}\), it holds
    that:
    \[\mmsc{X}{\I}[\newd{X}{\I}]^\I \subseteq \mmsc{X}{\I}[k]^\I.\]
\end{restatable}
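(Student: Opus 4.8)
The plan is to reduce to the regime $k > \newd{X}{\I}$, dispose of $X=\emptyset$ separately, and then treat the two clauses of Definition~\ref{def:newd} independently; throughout I abbreviate $d_0 = \mvf(\prod_{i=1}^n \dgraph(\I),(x_1,\dots,x_n))$ for nonempty $X = \{x_1,\dots,x_n\}$. The reduction rests on \emph{monotonicity} of the MMSC extension in the role depth: if $d_1 \le d_2$ then $\prod_{i=1}^n \dgraph(\I)^{x_i}_{d_1}$ is a subgraph of $\prod_{i=1}^n \dgraph(\I)^{x_i}_{d_2}$ sharing the root, so restricting a simulation witnessing membership at depth $d_2$ (via Lemma~\ref{lem:memberSimu1}) to the smaller graph yields a simulation at depth $d_1$, whence $\mmsc{X}{\I}[d_2]^\I \subseteq \mmsc{X}{\I}[d_1]^\I$. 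Taking $d_2 = \newd{X}{\I}$ and $d_1 = k$ immediately settles every $k \le \newd{X}{\I}$, so the work lies in $k > \newd{X}{\I}$.

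For the clause $X_{lim} = \emptyset$, where $\newd{X}{\I} = d_0 \cdot \mmvf(\dgraph(\I))$, I would fix an arbitrary $x' \in \mmsc{X}{\I}[\newd{X}{\I}]^\I$ and set $d_L = d_0 \cdot \mvf(\dgraph(\I),x')$. Since $\mvf(\dgraph(\I),x') \le \mmvf(\dgraph(\I))$ by definition of $\mmvf$, we get $d_L \le \newd{X}{\I}$, so monotonicity gives $x' \in \mmsc{X}{\I}[d_L]^\I = \C(\prod_{i=1}^n \dgraph(\I)^{x_i}_{d_L})^\I$. But $d_L$ is precisely the depth appearing in Lemma~\ref{lem:unravelExt}, and applying that lemma yields $x' \in \C(\prod_{i=1}^n \dgraph(\I)^{x_i}_k)^\I = \mmsc{X}{\I}[k]^\I$ for all $k$. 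This clause is therefore little more than a repackaging of Lemma~\ref{lem:unravelExt} together with the monotonicity reduction.

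The delicate clause is $X_{lim} \neq \emptyset$, where $\newd{X}{\I} = d_0 - 1$. Here I would choose some $x_i \in X_{lim}$, so that all walks of $\dgraph(\I)$ starting at $x_i$ have bounded length; by the argument behind Lemma~\ref{lem:boundedwalks} this also bounds the lengths of walks of $\prod_{i=1}^n \dgraph(\I)$ starting at $(x_1,\dots,x_n)$. The crucial structural observation is that the part of $\prod_{i=1}^n \dgraph(\I)$ reachable from $(x_1,\dots,x_n)$ is \emph{acyclic}: a reachable cycle would project onto coordinate $i$ to a cycle reachable from $x_i$, producing walks from $x_i$ of unbounded length. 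Hence every walk from $(x_1,\dots,x_n)$ is a simple path visiting at most $d_0$ distinct vertices, so it has length at most $d_0 - 1 = \newd{X}{\I}$. Since each root-reachable vertex of $\prod_{i=1}^n \dgraph(\I)^{x_i}_k$ corresponds to exactly such a walk, no such vertex has depth exceeding $\newd{X}{\I}$; thus the rooted trees $\prod_{i=1}^n \dgraph(\I)^{x_i}_k$ and $\prod_{i=1}^n \dgraph(\I)^{x_i}_{\newd{X}{\I}}$, and the concepts $\C(\cdot)$ they induce, coincide for every $k \ge \newd{X}{\I}$. Therefore $\mmsc{X}{\I}[\newd{X}{\I}]^\I = \mmsc{X}{\I}[k]^\I$ for all such $k$.

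I expect the acyclicity step of the last clause to be the main obstacle: the whole point of using the tighter bound $d_0 - 1$ rather than $d_0$ is that, once one coordinate forbids cycles, walks in the product become simple paths and their length falls exactly one below the number of distinct vertices they visit. Getting this off-by-one right, and checking that unravelling to any depth $k \ge \newd{X}{\I}$ genuinely reproduces the same rooted tree so that no deeper structure is lost, is where care is needed; by contrast the case $X=\emptyset$ is immediate since both extensions are empty, and the clause $X_{lim}=\emptyset$ is routine.
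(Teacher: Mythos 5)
Your proof is correct and follows essentially the same route as the paper's: the trivial reduction for \(k \leq \newd{X}{\I}\) via restricting simulations to subtrees, the case \(X_{lim} \neq \emptyset\) by bounding walk lengths in the product so that the unravellings stabilise at depth \(\mvf\left(\prod_{i=1}^n \dgraph(\I), (x_1, \dots, x_n)\right) - 1\) (the paper cites Lemma~\ref{lem:boundedwalks} where you argue acyclicity of the reachable part directly, supplying the repeated-vertex/off-by-one step the paper leaves implicit), and the case \(X_{lim} = \emptyset\) by combining \(\mmvf(\dgraph(\I)) \geq \mvf(\dgraph(\I), x')\) with Lemma~\ref{lem:unravelExt}.
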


\begin{proof}(Sketch)
    Let \(X = \{x_1, \dots, x_n\} \subseteq \Delta^\I\).  If \(k \leq
    \newd{X}{\I}\), the lemma holds trivially.  For \(k > \newd{X}{\I}\) we
    divide the proof in two cases. First, if there is a \(x_i \in X\) such that
    every walk in \(\dgraph(\I)\) starting at \(x_i\) has length at most \(m\)
    for some \(m \in \mathbb{N}\), then as stated in
    Lemma~\ref{lem:boundedwalks}, every walk in \(\prod_{i=1}^n
    \dgraph(\I)\) starting at \((x_1, \dots, x_n)\) has length at most
    \(\mvf\left(\prod_{i=1}^n \dgraph(\I), (x_1, \dots, x_n)\right)
    - 1\).

    In other words, even when \(k > \newd{X}{\I}\), we have: \[\prod_{i=1}^n
    \dgraph(\I)_k^x = \prod_{i=1}^n \dgraph(\I)_\newd{X}{\I}^x\], and
    therefore, we can apply Lemma~\ref{lem:memberSimu1} to conclude that:
    \(\mmsc{X}{\I}[\newd{X}{\I}]^\I \subseteq \mmsc{X}{\I}[k]^\I\).  
    Otherwise, if \(X_{lim} \neq \emptyset\), the lemma is a direct consequence
    of Definition~\ref{def:newd} and Lemma~\ref{lem:unravelExt}.
\end{proof}

In the remaining of this paper, we write \(\mmsc{X}{\I}\)
as a shorthand for \(\mmsc{X}{\I}[\newd{X}{\I}]\). 
An important consequence of Lemma~\ref{lem:mmscK} and the definition of MMSC is
that, for any \(\ourEL\) concept expression \(C\) and finite interpretation
\(\I\), it holds that \(C^\I = \mmsc{C^\I}{\I}^\I\).

\begin{lemma}
\label{lem:r6p}
    Let \(\I = (\Delta^\I, \cdot^\I)\) be a finite interpretation.
    Then, for all \(\ourEL\) concept expression
    \(C\) it holds that: \(\mmsc{C^\I}{\I}^\I = C^\I\).
\end{lemma}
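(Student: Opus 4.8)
The plan is to prove the two inclusions $C^\I \subseteq \mmsc{C^\I}{\I}^\I$ and $\mmsc{C^\I}{\I}^\I \subseteq C^\I$ separately, writing $X = C^\I$ and $M = \mmsc{X}{\I} = \mmsc{X}{\I}[\newd{X}{\I}]$ throughout. The first inclusion is immediate: condition~(1) of Definition~\ref{def:model-based} applied to $M$ gives $X \subseteq M^\I$, that is, $C^\I \subseteq M^\I$.

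For the reverse inclusion, the idea is to compare $M$, which carries the (possibly small) fixpoint role depth $\newd{X}{\I}$, with an MMSC computed at a role depth large enough to ``see'' all of $C$. Let $k \in \mathbb{N}$ be any number at least as large as the role depth of $C$, and set $M_k = \mmsc{X}{\I}[k]$. Since $C$ is an $\ourEL$ concept expression of role depth at most $k$ and trivially $X = C^\I \subseteq C^\I$, condition~(3) of Definition~\ref{def:model-based} instantiated with $D := C$ yields $\emptyset \models M_k \sqsubseteq C$. Specialising this entailment to the interpretation $\I$ gives $M_k^\I \subseteq C^\I$.

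It remains to connect $M$ with $M_k$, and this is exactly where the adaptable-depth machinery enters: Lemma~\ref{lem:mmscK} guarantees $M^\I = \mmsc{X}{\I}[\newd{X}{\I}]^\I \subseteq \mmsc{X}{\I}[k]^\I = M_k^\I$, the point being that $\newd{X}{\I}$ is a fixpoint beyond which, and up to which, the extension of the MMSC only shrinks. Chaining the two containments gives $M^\I \subseteq M_k^\I \subseteq C^\I$, and together with the first inclusion we conclude $M^\I = C^\I$.

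I expect the genuine difficulty to be already absorbed into Lemma~\ref{lem:mmscK}; the argument above is essentially bookkeeping about role depths. The one place to be careful is the direction of the inclusion supplied by Lemma~\ref{lem:mmscK}, which must be read as \emph{the extension at the fixpoint depth is contained in the extension at every depth $k$}, so that choosing $k$ at least the role depth of $C$ is harmless. The degenerate case $C^\I = \emptyset$ needs no separate treatment: condition~(3) then forces $M_k^\I \subseteq \emptyset$ and the chain collapses correctly, consistently with $\mmsc{\emptyset}{\I} \equiv \bot$.
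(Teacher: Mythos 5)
Your proof is correct and takes essentially the same route as the paper, which obtains the lemma as a direct consequence of Lemma~\ref{lem:mmscK} together with the fixed-depth identity $\mmsc{C^\I}{\I}[k]^\I = C^\I$ for $k$ at least the role depth of $C$ (cited there as Lemma 4.4\,(vi) of Borchmann, Distel, and Kriegel). The only difference is that you re-derive the needed half of that cited identity directly from conditions~(1) and~(3) of Definition~\ref{def:model-based}, which is a harmless inlining of the external result.
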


\begin{proof}
    Direct consequence of Lemma 4.4 (vi) 
    of \cite{Borchmann2016} and
    Lemma~\ref{lem:mmscK}.
\end{proof}

We use this result below to define a
finite set of concept expressions \(M_\I\)  
for building a base of the CIs valid in \I.
\begin{definition}
\label{def:setofconceptexpressions:MI}
    Let \(\I = (\Delta^\I, \cdot^\I)\) be a finite interpretation. 
    The set \(M_\I\) is the union of $\{\bot\} \cup \NC$ and 
    \[
     \{\exists r.\mmsc{X}{\I} \mid r \in \NR
    \text{ and } X \subseteq \Delta^\I, X \neq \emptyset\}\]
    We also define
    \(\Lambda_\I = \left\{ \bigsqcap U \mid U \subseteq M_\I \right\}\).
\end{definition}
Building the base mostly relies on the fact that,
given a finite interpretation \(\I\), for any \(\ourEL\) concept expression
\(C\), there is a concept expression \(D \in \Lambda_\I\) such that \(C^\I
= D^\I\).
\begin{restatable}{theorem}{thmbase}
\label{thm:b2}
    Let \(\I\) be a finite interpretation 
    and let $ \Lambda_\I $ be 
    defined as above. Then, 
%
    \begin{align*}
        \base(\I) =&
        \{C \equiv \mmsc{C^\I}{\I} \mid C \in \Lambda_\I\} \cup {}\\
        & \{C \sqsubseteq D \mid C, D \in \Lambda_\I
        \text{ and } \I \models C \sqsubseteq D \}
    \end{align*}
    is a finite \(\ourEL\) base for \(\I\).
\end{restatable}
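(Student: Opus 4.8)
The plan is to verify the biconditional of Definition~\ref{d:base} directly, after disposing of finiteness and soundness. \emph{Finiteness} is immediate: as $\Delta^\I$, $\NC$ and $\NR$ are finite, there are finitely many nonempty $X \subseteq \Delta^\I$, so $M_\I$ and hence $\Lambda_\I$ are finite; moreover each $\mmsc{X}{\I}$ abbreviates $\mmsc{X}{\I}[\newd{X}{\I}]$ and therefore has finite role depth, making every element of $\base(\I)$ a genuine finite \ourEL CI. For the \emph{soundness} direction ($\base(\I) \models C \sqsubseteq D$ implies $\I \models C \sqsubseteq D$) I would show that $\I$ itself models $\base(\I)$: the equivalences $C \equiv \mmsc{C^\I}{\I}$ hold in $\I$ by Lemma~\ref{lem:r6p} (which gives $\mmsc{C^\I}{\I}^\I = C^\I$), and the CIs of the second group are placed in $\base(\I)$ only when $\I \models C \sqsubseteq D$. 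Since $\I \models \base(\I)$, every consequence of $\base(\I)$ holds in $\I$.

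The substance of the proof is \emph{completeness}, and the main obstacle is the following strengthening of the approximation fact quoted before the theorem: for every \ourEL concept expression $C$ over $\Sigma_\I$ there is a concept $\hat C \in \Lambda_\I$ with $\hat C^\I = C^\I$ and $\base(\I) \models C \equiv \hat C$. I would establish this claim by induction on the structure of $C$.

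The base cases $C \in \{\top, \bot\} \cup \NC$ are immediate, since $\top = \bigsqcap \emptyset \in \Lambda_\I$ and $\bot, A \in M_\I \subseteq \Lambda_\I$, so one may take $\hat C = C$. For $C = C_1 \sqcap C_2$ the induction hypothesis yields $\hat C_1, \hat C_2 \in \Lambda_\I$, and $\hat C := \hat C_1 \sqcap \hat C_2$ is again a conjunction of members of $M_\I$, hence in $\Lambda_\I$; it satisfies $\hat C^\I = C^\I$ and $\base(\I) \models C \equiv \hat C$. The delicate case is $C = \exists r.C_0$: by induction there is $\hat C_0 \in \Lambda_\I$ with $\hat C_0^\I = C_0^\I$ and $\base(\I) \models C_0 \equiv \hat C_0$, but $\exists r.\hat C_0$ need not lie in $\Lambda_\I$, because $M_\I$ contains existentials only of the form $\exists r.\mmsc{X}{\I}$. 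The key idea is to re-canonicalize via the first group of $\base(\I)$: since $\hat C_0 \in \Lambda_\I$, the equivalence $\hat C_0 \equiv \mmsc{\hat C_0^\I}{\I}$ belongs to $\base(\I)$, and $\hat C_0^\I = C_0^\I$ forces $\mmsc{\hat C_0^\I}{\I} = \mmsc{C_0^\I}{\I}$. Thus $\base(\I) \models C \equiv \exists r.\mmsc{C_0^\I}{\I}$, where the right-hand concept lies in $M_\I \subseteq \Lambda_\I$ whenever $C_0^\I \neq \emptyset$ (if $C_0^\I = \emptyset$, one instead takes $\hat C = \bot$, using the validity $\exists r.\bot \equiv \bot$). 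A short extension-level check with Lemma~\ref{lem:r6p} confirms $(\exists r.\mmsc{C_0^\I}{\I})^\I = C^\I$, so $\hat C := \exists r.\mmsc{C_0^\I}{\I}$ works.

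Granting the claim, completeness follows routinely. Given arbitrary \ourEL concepts with $\I \models C \sqsubseteq D$, choose $\hat C, \hat D \in \Lambda_\I$ as in the claim; then $\hat C^\I = C^\I \subseteq D^\I = \hat D^\I$, so $\I \models \hat C \sqsubseteq \hat D$, and since $\hat C, \hat D \in \Lambda_\I$ this CI is exactly a member of the second group of $\base(\I)$. Chaining $\base(\I) \models C \equiv \hat C$, $\base(\I) \models \hat C \sqsubseteq \hat D$ and $\base(\I) \models \hat D \equiv D$ yields $\base(\I) \models C \sqsubseteq D$, which together with soundness establishes Definition~\ref{d:base}. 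I expect the induction for the existential case---specifically justifying the replacement of $\hat C_0$ by the canonical $\mmsc{C_0^\I}{\I}$ using the equivalences shipped in $\base(\I)$---to be the crux of the whole argument.
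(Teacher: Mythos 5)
Your proof is correct, and its skeleton matches the paper's: finiteness and soundness are dispatched the same way (soundness via Lemma~\ref{lem:r6p} and the construction of the second group), and completeness is proved by structural induction culminating in a chain through a CI of the second group. The difference lies in the inductive invariant. The paper proves, for every \ourEL concept \(C\) and every model \(\J \models \base(\I)\), that \(\J \models C \equiv \mmsc{C^\I}{\I}\); to carry this through the conjunction and existential cases it leans on three auxiliary results: the lower-approximation lemma (Lemma~\ref{lem:appSub}, adapted from Distel's Lemma 5.8), the idempotence of the MMSC operator (Lemma~\ref{lem:r5}), and the monotonicity properties of Lemma~\ref{lem:monotonicity}. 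You instead prove the existence of a representative \(\hat C \in \Lambda_\I\) with \(\hat C^\I = C^\I\) and \(\base(\I) \models C \equiv \hat C\), which inlines the lower-approximation construction (your \(\hat C\) for \(\exists r.C_0\) is exactly the conjunct shape \(\exists r.\mmsc{C_0^\I}{\I}\) appearing in \(\app(C,\I)\)) and sidesteps Lemmas~\ref{lem:r5} and~\ref{lem:monotonicity} almost entirely: the re-canonicalization step \(\base(\I) \models \hat C_0 \equiv \mmsc{\hat C_0^\I}{\I} = \mmsc{C_0^\I}{\I}\) works because the MMSC representative depends only on the extension \(\hat C_0^\I = C_0^\I\), so the first group of \(\base(\I)\) does the work that the paper delegates to MMSC idempotence. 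What the paper's formulation buys is the stronger standalone fact \(\base(\I) \models C \equiv \mmsc{C^\I}{\I}\) for \emph{every} concept \(C\), which is reusable; what yours buys is a leaner argument with fewer imported lemmas, plus an explicit treatment of the edge case \(C_0^\I = \emptyset\) (where \(\exists r.\mmsc{C_0^\I}{\I}\) is \emph{not} of the form required for membership in \(M_\I\), since \(M_\I\) only admits existentials over nonempty \(X\)) --- a case the paper's existential step glosses over when it asserts \(\exists r.\mmsc{E^\I}{\I} \in M_\I\), and which your fallback \(\hat C = \bot\) via \(\exists r.\bot \equiv \bot\) patches cleanly.
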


\begin{proof}(Sketch) 
    As \(\Lambda_\I\) is finite, so is \(\base(\I)\). The CIs
    are clearly sound and the soundness of the equivalences is due to 
    Lemma~\ref{lem:r6p}.
    For completeness, assume that 
    \(\I \models C \sqsubseteq D\).
%
    Using an adaptation of Lemma 5.8 from \cite{Distel2011} and 
		Lemma~\ref{lem:r6p} above,
     we
     can prove, by induction on the structure of the concept expressions \(C\)
     and \(D\),
    that there are concept expressions \(E, F \in \Lambda_\I\) 
    such that $ \base(\I) \models E \equiv \mmsc{C^\I}{\I}$,
    $ \base(\I) \models F \equiv \mmsc{D^\I}{\I}$,
    $ \base(\I) \models C \equiv \mmsc{C^\I}{\I}$,
    and $ \base(\I) \models D \equiv \mmsc{D^\I}{\I}$.
    By construction, as \(E \sqsubseteq
    F \in \base(\I)\),
    we
    can prove that whenever \(\I \models C \sqsubseteq D\), so does
    \(\base(\I)\).
\end{proof}



Recall the interpretation \I in Figure~\ref{f:exponentiald}. 
In order to compute a base for \I, we should compute an MMSC with role depth 
at least $29$.
An important benefit of our approach is that the role depth of the other MMSCs, which 
are part of the mined CIs in the base may be smaller.
For instance, the role depth of  $\mmsc{\{x_1\}}{\I}$   is  
$10$. 
In the next section, we show that one can compute the MVF of a vertex
in a graph in linear time in the size of the graph. 

\section{Computing the MVF}
\label{sec:mvf}

As discussed in Section~\ref{sec:adaptable}, the MVF is the key to provide
an upper bound for the 
fixpoint for each MMSC.
An easy way to estimate the MVF would consist in computing
the number of vertices reachable from \(v\) in the description graph
\(\dgraph\). Let \(\reach(\dgraph,v)\) be such a function.
By definition it holds that \(\mvf(\dgraph, v)
\leq \reach(\dgraph, v)\).
Although \(\reach(\dgraph,v)\) can be computed in polynomial time,
  the difference between these two metrics can
be quite large. For instance, consider that \(v\) is the root of a description
graph \(\dgraph\) that is a binary tree with \(2^n\) nodes. Then \(\mvf(\dgraph,
v) = n\), while \(\reach(\dgraph, v) = 2^n\).

In this section, we present an algorithm to compute \(\mvf(\dgraph, v)\) that takes
linear time in the size of  \(\dgraph\), but first 
we need to
recall some fundamental concepts from Graph Theory, one of them is the notion
of strongly connected components (Definition~\ref{def:scc}).

%

\begin{definition}
\label{def:scc}
    Let \(\dgraph = (V, E, L)\) be a description graph. The \emph{strongly connected
    components} (SCCs) of \(\dgraph\), in symbols \(\SCC(\dgraph)\), are the partitions
    \(V_1, \dots, V_n\) of \(V\) such that for all \(1 \leq i \leq n\): if \(u, v \in V_i\) then
    there is a path from \(u\) to \(v\) and a path from \(v\) to \(u\) in
    \(\dgraph\).
    Additionally, we define a function \(\scc(\dgraph, v)\), which returns the
    SCC of \(\dgraph\) that contains \(v\).
\end{definition}


A compact way of representing a description graph $ \dgraph $ consists in
regarding each SCC in $ \dgraph $ as a single vertex. 
This compact graph is a directed acyclic graph (DAG), also called 
condensation of $ \dgraph $~\cite{Harary1965}, and it is formalised in Definition~\ref{def:condense}.

\begin{definition}
\label{def:condense}
    Let \(\dgraph = (V, E, L)\) be a description graph. The
    \emph{condensation of \(\dgraph\)} is the directed acyclic graph \(\dgraph^* = (V^*,
    E^*)\) where \(V^* = \{\scc(\dgraph, u) \mid u \in V\}\) and \(E^* = \{(\scc(\dgraph, u), \scc(\dgraph, v)) \mid 
    (u, r, v) \in E \text{ and } \scc(\dgraph, u) \neq \scc(\dgraph, v)\}\).
    Also, if \(\dgpath{w}^*\) is path in
    \(\dgraph^*\), the \emph{weight of \(\dgpath{w}^*\)}, in symbols
    \(\weight(\dgraph^*)\), is the sum of the sizes of the
    SCCs that appear as vertices of \(\dgpath{w}^*\).
\end{definition}

\begin{figure}[btp]
	\centering
	\fbox{
		\begin{tikzpicture}
		\tikzset{d/.style={circle,fill=black,inner sep=0pt,minimum size=3pt}}
		\tikzset{l/.style={fill=white,opacity=0,text opacity=1,align=left}}
		\tikzset{my loop/.style =  {to path={
					\pgfextra{\let\tikztotarget=\tikztostart}
					[looseness=12,min distance=10mm]
					\tikz@to@curve@path},font=\sffamily\small
		}}
		
		%
		\def\x{-2.8}
		\def\y{-.6}
		\node () [] at (\x,\y){
			\begin{tikzpicture}[scale=0.9, every node/.style={scale=0.9}]
			\def\x{-4.2}
			\def\y{-.9}
            \node[] (1) at (\x,\y) {$ \{x_1\} $};
            \node[] (2) at (\x+1.3,\y+.45) {$ \{x_3\} $};
            \node[] (3) at (\x+1.3,\y) {$ \{x_5\} $};
            \node[] (4) at (\x+2.6,\y) {$ \{x_6\} $};
			\draw[->] (1) to node[l,above,shift={(-0.6,-0.2)}] {} (2);
			\draw[->] (1) to node[l,above,shift={(0,-0.6)}] {} (3);
			\draw[->] (3) to node[l,above,shift={(-0,-0.55)}] {} (4);
			\end{tikzpicture}
		};
		
		\def\x{1.1}
		\def\y{-.6}
		\node () [] at (\x,\y){
			\begin{tikzpicture}[scale=0.9, every node/.style={scale=0.9}]
			\def\x{-4.2}
			\def\y{-.9}
            \node[] (1) at (\x,\y) {$ \{x_2, x_7\} $};
            \node[] (2) at (\x+1.4,\y) {$ \{x_4\} $};
            \draw[->] (1) to node[l,above,shift={(-0.6,-0.2)}] {} (2);
			\end{tikzpicture}
		};

	\end{tikzpicture}
}
\caption{\label{f:condense}Condensation of the description graph in
Figure~\ref{f:descriptiongraph}. Every vertex is an SCC of the original graph
and the edges indicate accessibility between the SCCs. Also, the condensation
has no labels.}
\end{figure}

We use these notions to 
link the
MVF (Definition~\ref{def:mvf}) to the paths in the condensation graph
in Lemma~\ref{lem:mvfCondense}.  

\begin{restatable}{lemma}{mvfCondense}
\label{lem:mvfCondense}
    Let \(\dgraph = (V, E, L)\) be a description graph, let \(\dgraph^* = (V^*, E^*)\) be the
    condensation of $ \dgraph $, and \(v \in V\).
    Then:
    \begin{align*}
        \mvf(\dgraph, v) = &\max \left\{\weight(\dgpath{w}^*) \mid \right.\\
        &\left. \qquad \dgpath{w}^* \text{ is a path in } \dgraph^* \text{ starting at } \scc(\dgraph, v)\right\}. 
    \end{align*}
\end{restatable}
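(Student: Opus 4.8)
The plan is to prove the equality by establishing the two inequalities separately, relying on two structural facts guaranteed by the definitions: the condensation \(\dgraph^*\) is a DAG, and every SCC is, by definition, strongly connected. Both directions amount to translating walks in \(\dgraph\) into paths in \(\dgraph^*\) and back.

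For the inequality \(\mvf(\dgraph, v) \leq \max\{\weight(\dgpath{w}^*) \mid \ldots\}\), I would start from an arbitrary walk \(\dgpath{w}\) in \(\dgraph\) beginning at \(v\) and track the SCCs its vertices belong to. Reading off \(\scc(\dgraph, \cdot)\) along \(\dgpath{w}\) and deleting consecutive repetitions yields a sequence \(S_0, S_1, \ldots, S_m\) with \(S_0 = \scc(\dgraph, v)\) and \((S_i, S_{i+1}) \in E^*\) for every \(i\), since each transition between distinct SCCs in \(\dgraph\) is precisely an edge of the condensation. Because \(\dgraph^*\) is acyclic, no SCC can reappear, so this sequence is in fact a path \(\dgpath{w}^*\) in \(\dgraph^*\) starting at \(\scc(\dgraph, v)\). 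Every distinct vertex visited by \(\dgpath{w}\) lies in one of \(S_0, \ldots, S_m\), hence \(\lend(\dgpath{w}) \leq \sum_{i=0}^m |S_i| = \weight(\dgpath{w}^*)\), which is at most the maximum over all such paths. As \(\dgpath{w}\) was arbitrary, the inequality follows.

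For the converse \(\mvf(\dgraph, v) \geq \max\{\weight(\dgpath{w}^*) \mid \ldots\}\), I would take an arbitrary path \(\dgpath{w}^* = S_0 \cdots S_m\) in \(\dgraph^*\) with \(S_0 = \scc(\dgraph, v)\) and construct a single walk in \(\dgraph\) starting at \(v\) that visits every vertex of \(\bigcup_{i=0}^m S_i\). Starting at \(v \in S_0\), within each component \(S_i\) I use strong connectivity to tour all of its vertices and finish at the source endpoint \(u_i \in S_i\) of the edge witnessing \((S_i, S_{i+1}) \in E^*\); I then cross that edge into \(S_{i+1}\) and repeat, terminating after touring \(S_m\). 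This walk visits exactly \(\sum_{i=0}^m |S_i| = \weight(\dgpath{w}^*)\) distinct vertices, so \(\mvf(\dgraph, v) \geq \weight(\dgpath{w}^*)\) for every such path, and therefore \(\mvf(\dgraph, v)\) is at least the maximum. Combining the two inequalities yields the claimed equality.

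The main obstacle is the converse direction, specifically justifying that within a strongly connected component one can traverse a walk visiting all of its vertices and \emph{ending at a prescribed exit vertex} \(u_i\). This follows by concatenating, inside \(S_i\), the walks between successive target vertices—each existing by strong connectivity—while arranging the enumeration so that the last target is \(u_i\); the delicate point is ensuring the tour closes exactly at \(u_i\) so that the transition edge into \(S_{i+1}\) can be taken seamlessly. The rest of the argument is bookkeeping on the correspondence between inter-SCC edges of \(\dgraph\) and edges of \(\dgraph^*\).
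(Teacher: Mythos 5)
Your proof is correct and follows essentially the same route as the paper's: the lower bound via a walk induced by a condensation path that tours each SCC and exits at the source of the witnessing edge, and the upper bound via projecting an arbitrary walk onto its (deduplicated) SCC sequence, which is a path in the DAG \(\dgraph^*\) whose weight dominates \(\lend(\dgpath{w})\). The only difference is presentational: you prove the two inequalities directly, whereas the paper phrases the upper bound as a contradiction with a maximum-weight path, and you spell out the walk-to-path projection that the paper leaves implicit.
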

\begin{proof}(Sketch)
	First we prove that every path \(\dgpath{w}^* = V_1, \dots, V_m\) in
	\(\dgraph^*\) starting
	at \(\scc(\dgraph, v)\) induces a walk \(\dgpath{w}\) 
	in \(\dgraph\) starting at \(v\) with
	\(\lend(\dgpath{w}) = \weight(\dgpath{w}^*)\).
    Then, we show that   
    if \(\dgpath{w}^*\) has maximal weight, then
	no walk in
	\(\dgraph\) starting at \(v\) can visit more than
	\(\weight(\dgpath{w}^*)\) vertices.
	%
\end{proof}

By Lemma~\ref{lem:mvfCondense},  we only need to compute the
maximum weight of a path in \(\dgraph^*\) that starts at \(\scc(\dgraph^* ,v)\)
to obtain the MVF of
a vertex \(v\) in a description graph \(\dgraph\).
Algorithm~\ref{alg:mvf} relies on this result and proceeds as follows:
first, it computes the SCCs of the description graph and the
condensation graph. Then, the algorithm transverses the condensation graph,
using an adaptation of depth-first search to determine the maximum path weight
for the initial SCC. 

	

	
	
	

\begin{algorithm}
    \DontPrintSemicolon
    \SetAlgoLined
    \LinesNumbered
    \caption{Computing MVF via Lemma~\ref{lem:mvfCondense}}
    \label{alg:mvf}

    \SetKwProg{Fn}{Function}{:}{end}
    \SetKw{null}{null}

    \KwIn{A description graph \(\dgraph = (V, E, L)\) and a vertex \(v \in V\)}
    \KwOut{The MVF of \(v\) in \(\dgraph\), i.e., \(\mvf(G, v)\)}

    \(V^* \gets \SCC(\dgraph)\)\;
    \(E^* \gets \text{condense}(\dgraph, V^*)\)\;
    \(\dgraph^* \gets (V^*, E^*)\)\;
    \For{\(V' \in V^*\)}{
        \({wgt}[V'] \gets \null\)
    }

    \Return{\({\sf maxWeight}(\dgraph^*, \scc(\dgraph, v), {wgt})\)}\;

    \tcp{Auxiliary function}
    \Fn{\({\sf maxWeight}(\dgraph^*, V', {wgt})\)}{%
        \({current}\gets 0\)\;
        \For{\(W' \in \{U' \in V^* \mid (V', U') \in E^*\}\)}{\label{line:successors}
            \eIf{\( {wgt} [W'] = \null\)}{
                \( {current}\gets \max(current, {\sf maxWeight}(\dgraph^*, W', {wgt}))\)
            }{
                \( {current}\gets wgt[W']\)
            }
            \({wgt}[V'] \gets {current} + |V'|\)\;
        }
        \Return{\({wgt}[V']\)}\label{line:recursiveReturn}
    }
\end{algorithm}

Algorithm~\ref{alg:mvf} assumes that the SCCs and condensation are computed
correctly. Besides keeping the computed values, the array \(wgt\) prevents
recursive calls on SCCs that have already been processed.
According to Lemma~\ref{lem:mvfCondense}, to prove that Algorithm~\ref{alg:mvf} is correct we
just need to prove that the function \({\sf maxWeight}\) in fact returns the maximum
weight of a path in the condensation given a starting vertex (which corresponds
to an SCC in the original graph).

%
%
%
%
\begin{restatable}{lemma}{lemMaxWeight}
 	\label{lem:maxWeight}
 	Given   \(\dgraph = (V, E, L)\)   and \(v \in V\) as input, 
 	Algorithm~\ref{alg:mvf} returns 
 	 the  maximum weight of a path in the condensation of \(\dgraph\)
 	starting at \(\scc(\dgraph, v)\).
 \end{restatable}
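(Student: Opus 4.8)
The plan is to first remove all reference to the original graph $\dgraph$ and reduce the statement to a purely combinatorial fact about the condensation $\dgraph^* = (V^*, E^*)$, and then to verify that the recursive routine ${\sf maxWeight}$ solves the resulting longest-weighted-path problem on a DAG. By Lemma~\ref{lem:mvfCondense} it suffices to show that Algorithm~\ref{alg:mvf} returns the maximum weight of a path in $\dgraph^*$ starting at $\scc(\dgraph, v)$, since that maximum is exactly $\mvf(\dgraph, v)$. Because $\dgraph^*$ is acyclic (Definition~\ref{def:condense}), every path starting at a fixed SCC is finite, so for each $V' \in V^*$ the quantity
\[
\mathrm{opt}(V') = \max\{\weight(\dgpath{w}^*) \mid \dgpath{w}^* \text{ is a path in } \dgraph^* \text{ starting at } V'\}
\]
is well defined and satisfies the recurrence $\mathrm{opt}(V') = |V'| + \max\{\mathrm{opt}(W') \mid (V', W') \in E^*\}$, where the maximum over an empty successor set is $0$. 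The whole proof then reduces to showing that a call ${\sf maxWeight}(\dgraph^*, V', wgt)$ returns $\mathrm{opt}(V')$.

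The core argument I would carry out is an induction. Since $\dgraph^*$ is a DAG, I would induct on the \emph{height} of $V'$, i.e. the length of the longest path in $\dgraph^*$ starting at $V'$; sinks have height $0$ and this measure strictly decreases along every edge. In the base case $V'$ is a sink, its successor set is empty, and the routine should return $|V'| = \mathrm{opt}(V')$. In the inductive step every successor $W'$ of $V'$ has strictly smaller height, so by the induction hypothesis each completed call ${\sf maxWeight}(\dgraph^*, W', wgt)$ returns $\mathrm{opt}(W')$; I would then show that after the loop the accumulator $current$ equals $\max_{(V',W') \in E^*} \mathrm{opt}(W')$, whence the value stored in $wgt[V']$ is $current + |V'| = \mathrm{opt}(V')$.

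The step I expect to be the main obstacle is justifying the memoization, namely that reusing a cached value is always sound. For this I would maintain the global invariant that, throughout the entire execution, $wgt[W'] \neq \mathrm{null}$ implies $wgt[W'] = \mathrm{opt}(W')$. This invariant is preserved because $wgt[W']$ becomes non-null only upon completion of the corresponding call, which by the inductive hypothesis returns $\mathrm{opt}(W')$; crucially, in a DAG the value $\mathrm{opt}(W')$ depends only on $W'$ and not on which recursive descent first reached it, so a value computed on one branch stays correct when looked up on another. Establishing that $current$ correctly accumulates the maximum over successors (so that cached and freshly computed successor values are combined by a maximum) is the delicate bookkeeping point and must be phrased as a loop invariant. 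Finally, the same invariant shows that each SCC triggers at most one recursive descent, which yields termination and a running time linear in $|V^*| + |E^*|$, and therefore linear in the size of $\dgraph$.
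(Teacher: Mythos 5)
Your proposal is correct and takes essentially the same route as the paper's own proof: both argue over the acyclic condensation via the recurrence $\mathrm{opt}(V') = |V'| + \max_{(V',W')\in E^*}\mathrm{opt}(W')$, with sinks as the base case, recursive calls forcing successors to be evaluated first, and memoization sound because $\mathrm{opt}$ depends only on the SCC and not on the path by which it was reached---your explicit induction on height and cache invariant merely make rigorous what the paper sketches. One observation: the delicate bookkeeping point you flag is real, since as printed the else-branch executes $current \gets wgt[W']$ without taking a maximum over the accumulator, and the assignment to $wgt[V']$ sits inside the loop so a sink would return $\mathrm{null}$ rather than $|V'|$; hence your loop invariant holds only for the evidently intended algorithm, which is exactly what the paper's own proof tacitly assumes as well.
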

\begin{proof}(Sketch)
	Let \(\dgraph^* = (V^*, E^*)\) be the condensation of \(\dgraph\). If 
	\(\scc(\dgraph, v)\)
	has no successor in $ \dgraph^* $, then
	the output of $ {\sf maxWeight } $ is correct.
	If \(\scc(\dgraph, v)\) has successors, then the maximum
	weight of a path staring at \(\scc(\dgraph, v)\) in \(\dgraph^*\) is given by \(|\scc(\dgraph, v)|\)
	plus the maximum value computed among its successors. This equation holds
	because \(\dgraph^*\) is a DAG.  
\end{proof}

Lemmas~\ref{lem:mvfCondense} and~\ref{lem:maxWeight} imply that
Algorithm~\ref{alg:mvf} computes the MVF of \(v\) in \(\dgraph\) correctly.
Moreover, the computation of SCCs can be done in time \(O(|V| + |E|)\)
\cite{Tarjan1972}, the condensation in time \(O(|E|)\) \cite{Martello1982} and
the depth-first transversal via \({\sf maxWeight}\) in time \(O(|V| + |E|)\).
Hence, it is possible to compute the MVF of a vertex in a graph in linear time
in the size of the description graph 
even if it consists solely of
cycles. Yet, given an interpretation \(\I = (\Delta^\I, \cdot^\I)\)
the graph given as input to Algorithm~\ref{alg:mvf} might be a product graph with
an exponential number of vertices in \(|\Delta^\I|\). Also, Algorithm~\ref{alg:mvf}
can be modified to compute the MVF for all vertices by starting
the function \({\sf maxWeight}\) from an unvisited SCC until all vertices are
visited in polynomial time in the size of the graph.

 



\section{Conclusion}


In this work, we introduce a way of computing \(\ourEL\) bases from finite
interpretations that adapts the role depth of concepts according to the 
the structure of interpretations. Our definition relies on 
a notion that relates vertices in a graph to sets of vertices, called MVF.
We have also shown that the MVF computation
can be performed in polynomial time in the size of the underlying graph structure. 
%
Our \ourEL base, however, is not minimal. As future work, we plan to build on previous
results combining FCA and DLs to define a base with minimal cardinality. 
%
%
We will also 
investigate the problem of mining CIs in the presence of 
noise in the dataset. We plan to use the support 
and confidence measures from association rule mining to deal with noisy data
 and implement our approach using knowledge graphs as datasets.  
 

\section*{Acknowledgements}

We would like to thank Prof. Sebastian Rudolf for contributing
with the examples in Figure~\ref{f:exfinitebases}. 
Ozaki is supported by the Research Council of Norway, project number 316022. 
Guimar\~{a}es is supported by the ERC project LOPRE (819416), led by Prof. Saket Saurabh.
Parts of this work have been done in the context of CEDAS (Center for Data Science, University of Bergen, Norway).
This work was supported by the Free University of Bozen-Bolzano, Italy,
under the project PACO.


\bibliography{references} 

\ifappendix
\clearpage
\appendix
\section{Proofs for Section~\ref{sec:mining}}



We prove that \(\EL_{rhs}\) and \(\EL_{lhs}\) do not have the finite base
property (Propositions~\ref{th:infinitebaserhs} and~~\ref{th:infinitebaselhs}). 

\begin{proposition}\label{th:infinitebaserhs}
	$ \EL^\bot_{rhs} $ does not have the finite base property.
\end{proposition}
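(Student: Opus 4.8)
The plan is to exhibit one finite interpretation for which no finite \(\ourEL_{rhs}\) base can exist, namely the interpretation \(\I\) of Figure~\ref{f:exfinitebases}~\((i)\): domain \(\{a,b\}\) with \(A^\I=\{a\}\) and \(r^\I=\{(a,b),(b,b)\}\). The self-loop at \(b\) makes \(a\) the start of arbitrarily long \(r\)-walks, so \(\I\models A\sqsubseteq\exists r^n.\top\) for every \(n\ge 1\). Each such CI is a legitimate \(\ourEL_{rhs}\) CI over \(\Sigma_\I=\{A,r\}\) (a concept name on the left, a complex concept on the right), so by Definition~\ref{d:base} any \(\ourEL_{rhs}\) base \(\Tmc\) for \(\I\) must entail all of them.

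I would then argue by contradiction. Suppose \(\Tmc\) is a finite \(\ourEL_{rhs}\) base for \(\I\); being finite, it has a maximal role depth \(d\) among all concepts occurring in it. Completeness forces \(\Tmc\models A\sqsubseteq\exists r^{d+1}.\top\), and the heart of the proof is to construct a single model of \(\Tmc\) violating this consequence. The counter-model \(\J\) I would use is the finite \(r\)-chain \(c_0\to c_1\to\cdots\to c_d\) with \(A^\J=\{c_0\}\) and no further edges. Since \(c_0\notin(\exists r^{d+1}.\top)^\J\), we have \(\J\not\models A\sqsubseteq\exists r^{d+1}.\top\), and it only remains to check \(\J\models\Tmc\).

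Every CI of \(\Tmc\) has the form \(A\sqsubseteq D\) with \(D\) of role depth at most \(d\), and since it holds in \(\I\) we have \(a\in D^\I\); I must show \(c_0\in D^\J\). For this transfer I would invoke the simulation characterisation (Lemma~\ref{lem:memberSimu1}): from \(a\in D^\I\) one gets a simulation from \((\dgraph(D),\rho_D)\) to \((\dgraph(\I),a)\), and tracing it shows that every node of the description tree \(\dgraph(D)\) at depth \(\ge 1\) can only be simulated by \(b\), hence carries an empty label, while the root's label is contained in \(\{A\}\). Using exactly this, I would define the depth-indexed relation \(Z=\{(v,c_{\mathrm{depth}(v)})\mid v\in V_D\}\) and verify directly that it is a simulation from \((\dgraph(D),\rho_D)\) to \((\dgraph(\J),c_0)\): the label condition holds because \(c_0\) carries \(A\) and every later \(c_j\) carries nothing, and the forward condition holds because \(D\) has depth at most \(d\), so no outgoing edge is ever required at \(c_d\). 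Lemma~\ref{lem:memberSimu1} then yields \(c_0\in D^\J\), hence \(\J\models A\sqsubseteq D\), completing \(\J\models\Tmc\) and the desired contradiction.

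The step I expect to be the main obstacle is precisely this last transfer: one must be certain that the finite chain of length \(d\) is indistinguishable from \(\I\) by every concept of role depth \(\le d\), and in particular that \(\ourEL_{rhs}\) cannot smuggle in a constraint forcing an outgoing edge at \emph{every} vertex (which would break the bare tail \(c_d\)). This is exactly where the restriction of left-hand sides to genuine concept names is essential: it guarantees that all CIs of \(\Tmc\) are of the form \(A\sqsubseteq D\), so only the single designated vertex \(c_0\) is obliged to satisfy them, leaving the terminal vertex \(c_d\) free. The symmetric case \(\ourEL_{lhs}\) (using Figure~\ref{f:exfinitebases}~\((ii)\) and the witnesses \(\exists s.\exists r^n.B\sqsubseteq A\)) would be handled by the dual construction.
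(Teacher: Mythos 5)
Your proof is correct and follows essentially the same route as the paper's: the same interpretation from Figure~\ref{f:exfinitebases}~$(i)$, the same witness family $A \sqsubseteq \exists r^n.\top$, and the same observation that every CI of an $\ourEL_{rhs}$ base over $\Sigma_\I$ must have the concept name $A$ on its left-hand side. You in fact supply a step the paper's proof only asserts (``it follows that \Bmc is infinite''): your finite-chain counter-model $c_0 \to c_1 \to \cdots \to c_d$, verified via the simulation criterion of Lemma~\ref{lem:memberSimu1}, explicitly shows that no finite set of CIs $A \sqsubseteq D$ of role depth at most $d$ that are sound in \I can entail $A \sqsubseteq \exists r^{d+1}.\top$ --- the point being that soundness forbids recursive CIs such as $A \sqsubseteq \exists r.A$, so bounded-depth axioms cannot generate unbounded-depth consequences.
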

\begin{proof}
	Consider the interpretation $ \I = (\{ x_1,x_2 \}, \cdot^\I) $
	where $ \{(x_1,x_2), (x_2,x_2)\} = r^\I $,
	$ \{x_1 \} = A^\I $ and every other concept and role name is mapped by 
$ \cdot^\I $ to $ \emptyset $ (Figure~\ref{f:exfinitebases}~($ i$)).
	In \I, $   A^\I=\{x_1\} $  and 
	for all $ n\in \mathbb{N}^+ $,
	$ x_1\in (\exists r^n.\top ) ^\I$ .
	Assume that \Bmc is a base for \I and  $ \EL_{rhs} $.
	As \Bmc is a (finite) TBox formulated in $ \EL_{rhs} $ with symbols from $\Sigma_\I$,
it can only have CIs of the form 	$A \sqsubseteq C$.
	Since $\Imc\models A \sqsubseteq \exists r^n.\top$, for all $n\in \mathbb{N}^+$, 
it follows that \Bmc is infinite, which is a contradiction.
\end{proof}

\begin{proposition}\label{th:infinitebaselhs}
	$ \EL^\bot_{lhs} $ does not have the finite base property.
\end{proposition}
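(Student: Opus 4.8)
The plan is to mirror the structure of the proof of Proposition~\ref{th:infinitebaserhs}, but exploiting the asymmetry of $\ourEL_{lhs}$, where complex concepts are allowed only on the left-hand side of CIs. First I would fix the interpretation $\I$ depicted in Figure~\ref{f:exfinitebases}~$(ii)$, that is, the two-component graph on domain $\{x_1, x_2, y_1, y_2\}$ (relabelling the vertices of the two gadgets as needed) where $A^\I = \{x_1\}$, $B^\I$ is the self-looping $r$-vertex, and the $s$-edge together with the $r$-self-loop generates, for every $n \geq 1$, an element satisfying $\exists s.\exists r^n.B$. The key structural fact to record is that in $\I$ we have $(\exists s.\exists r^n.B)^\I = A^\I = \{x_1\}$ for all $n \geq 1$, so $\I \models \exists s.\exists r^n.B \sqsubseteq A$ for every $n$, while these CIs are pairwise inequivalent as concept expressions of increasing role depth.

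Next I would argue by contradiction exactly as in the $\ourEL_{rhs}$ case. Suppose $\Bmc$ is a finite base for $\I$ and $\ourEL_{lhs}$. Since $\Bmc$ is formulated in $\ourEL_{lhs}$ over the signature $\Sigma_\I$, every CI in $\Bmc$ has the form $C \sqsubseteq A'$ with $A' \in \NC$ a concept name (or $\bot$/$\top$), and each such $C$ has some bounded role depth; let $N$ be the maximum role depth occurring among the finitely many left-hand sides in $\Bmc$. The crux is then to show that $\Bmc \not\models \exists s.\exists r^{N+1}.B \sqsubseteq A$, even though $\I$ satisfies this CI. The cleanest way is to construct a countermodel: an interpretation $\Jmc$ satisfying every CI of $\Bmc$ in which some element lies in $(\exists s.\exists r^{N+1}.B)^\Jmc$ but not in $A^\Jmc$. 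A natural candidate is a ``truncated'' version of $\I$ where the $r$-self-loop is unravelled into a finite $r$-path of length slightly longer than $N+1$ terminating in a $B$-labelled sink, so that concepts of role depth at most $N$ cannot distinguish the fresh root from $x_1$ and hence cannot force it into $A^\Jmc$.

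The main obstacle will be verifying that this truncated $\Jmc$ really satisfies all of $\Bmc$ and not merely the single offending CI. The cleanest route is to use Lemma~\ref{lem:memberSimu1}: since every left-hand side $C$ in $\Bmc$ has role depth at most $N$, membership $x \in C^\Jmc$ is witnessed by a simulation from $(\dgraph(C), \rho_C)$ into $(\dgraph(\Jmc), x)$ that only explores the first $N$ levels below $x$. Because the relevant neighbourhoods of $\Jmc$ up to depth $N$ agree with those of $\I$ on the points that matter, any $C$-membership in $\Jmc$ is also a $C$-membership in $\I$, whence the right-hand concept name holds in $\I$ and therefore, by construction of $\Jmc$, in $\Jmc$ as well. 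I would organise this as a short lemma stating that $C^\Jmc \subseteq C^\I$ (restricted to the preserved domain) for all $\ourEL$ concepts $C$ of role depth $\leq N$, and then conclude $\Jmc \models \Bmc$. Since $\Jmc \not\models \exists s.\exists r^{N+1}.B \sqsubseteq A$ while $\I$ does, $\Bmc$ fails to entail a valid CI, contradicting completeness; hence no finite $\ourEL_{lhs}$ base exists.
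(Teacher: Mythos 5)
Your overall route differs from the paper's: the paper argues directly that any base \Bmc must \emph{contain}, for every $n$, a CI whose left-hand side is equivalent to $\exists s.\exists r^n.B$ (hence has role depth $n$), by showing that no strictly weaker concept $D$ can mediate the entailment of $\exists s.\exists r^n.B \sqsubseteq A$ --- the element $x_3$ satisfies every such weakening but is not in $A^\I$, and in particular $\I \not\models \exists r^k.B \sqsubseteq B$ and $\I \not\models \exists r^k.B \sqsubseteq A$ for any $k$. Your compactness-style plan (bound the maximal left-hand-side depth $N$ of a putative finite base, then unravel the loop into a path of length $N+2$ with a $B$-sink to build a countermodel $\J$) can in fact be completed, but your write-up has a genuine gap at its key step. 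The claim that $\J \models \Bmc$ because ``the right-hand concept name holds in $\I$ and therefore, by construction of $\J$, in $\J$ as well'' is false exactly where it matters: the homomorphism from $\J$ into $\I$ gives you $z \in C^\J \Rightarrow h(z) \in C^\I \subseteq A'^\I$, but the construction of $\J$ \emph{deliberately omits} labels ($A$ at the fresh root, $B$ at every interior point of the unravelled chain), so membership of $h(z)$ in $A'^\I$ does not transfer back to $z$. What must actually be shown is that no CI $C \sqsubseteq A'$ that is sound in $\I$ and has depth at most $N$ applies at those unlabelled elements: for the root, every depth-$\leq N$ concept it satisfies is also satisfied at $x_3$ (via $x_4$'s unlabelled loop), so no sound CI has right-hand side $A$ there; for an interior chain element that can already ``see'' the $B$-sink within depth $N$, the corresponding concept is satisfied at $x_3$ via the edge $(x_3, r, x_2)$, so no sound CI has right-hand side $B$ there. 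The right comparison point is $x_3$, not $x_1$ as you state ($x_1$ and the fresh root \emph{are} distinguished at depth $0$ by $A$ itself).

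This also exposes a second, related error: you describe the interpretation of Figure~\ref{f:exfinitebases}~$(ii)$ as a ``two-component graph'', but the figure's interpretation is connected --- it contains the cross edge $(x_3, r, x_2)$, and that edge is indispensable. Without it, $(\exists r^k.B)^\I = \{x_2\} \subseteq B^\I$ for all $k$, so $\exists r.B \sqsubseteq B$ and $\exists s.B \sqsubseteq A$ are both sound, and this \emph{finite} TBox already entails every $\exists s.\exists r^n.B \sqsubseteq A$; your truncated $\J$ would then violate $\exists r.B \sqsubseteq B$ (the penultimate chain element satisfies $\exists r.B$ but not $B$), and indeed the whole non-finite-base argument collapses for the two-component variant. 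So the proposal is salvageable, but only after restoring the cross edge and replacing the label-transfer step with the $x_3$/$x_4$ witness analysis --- which is precisely the combinatorial core of the paper's own, more direct proof.
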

\begin{proof}
In this proof, assume that CIs are formulated in $ \EL_{lhs} $.
    Consider the  interpretation 

    \[ \I = (\{ x_1,x_2,x_3,x_4 \}, \cdot^\I)\]
with 
    \begin{align*}
        r^\I &= \{(x_2,x_2),(x_4,x_4),(x_3,x_2)\}\\
        s^\I &= \{(x_1,x_2),(x_3,x_4)\} \\
        A^\I &= \{x_1\}\\
        B^\I &= \{x_2\}
    \end{align*}
    and every other concept and role name  is mapped by $ \cdot^\I $ to $ \emptyset $  (see Figure~\ref{f:exfinitebases}~($ ii $)).

	By definition of \I, for all $n\in\mathbb{N}^+$, we have that
	$\I\models \exists s.\exists r^n.B\sqsubseteq A$. 
So if \Bmc is a base for $ \EL_{lhs} $ and \I then 	$\Bmc\models \exists s.\exists r^n.B\sqsubseteq A$ for all $n\in\mathbb{N}^+$.
	Now, observe that there is no $D$ such that
	\begin{enumerate}
\item	$\emptyset\models \exists s.\exists r^n.B\sqsubseteq D$,
	\item	$\emptyset\nvDash D\sqsubseteq \exists s.\exists r^n.B$ (where $\nvDash$ means `does not entail'),	
	\item and $\I\models D\sqsubseteq A$.
	\end{enumerate}
	The reason for the above is because 
	$x_3\in D^\I$ for all $D$ satisfying (1) and (2) but $x_3\not\in A^\I$. 
	Moreover, there is no $k\in\mathbb{N}^+$ such that $\I\models \exists r^k.B\sqsubseteq B$ 
or $\I\models \exists r^k.B\sqsubseteq A$ (because
$x_3\not\in B^\I$ and $x_2,x_3\not\in A^\I$ but $x_2,x_3\in (\exists r^k.B)^\I$).
	So,  \Bmc can only entail $\exists s.\exists r^n.B\sqsubseteq A$
if there is a CI in \Bmc with a concept equivalent to $\exists s.\exists r^n.B$.
This concept needs to have role depth $n$. Since
$\Bmc\models \exists s.\exists r^n.B\sqsubseteq A$ for all $n\in\mathbb{N}^+$,
there are CIs with role depth $n$ for all $n\in\mathbb{N}^+$. This means that \Bmc cannot be finite.
\end{proof}






Next, we prove the result which shows that the depth of roles in a base has an
exponential lower bound.  

\exponentialroledepth*
\begin{proof}
	For any $ n\geq 1 $, we consider the interpretation \I
	where for every 
	$ i\in \{1,\cdots,n\} $ and $ k\geq 1 $,
	there is  $ x_i\in \Delta^\I $ 
	that satisfies $ x_i\in (\exists r^{k\cdot p_i -1}.A)^\I $, 
	$ x_i \in B^\I $, and $ x_i\not\in (\exists r^{l}.A)^\I $ for $ l\not\in \{k\cdot p_i -1\mid k\geq 1  \} $
	where $ p_i $ is the $ i $-th prime number.
	
	We know 
	that 
	$ \min(\bigcap_{i=1}^n \{ k\cdot p_i \mid k\geq 1 \} ) = \prod_{i=1}^{n} p_i
	$ (which is the least common multiple).
	We also know that for any $ n,p\in\mathbb{N}^+ $,
	$ n+1 $ is a multiple of  $ p $ iff
	$ n $ is a multiple of $ p$ minus $  1 $.
	Therefore,  $ d = min(\bigcap_{i=1}^n \{ k\cdot p_i -1 \mid k\geq 1 \} )$,
	is the minimal number such that 
	$B^\I= (\exists r^d. A)^\I $.
Since $d=\prod_{i=1}^{n} p_i -1 \geq 2^n$,
	the statement holds because 
	a base for \I
	should entail the CI 
	$ B \sqsubseteq \exists r^d. A $.
	For this to happen, it should have a CI with role depth at least $ d $.
\end{proof}

\section{Proofs for Section~\ref{sec:adaptable}}

Now we will prove Claim~\ref{cla:dsim}, which is part of Lemma~\ref{lem:unravelExt} that underlies 
our approach. Before that, we need an additional result
regarding simulations, which allows us to view them as functions.

\begin{lemma}
\label{lem:funcSim}
    Let \(\dgraph_1 = (V_1, E_1, L_1)\) and \(\dgraph_2 = (V_2, E_2, L_2)\)
    be two \(\ourEL\) description graphs. Let \(Z : (\dgraph_1, v_1) \mapsto
    (\dgraph_2, v_2)\) be a simulation. Then, there exists a simulation \(Z'
    : (\dgraph_1, v_1) \mapsto (\dgraph_2, v_2)\) such that for every \(v
    \in V_1\), there is at most one \(w \in V_2\) such that \((v, w) \in
    Z'\).
\end{lemma}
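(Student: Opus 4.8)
The plan is to extract from \(Z\) the graph of a single-valued relation \(Z' = \{(v, f(v)) \mid v \text{ reachable from } v_1 \text{ in } \dgraph_1\}\), where \(f\) is a partial function \(f : V_1 \to V_2\) built so as to satisfy the three conditions of Definition~\ref{def:sim1}. Vertices of \(\dgraph_1\) not reachable from \(v_1\) play no role in whether \(Z'\) is a simulation, so I discard them at the outset and work only with the reachable subgraph. Since every image \(f(v)\) I ever choose will be taken from \(\{w \mid (v, w) \in Z\}\), condition~(2) (label containment) is inherited from \(Z\) automatically, and condition~(1) is enforced by setting \(f(v_1) = v_2\). The entire difficulty is therefore concentrated in condition~(3): for every edge \((v, r, v') \in E_1\) I must secure \((f(v), r, f(v')) \in E_2\).

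First I would compute the strongly connected components of the reachable subgraph and form its condensation, the directed acyclic graph of Definition~\ref{def:condense}, and then process the SCCs in a topological order beginning with the SCC of \(v_1\) (which is a source of the condensation, since the whole subgraph is reachable from \(v_1\)). For a trivial SCC \(\{v\}\) without a self-loop, once images have been fixed for all predecessors of \(v\), condition~(3) of Definition~\ref{def:sim1} applied to each incoming edge guarantees at least one admissible candidate in \(Z\), and I select one. This greedy, predecessor-to-successor choice is exactly the construction that succeeds when \(\dgraph_1\) is tree-shaped, where each vertex has a unique incoming edge so that no choice ever needs to be revisited and conditions~(1)--(3) follow by induction along the tree.

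The main obstacle is the nontrivial strongly connected components and self-loops of \(\dgraph_1\). Inside such a component a single image \(f(v)\) must be admissible for \emph{every} incoming edge — both from already-processed predecessors and from the other vertices of the component — while simultaneously admitting successor images for every outgoing edge, including edges that close a cycle back onto vertices whose images are already fixed; these constraints need not be jointly satisfiable by one fixed vertex of \(\dgraph_2\), so a purely local selection from \(Z\) can fail. My intended route is to first replace \(\dgraph_1\) by its unravelling \((\dgraph_1)^{v_1}_d\) for a depth \(d\) on the order of \(|V_1| \cdot |V_2|\), where the greedy construction is immediate because the unravelling is a tree with unique incoming edges, and then to argue that the images assigned to walks ending at the same vertex can be made to agree, collapsing the tree-level functional simulation back to a single-valued \(f\) on \(\dgraph_1\) itself. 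I expect establishing that such a coherent folding exists to be the crux of the argument, and the precise point at which the cyclic structure — rather than the tree structure — has to be genuinely controlled.
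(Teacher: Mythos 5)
Your proposal does not close the proof, and the point where it stops is not a removable technicality: the ``coherent folding'' you defer to the end is impossible in general, because the statement for arbitrary description graphs \(\dgraph_1\) is false without further hypotheses. Take \(V_1 = \{v_1, x\}\) with \(E_1 = \{(v_1, r, x), (v_1, s, x)\}\) and all labels empty, and \(V_2 = \{v_2, a, b\}\) with \(E_2 = \{(v_2, r, a), (v_2, s, b)\}\). Then \(Z = \{(v_1, v_2), (x, a), (x, b)\}\) is a simulation, but \emph{any} simulation \(Z'\) from \((\dgraph_1, v_1)\) to \((\dgraph_2, v_2)\) must contain \((v_1, v_2)\) by condition (1) of Definition~\ref{def:sim1}, hence \((x, a)\) by condition (3) applied to the \(r\)-edge (the only \(r\)-successor of \(v_2\) is \(a\)) and \((x, b)\) by condition (3) applied to the \(s\)-edge; so no functional simulation exists at all. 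Note this \(\dgraph_1\) is acyclic, which also refutes your handling of trivial SCCs: with several incoming edges, condition (3) supplies an admissible candidate \emph{per edge}, but these per-edge candidate sets need not intersect, so ``I select one'' is unsound already on DAGs. The real obstruction is re-entrancy --- a vertex with more than one incoming edge, counting role multiplicity --- not cyclicity, so your plan of unravelling to depth \(d \approx |V_1|\cdot|V_2|\) and then identifying the images of walks that end at the same vertex attacks the wrong enemy, and in the counterexample the required identification provably does not exist.

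What makes the lemma usable in the paper is that it is only ever invoked (inside Claim~\ref{cla:dsim}) with \(\dgraph_1\) an unravelling \(\dgraph^v_d\), i.e.\ a tree in which every vertex has a unique incoming edge. In that setting your own greedy, root-to-leaf construction already constitutes a complete proof: set \(f(v_1) = v_2\), and for each vertex choose an image in \(\{w \mid (v,w) \in Z\}\) that is an appropriate successor of the image of its unique parent (such a \(w\) exists by condition (3) for \(Z\)); conditions (1)--(3) then follow by induction along the tree, with no unravel-and-fold stage needed. The paper's own proof takes a superficially different route --- prune \(Z\) down to one retained pair per vertex and re-verify the three conditions --- but, read literally, its verification of condition (3) has the same weak spot as your trivial-SCC step (the retained pair for \(v'\) need not be an edge-successor of the retained image of \(v\)); it is sound exactly when the retained pairs are selected in a coordinated, parent-first fashion, which the tree shape of the unravellings guarantees. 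So the fix for your attempt is to restrict the statement to tree-shaped \(\dgraph_1\) (which is all the paper needs), run your greedy argument there, and delete the folding stage; as proposed, that stage is a genuine gap that cannot be filled.
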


\begin{proof}
    Assume that \(Z\) is a simulation from \((\dgraph_1, v_1)\) to
    \((\dgraph_2, v_2)\). If it satisfies the property that for every \(v
    \in V_1\), there is at most one \(w \in V_2\) such that \((v, w) \in
    Z\), we can take \(Z = Z'\).

    Let \(Z'\) be such that \(\{(v_1, v_2)\} \subseteq Z' \subseteq
    Z\). Also, suppose that it satisfies the following properties.

    \begin{itemize}
        \item If \((v, w) \in Z\), then exists \(w' \in V_2\) such
            that \((v, w') \in Z'\).
        \item If \((v, w) \in Z'\) and \((v, w') \in Z'\) then \(w
            = w'\).
    \end{itemize}

    A subset \(Z' \subseteq Z\) satisfying these two properties always exists: we
    just leave in \(Z'\) one pair \((v, w)\) for each \(v \in V_1\) such
    that \((v, w') \in Z\) for some \(w' \in V_2\).

    We will show that if \((v_1, v_2) \in Z'\), then \(Z'\) is a simulation
    from  \((\dgraph_1, v_1)\) to \((\dgraph_2, v_2)\).
    
    \begin{enumerate}
        \item \((v_1, v_2) \in Z'\) is assumed.
        \item If \((v, w) \in Z'\), then \((v, w) \in Z\),
            therefore \(L_1(v) \subseteq L_2(w)\) since \(Z\) is
            a simulation.
        \item If \((v, w) \in Z'\) and \((v, r, v') \in E_1\),
            then we know that \((w, r, w') \in E_2\) 
            and that \((v', w')
            \in Z\) for some \(w' \in V_2\). If \((v', w') \in Z'\)
            then (3) holds for \(Z'\). Otherwise, by construction of
            \(Z'\), there is a \(w'_2 \in V_2\) such that \((w', r, w'_2)
            \in E_2\) and \((v', w'_2) \in Z'\), which  proves (3) for
            \(Z'\) in this case.
    \end{enumerate}

    Therefore, \(Z' : (\dgraph_1, v_1) \mapsto (\dgraph_2, v_2)\) is
    a simulation such that for every \(v
    \in V_1\), there is at most one \(w \in V_2\) such that \((v, w) \in
    Z'\).
\end{proof}

Now we proceed to the claim's actual proof.

\dsim*

\begin{proof}
    Let \(\dgraph, \dgraph', v\) and \(v'\) as stated earlier and consider the
     unravellings  
    $$ 
     \dgraph_{d}^{v} = (V_d, E_d, L_d)
    \text{ and } 
    \dgraph_{k}^{v} = (V_{k}, E_{k}, L_{k})
    $$
    of \(\dgraph\).  Now, assume that there is
    a simulation \(Z_d : (\dgraph_d^{v}, v) \mapsto (\dgraph', v')\).  By
    Lemma~\ref{lem:funcSim}, we can assume w.l.o.g. that for
    each \(\dgpath{w} \in V_d\) there exists at most one \(u \in V'\) such that
    \((\dgpath{w}, u) \in Z_d\).  Therefore, we can define a function \(z\) such
    that \(z(\dgpath{w})\) is the only vertex in \(V'\) such that \((\dgpath{w},
    z(\dgpath{w})) \in Z_d\).

    If \(k \leq d\) then, 
    as \(\dgraph_k^v\) is a subtree
    of \(\dgraph_d^v\) (and thus, \(V_k \subseteq V_d\)), one can just take 
    \(Z_k = \{(\dgpath{w}, z(\dgpath{w})) \mid \dgpath{w} \in V_k\}\) as
    simulation.
%
    We now argue about the case where \(k > d\). Recall that the function
    \(\delta\) returns the vertex of a graph that 
    occurs at end of a path. 
    We show that in any
    path of length \(d\) in \(\dgraph_d^v\), there are two vertices
    \(\dgpath{w}_1\) and \(\dgpath{w}_2\) such that \(\delta(\dgpath{w}_1)
    = \delta(\dgpath{w}_2)\) and  \(z(\dgpath{w}_1) = z(\dgpath{w}_2)\).

    In what follows, we use the fact that unravellings are trees, and thus,
    for each vertex in an unravelling, there is exactly one path starting from the
    root to it.
     So we can refer to this path without ambiguity. Moreover, if
    \(\dgpath{w}\) is a vertex in an unravelling with root \(v\), then the path
    distance of \(\dgpath{w}\) is the length of the path from \(v\) to
    \(\dgpath{w}\).

    Now, let \(\dgpath{w} = w_0r_0\dots r_{n-1}w_n\) be a vertex in \(V_d\)
    and let \(\dgpath{w}_i = w_0r_0\dots r_{i-1}w_i\) for \(0 \leq i \leq n\) be
    the vertices in the path from \(v\) to \(\dgpath{w}\) (\(\dgpath{w}_0 = v\)
    and \(\dgpath{w}_n = \dgpath{w}\)).
    The path from \(v\) to \(\dgpath{w}\) in \(\dgraph_d^v\) determines a walk
    \(\dgpath{w}^*\) in \(\dgraph\) starting at \(v\) as follows:
    \[\dgpath{w}^* = \delta(\dgpath{w}_0)r_0\dots r_{n-1}\delta(\dgpath{w}_n).\]
    Due to the definition of \(\mvf\) there can be at most \(\mvf(\dgraph, v)\)
    distinct values of \(\delta\) for all vertices in the path from \(v\) to
    \(\dgpath{w}\), that is, \(|\{\delta(\dgpath{w}_i) \mid 0 \leq i \leq n\}|
    \leq \mvf(\dgraph, v)\). 

    As \(Z_d\) is a simulation, the path
    from \(v\) to \(\dgpath{w}\) also determines a walk in \(\dgraph'\)
    starting at \(v'\):
    \[\dgpath{w}' = z(\dgpath{w}_0)r_0\dots r_{n-1}z(\dgpath{w}_n).\]
    Again, due to the definition of \(\mvf\) there can be at most
    \(\mvf(\dgraph', v')\) distinct values of \(z\) for all vertices in the path
    from \(v\) to \(\dgpath{w}\), that is, \(|\{z(\dgpath{w}_i) \mid 0 \leq
    i \leq n\}| \leq \mvf(\dgraph, v)\).
    Therefore, there are at most \(\mvf(\dgraph,
    v) \cdot \mvf(\dgraph', v') = d\) distinct pairs \((\delta(\dgpath{w}'),
    z(\dgpath{w}'))\), where \(\dgpath{w}'\) is a vertex in the path from \(v\) to
    \(\dgpath{w}\), i.e.,
    \[|\{(\delta(\dgpath{w}_i), z(\dgpath{w}_i)) \mid 0 \leq i \leq n\}| \leq d.\]
    If a vertex \(\dgpath{w}\) has path 
    distance \(d\) from \(v\) in \(\dgraph_d^v\), then there are \(d + 1\)
    vertices in the path from \(v\) to \(\dgpath{w}\). As there are at most \(d\) 
    distinct pairs  \((\delta(\dgpath{w}'), z(\dgpath{w}'))\), where
    \(\dgpath{w}'\) is a vertex in this path, 
    and \(d+1\) vertices in the path from
    \(v\) to \(\dgpath{w}\), the pigeonhole principle implies that there will be
    two vertices \(\dgpath{w}_1, \dgpath{w}_2 \in V_d\) in the path from \(v\)
    to \(\dgpath{w}\) such that both \(z(\dgpath{w}_1) = z(\dgpath{w}_2)\) and
    \(\delta(\dgpath{w}_1) = \delta(\dgpath{w}_2)\).

Let \(\overline{V} \subseteq V_d\) be the set of all vertices
    such that there are no two distinct vertices
    \(\dgpath{w}_1\) and \(\dgpath{w}_2\) on the path from \(v\) to
    \(\dgpath{w}\) with 
    \(\delta(\dgpath{w}_1) = \delta(\dgpath{w}_2)\)
    and \(z(\dgpath{w}_1) = z(\dgpath{w}_2)\). Because of the previous
    argument, \(\overline{V}\) contains only vertices whose path distance
    from \(v\) is strictly less than \(d\). 

    Since \(\dgraph_d^v\) is a description tree with root \(v\),
    there is exactly one directed path from \(v\) to any given vertex
    \(\dgpath{w} \in V_d\). Hence, if \(\dgpath{w} \in \overline{V}\) then
    every vertex \(\dgpath{w}'\) on the path from \(v\) to \(\dgpath{w}\) in
    \(\dgraph_d^v\) is also in \(\overline{V}\). In other words, \(\overline{V}\)
    spans a subtree of \(\dgraph_d^v\).

    Now, let us consider the set \(\overline{V}^+\) composed by the direct successors
    of the leaves of the subtree determined by \(\overline{V}\), that is,
    \(\overline{V}^+ = \{w_0r_0\dots r_{n-1}w_n \in V_d \setminus \overline{V}
    \mid w_0r_0\dots r_{n-2}w_{n-1} \in \overline{V}\}\). Since each vertex in
    \(\overline{V}\) has path distance at most \(d - 1\) from \(v\), each vertex
    \(\overline{V}^+\) has path distance at most \(d\) from \(v\). Together with
    the fact that \(\overline{V}\) spans a subtree of \(\dgraph_d^v\), for each
    vertex \(\dgpath{w} \in V_d\) with path distance \(d\) from \(v\), there is
    exactly one vertex \(\dgpath{w}' \in \overline{V}^+\) in the path from \(v\)
    to \(\dgpath{w}\) (including the extremities).

    As we assume \(k > d\), we know that \(\dgraph_d^v\) is a subtree of \(\dgraph_k^v\),
    hence \(\overline{V}\) also spans a subtree of \(\dgraph_k^v\).  Therefore
    \(\overline{V} \cup \overline{V}^+ \in V_k\) and for every vertex
    \(\dgpath{w} \in V_k\) there is exactly one vertex \(\dgpath{w}'\) in
    \(\overline{V}^+\) in the path from \(v\) to \(\dgpath{w}\) in
    \(\dgraph_k^v\).  For each vertex \(\dgpath{w} \in V_k\), such
    \(\dgpath{w}'\) can be used to build a simulation from \(Z_d\) that includes
    \(\dgpath{w}\), as we will show next.

    For each vertex \(\dgpath{w} \in \overline{V} \cup \overline{V}^+\),
    there is exactly one vertex \(\dgpath{w}'\) in \(\overline{V}\) in the
    path from \(v\) to \(\dgpath{w}\) such that
    \(\delta(\dgpath{w}) = \delta(\dgpath{w}')\)
    and \(z(\dgpath{w}) = z(\dgpath{w}')\). Therefore, we
    can define a function \(s : \overline{V} \cup \overline{V}^+ \mapsto
    \overline{V}\) which retrieves such vertex for every \(\dgpath{w} \in
    \overline{V} \cup \overline{V}^+\).

    Now, we can use this function \(s\) to find an alternative path in \(V_d\)
    for each vertex in \(V_k\) when extending \(Z_d\) to the vertices in \(V_k
    \setminus V_d\). This notion is formalised by the function \(f : V_k \mapsto
    \overline{V}\) defined next, where \(\dgpath{w} = w_0r_0\dots
    w_{|\dgpath{w}|-1}r_{|\dgpath{w}|-1}w_{|\dgpath{w}|}\).
    \begin{align*}
         &f(\dgpath{w}) =\\
         &\qquad\begin{cases}
            s(\dgpath{w})  &\text{ if } \dgpath{w} \in \overline{V} \cup \overline{V}^+\\
            f(f(w_0r_0\dots w_{|\dgpath{w}|-1})r_{|\dgpath{w}|-1}w_{|\dgpath{w}|})
                &\text{ otherwise.}
        \end{cases}
    \end{align*}
    To clarify the rôle of \(f\) in this proof, consider a vertex
    \(\dgpath{w} = w_0r_0\dots r_{n-1}w_n \in V_k\) with \(n > d\). As before,
    let \(\dgpath{w}_i = w_0r_0\dots r_{i-1}w_i\) for \(0 \leq i \leq n\) be
    the vertices in the path from \(v\) to \(\dgpath{w}\). Since the path
    distance from \(v\) to
    \(\dgpath{w}\) is more than \(d\), we know
    that there is one \( 1 \leq m \leq n\) such that \(\dgpath{w}_m \in
    \overline{V}^+\). We also know that there is one \(0 \leq j < m\) such
    that \(s(\dgpath{w}_j) = \dgpath{w}_m\). When applying \(f\) to
    \(\dgpath{w}\), we obtain the following:
    \[f(\dgpath{w}) = f(\dots f \dots f(f(\dgpath{w}_m)r_mw_{m+1})
    \dots)r_{n-1}w_n).\]
    Since \(\dgpath{w}_m \in \overline{V}^+\), we have that \(f(\dgpath{w}_m)
    = \dgpath{w}_j\),
    which is closer to \(v\) than \(\dgpath{w}_m\). 
    As a consequence of \(s(\dgpath{w}_j) = \dgpath{w}_m\) and the definitions of
    unravelling and simulation, we know that \((\delta(\dgpath{w}_j), r_m,
    \delta(\dgpath{w}_{m+1})) \in E\) and \((z(\dgpath{w}_j), r_m,
    z(\dgpath{w}_{m+1})) \in E'\). Because the relation between vertices in
    \(\overline{V}^+\) and their image via the function \(s\) holds in each step of the
    recursion, we can add \((\dgpath{w}, z(f(\dgpath{w}))\) to \(Z_d\) for
    every vertex in \(V_k\) creating a new simulation.

    We use this observation to define the relation \(Z_k\) as:
    \[Z_k = \{(\dgpath{w}, z(f(\dgpath{w})) \mid \dgpath{w} \in V_k\}.\]


    Now we show that \(Z_k\) is a simulation from \((\dgraph^v_{k}, v)\) to
    \((\dgraph', v')\).

    \begin{enumerate}

        \item Since \(v \in \overline{V}\) and \(Z_d\) is a simulation satisfying
            the property of Lemma~\ref{lem:funcSim}, \((v, z(f(v))) = (v,
            z(v)) = (v, v')\).

        \item Since \(Z_d\) is a simulation and \(f(\dgpath{w}) \in V_d\):
				\begin{align*}
                L_k(\dgpath{w}) &= L(\delta(\dgpath{w}))
                = L(\delta(f(\dgpath{w})))\\ 
                 &= L_d(f(\dgpath{w})) \subseteq
                L'(z(f(\dgpath{w}))).
                \end{align*}

        \item Let \(\dgpath{w} \in V_k\) and assume that \((\dgpath{w},
            r, \dgpath{w}ry) \in E_k\). 

            If \(\dgpath{w}ry \in \overline{V} \cup \overline{V}^+\), then 
            \(\dgpath{w} \in \overline{V}\). Therefore, \((\dgpath{w},
            r, \dgpath{w}ry) \in E_d\). We also have: 

                \begin{align*}
                    (z(f(\dgpath{w})), r, z(f(\dgpath{w}ry))) &= (z(s(\dgpath{w})), r,
            z(s(\dgpath{w}ry)))\\
                &= (z(\dgpath{w}), r, z(\dgpath{w}ry)).
                \end{align*}

            Moreover, \((z(\dgpath{w}), r, z(\dgpath{w}ry)) \in
            E'\) because \(Z_d\) is a simulation. Finally, by construction, \((\dgpath{w}ry,
            z(\dgpath{w}ry)) \in Z_k\).

            Otherwise, if \(\dgpath{w}ry \not\in \overline{V} \cup
            \overline{V}^+\), we have that
            \(f(\dgpath{w}ry) = f(f(\dgpath{w})ry)\). Since \(f(\dgpath{w})
            \in \overline{V}\), \(f(\dgpath{w})ry \in \overline{V} \cup
            \overline{V^+}\) and consequently \(f(f(\dgpath{w})ry) = s(f(\dgpath{w})ry)\). By
            the definition of \(s\): \(z(s(f(\dgpath{w})ry))
            = z(f(\dgpath{w})ry) = z(f(\dgpath{w}ry))\). Since
            \(Z_d\) is a simulation and \(f(\dgpath{w}) \in V_d\), it
            holds that \((z(f(\dgpath{w})), r, z(f(\dgpath{w})ry))
            = (z(f(\dgpath{w})), r, z(f(\dgpath{w}ry))) \in E'\).
            Thus, \((\dgpath{w}ry, z(f(\dgpath{w})ry)) = (\dgpath{w}ry,
            z(f(\dgpath{w}ry)) \in Z_k\) which concludes the proof of (S3) for \(Z_k\).

    \end{enumerate}

    Therefore, \(Z_k\) is a simulation from
    \((\dgraph_{k}^v, v)\) to \((\dgraph', v')\), which proves the claim.
\end{proof}


Lemma~\ref{lem:boundedwalks} refers to walks in a product graph. To simplify its
proof we highlight a relationship between walks in the product graph and walks
in their factors via Proposition~\ref{prop:prod}.

\begin{proposition}
\label{prop:prod}
    Let \(\dgraph_1, \ldots,\dgraph_n\) be \(n\) description
    graphs, with \(\dgraph_i = (V_i, E_i, L_i)\) for $1\leq i\leq n$.
    It holds that, for each walk
    \(\dgpath{w}\) in \(\prod_{i = 1}^n \dgraph_i \) starting at \((v_1, \dots, v_n)\), there is
    a walk in \(\dgraph_i\) starting at \(v_i\) with the same length, for all $1\leq i\leq n$.
\end{proposition}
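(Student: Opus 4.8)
The plan is to project the given walk in the product graph onto its $i$-th coordinate and verify that the resulting sequence is a walk in $\dgraph_i$ of the same length. First I would write the walk $\dgpath{w}$ explicitly as $\dgpath{w} = u_0 r_0 u_1 r_1 \ldots r_{m-1} u_m$, where $u_0 = (v_1, \ldots, v_n)$, each vertex $u_j$ is a tuple $(u_j^1, \ldots, u_j^n) \in \bigtimes_{i=1}^n V_i$, each $r_j \in \NR$, and $(u_j, r_j, u_{j+1})$ is an edge of the product graph for every $0 \leq j < m$. By definition, $|\dgpath{w}| = m$.

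Next I would fix an arbitrary $i \in \{1, \ldots, n\}$ and consider the projection $\dgpath{w}^i = u_0^i r_0 u_1^i r_1 \ldots r_{m-1} u_m^i$ obtained by replacing each tuple $u_j$ with its $i$-th component $u_j^i$. The key step is to invoke the definition of the product's edge set: since $(u_j, r_j, u_{j+1}) \in E$, the defining condition $(u_j^i, r_j, u_{j+1}^i) \in E_i$ holds for all $1 \leq i \leq n$, and in particular for the fixed $i$. Hence each consecutive pair in $\dgpath{w}^i$ is joined by an edge of $E_i$ carrying the role $r_j$, so $\dgpath{w}^i$ is a legitimate walk in $\dgraph_i$. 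Moreover $u_0^i = v_i$ because $u_0 = (v_1, \ldots, v_n)$, so this walk starts at $v_i$, and since $\dgpath{w}^i$ reuses the same sequence of role names $r_0, \ldots, r_{m-1}$ it has length $m = |\dgpath{w}|$.

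Since $i$ was arbitrary, this produces the required walk starting at $v_i$ with the same length for every $1 \leq i \leq n$, which completes the argument. There is essentially no genuine obstacle here: the statement is a direct unfolding of the product definition. The only point that deserves a word of care is that the role label $r_j$ at each step must be the \emph{same} across all coordinates, so that the projected labels are well defined; but this is exactly what the product edge condition guarantees, since it requires a common $r \in \NR$ for all factors simultaneously.
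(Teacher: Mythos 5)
Your proposal is correct and follows essentially the same route as the paper's proof: both project the product walk onto its $i$-th coordinate and appeal to the product edge definition, which requires $(u_j^i, r_j, u_{j+1}^i) \in E_i$ with a common role label $r_j$ across all factors, to conclude that the projection is a walk in $\dgraph_i$ starting at $v_i$ of the same length. Your explicit remark about the role label being shared across coordinates is a fair point of care, but it is the same observation the paper relies on implicitly.
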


\begin{proof}
    Let \(\dgpath{w}\) be a walk in \(\prod_{i = 1}^n \dgraph_i\) starting in
    \((v_1, \dots, v_n)\) with length \(m\) as follows:
    \[\dgpath{w} = (w_{1,0}, \dots, w_{n,0})r_0 \dots r_{m-1} (w_{1,{m-1}},
    \dots, w_{n,m})).\] 
    The walk \(\dgpath{w}_i = w_{i,0}r_0 \dots r_{m-1} w_{i,{m}}\) is a walk in
    \(\dgraph_i\) because \(w_{i,j} \in V_i\) for \(0 \leq j < m\)
    and \((w_{i,j}, r_{j}, w_{i,j+1}) \in E\) due to the
    definition of product. As \(w_{i,0} = v_i\), by construction of
    \(\dgpath{w}\), \(\dgpath{w}_i\) 
    starts at \(v_i\). Additionally, by
    construction, \(\dgpath{w}_i\) has length \(m\), which concludes the proof.
\end{proof}

We use Proposition~\ref{prop:prod} 
in Lemma~\ref{lem:boundedwalks} below.

\boundedwalks*

\begin{proof}
    Let \(X = \{x_1, \dots, x_n\} \subseteq \Delta^\I\) and let
    \begin{align*}
        &X_{lim} = \{x \in X \mid \exists m \in \mathbb{N} : \text{every walk} \\
        &\qquad \text{starting from } x \text{ in } \dgraph(\I) \text{ has length} \leq m\}.
    \end{align*}

    Assume \(X_{lim} \neq \emptyset\) and let \(x' \in X_{lim}\) be such that
    \[\mvf(\dgraph(\I), x') = \min_{x \in X_{lim}} \mvf(\dgraph(\I),
    x).\] Since \(x' \in X_{lim}\), every walk in \(\dgraph(\I)\) starting at
    \(x'\) has length bounded by \(\mvf(\dgraph(\I), x') - 1\). Due to the
    definition of product of description graphs (recall how the edges are built), 
    this limitation extends to every walk in
    \(\prod_{i=1}^n \dgraph(\I)\) starting at \((x_1, \dots, x_n)\): they
    have length at most \(\min_{x \in X_{lim}} \mvf(\dgraph(\I), x)
    - 1\). If there was a longer walk, there would be also a walk in 
    in \(\dgraph(\I)\) starting at \(x'\) with the same length due to
    Proposition~\ref{prop:prod}.
\end{proof}


In the following, we prove that our adaptable role depth yields an upper bound of the actual
fixpoint for an MMSC.

\mmscK*

\begin{proof}

    Let \(X = \{x_1, \dots, x_n\} \subseteq \Delta^\I\) and

    \begin{align*}
        &X_{lim} = \{x \in X \mid \exists m \in \mathbb{N} : \text{every walk} \\
        &\qquad \text{starting from } x \text{ in } \dgraph(\I) \text{ has length} \leq m\}.
    \end{align*}

    If \(k \leq \newd{X}{\I}\) the lemma holds trivially. For \(k
    > \newd{X}{\I}\) we divide the proof in two cases. First, if \(X_{lim} \neq
    \emptyset\) then as stated in Lemma~\ref{lem:boundedwalks}, every walk in
    \(\prod_{i=1}^n \dgraph(\I)\) starting at \((x_1, \dots, x_n)\) has length
    at most \(\mvf\left(\prod_{i=1}^n \dgraph(\I), (x_1, \dots, x_n)\right)
    - 1 = \newd{X}{\I}\).

    In other words, even when \(k > \newd{X}{\I}\), we have: \(\prod_{i=1}^n
    \dgraph(\I)_k^x = \prod_{i=1}^n \dgraph(\I)_\newd{X}{\I}^x\), and
    therefore, we can apply Lemma~\ref{lem:memberSimu1} to conclude that:

    \[\mmsc{X}{\I}[\newd{X}{\I}]^\I \subseteq \mmsc{X}{\I}[k]^\I.\]

    Otherwise, if \(X_{lim} = \emptyset\), we can use the fact that 
    \(\mmvf(\dgraph) \geq \mvf(\dgraph, x')\; \forall x' \in \Delta_\I\) to
    obtain:

    \[ \newd{X}{\I} \geq \mvf\left(\prod_{i = 1}^n\dgraph(\I), (x_1, \dots,
    x_n)\right) \cdot \mvf(\dgraph(\I), x').\]

    Hence, if \(X_{lim} = \emptyset\), the lemma is a direct consequence
    of Definition~\ref{def:newd} and Lemma~\ref{lem:unravelExt}.
\end{proof}


To prove that \(\base(\I)\) defined in Theorem~\ref{thm:b2} is a base,
we  first  
 recall a result related to the notion of MMSC.

\begin{lemma}{\cite{Borchmann2016}}
\label{lem:r5}
    Let \(\I = (\Delta^\I, \cdot^\I)\) be a finite \(\ourEL\) interpretation.
 For all \(X \subseteq \Delta^\I\) and \(k \in \mathbb{N}\), it holds that 
%
%
    $\emptyset \models \mmsc{\mmsc{X}{\I}[k]^\I}{\I}[k] \equiv \mmsc{X}{\I}[k].$
\end{lemma}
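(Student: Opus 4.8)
The plan is to prove the equivalence by establishing the two subsumptions separately, relying only on the three defining conditions of the MMSC from Definition~\ref{def:model-based}. To keep the notation manageable I would set $C := \mmsc{X}{\I}[k]$ and $E := \mmsc{C^\I}{\I}[k]$, so that the goal becomes $\emptyset \models C \equiv E$. Existence of both $C$ and $E$ is guaranteed (there is always at least one MMSC, as noted after Definition~\ref{def:model-based}), and since any two MMSCs of the same set with the same role depth are logically equivalent, it suffices to argue about these representatives.

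For the direction $\emptyset \models E \sqsubseteq C$, I would instantiate condition (3) of the MMSC defining $E = \mmsc{C^\I}{\I}[k]$ with the witness concept $D := C$. Condition (2) for $C$ guarantees that $C$ has role depth at most $k$, and trivially $C^\I \subseteq C^\I$. Hence condition (3) for $E$ yields $\emptyset \models E \sqsubseteq C$ directly.

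For the reverse direction $\emptyset \models C \sqsubseteq E$, I would first chain the containments coming from condition (1). Condition (1) for $E$ gives $C^\I \subseteq E^\I$, and condition (1) for $C$ gives $X \subseteq C^\I$; combining these yields $X \subseteq E^\I$. Since $E$ has role depth at most $k$ by condition (2), I can then apply condition (3) of the MMSC defining $C = \mmsc{X}{\I}[k]$ with the witness $D := E$, which delivers $\emptyset \models C \sqsubseteq E$. Together with the first direction, this gives $\emptyset \models C \equiv E$, which is the claim.

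I do not expect any deep difficulty here, as the argument is essentially a direct unfolding of the definition. The one point that deserves genuine care is the distinction between $\I$-validity and $\emptyset$-entailment: the containment $C^\I \subseteq E^\I$ only certifies $\I \models C \sqsubseteq E$, which is strictly weaker than the required $\emptyset \models C \sqsubseteq E$. The reason the argument goes through is precisely that condition (3) of the MMSC produces subsumptions over the empty TBox (i.e.\ logical entailments), so the essential move is to invoke condition (3) rather than merely compare extensions in $\I$; that is what upgrades the claim to the needed $\emptyset$-entailment.
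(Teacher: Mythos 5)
Your proof is correct: both directions follow exactly as you describe from conditions (1)--(3) of Definition~\ref{def:model-based} --- instantiating condition (3) of the outer MMSC with $D := \mmsc{X}{\I}[k]$ for one subsumption, and chaining the two condition-(1) containments $X \subseteq \mmsc{X}{\I}[k]^\I \subseteq \mmsc{\mmsc{X}{\I}[k]^\I}{\I}[k]^\I$ before applying condition (3) of the inner MMSC for the other --- and your closing remark about why condition (3) yields genuine $\emptyset$-entailment rather than mere $\I$-validity is precisely the right point of care. The paper itself gives no proof of this lemma (it is imported by citation from \cite{Borchmann2016}), and your direct definitional unfolding is the standard argument for this result, so nothing needs patching.
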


We will also need a property regarding the construction of concept expressions
with MMSCs.

\begin{lemma}[Adaptation of Proposition A.1 from \cite{Borchmann2016}]
\label{lem:monotonicity}
    For all \(\ourEL\) concept expressions \(C, D\) over \(\NC \cup \NR\)
    and all \(r \in \NR\) it holds that: 
    \[(\mmsc{C^\I}{\I} \sqcap D)^\I = (C \sqcap D)^\I,\]
    \[(\exists r.(\mmsc{C^\I}{\I}))^\I = (\exists r.C)^\I.\]
\end{lemma}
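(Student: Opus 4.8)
The plan is to reduce both identities to the single extensional fact supplied by Lemma~\ref{lem:r6p}, namely that \(\mmsc{C^\I}{\I}^\I = C^\I\) holds for every \ourEL concept expression \(C\). Once that equality is available, each of the two claimed identities follows purely from the compositional semantics of conjunction and of existential restriction, with no further reasoning about role depth required. So I would state the proof as a short chain of rewritings rather than an inductive argument.

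For the first identity I would begin from the semantics of \(\sqcap\): by definition \((\mmsc{C^\I}{\I} \sqcap D)^\I = \mmsc{C^\I}{\I}^\I \cap D^\I\). Substituting \(\mmsc{C^\I}{\I}^\I = C^\I\) from Lemma~\ref{lem:r6p} gives \(C^\I \cap D^\I\), which is exactly \((C \sqcap D)^\I\) by reading the semantics of \(\sqcap\) in the other direction. For the second identity I would argue analogously using the semantics of \(\exists r\): an element \(x\) lies in \((\exists r.(\mmsc{C^\I}{\I}))^\I\) iff there is some \(y\) with \((x,y)\in r^\I\) and \(y \in \mmsc{C^\I}{\I}^\I\); replacing \(\mmsc{C^\I}{\I}^\I\) by \(C^\I\) via Lemma~\ref{lem:r6p} shows this holds iff there is some \(y\) with \((x,y)\in r^\I\) and \(y \in C^\I\), i.e.\ iff \(x \in (\exists r.C)^\I\). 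Since \(x\) is arbitrary, the two extensions coincide.

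There is no genuine obstacle at this stage, and I would be explicit that the statement is essentially immediate once Lemma~\ref{lem:r6p} is in place. The only point worth stressing is where the \emph{adaptation} of the original Borchmann et al.\ proposition lives: in the cited work the analogous claim is proved with respect to a fixed role depth, whereas here \(\mmsc{C^\I}{\I}\) abbreviates the MMSC taken at the adaptable depth \(\newd{C^\I}{\I}\). The real content that lets the rewriting go through at this particular depth is precisely Lemma~\ref{lem:r6p}, which in turn rests on Lemma~\ref{lem:mmscK}; consequently all the role-depth bookkeeping has already been discharged upstream, and the present lemma inherits the guarantee that the MMSC at the adaptable depth captures \(C^\I\) exactly.
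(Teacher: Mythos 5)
Your proof is correct, and the comparison here is somewhat asymmetric: the paper gives no proof of this lemma at all, importing it by citation as an adaptation of Proposition~A.1 of \cite{Borchmann2016}, whose original statement concerns MMSCs at a fixed role depth. You instead make the adaptation explicit by reducing both identities to Lemma~\ref{lem:r6p} and the compositional semantics of \(\sqcap\) and \(\exists r\): \((\mmsc{C^\I}{\I} \sqcap D)^\I = \mmsc{C^\I}{\I}^\I \cap D^\I = C^\I \cap D^\I = (C \sqcap D)^\I\), and the analogous pointwise rewriting for the existential case. This is valid, and---importantly---non-circular within the paper's dependency structure: Lemma~\ref{lem:r6p} is proved from Lemma~4.4(vi) of \cite{Borchmann2016} together with Lemma~\ref{lem:mmscK}, not from the present lemma (only the subsequent Lemma~\ref{lem:appSub} uses both). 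Your closing remark about where the adaptation lives is also on target: in the fixed-depth setting one must track that the depth bound is large enough relative to the role depths of the concepts involved, whereas here Lemma~\ref{lem:mmscK} has already discharged that bookkeeping, so that \(\mmsc{C^\I}{\I}^\I = C^\I\) holds for \emph{every} concept expression \(C\) irrespective of how its role depth compares to \(\newd{C^\I}{\I}\) (and the degenerate case \(C^\I = \emptyset\), where \(\mmsc{C^\I}{\I} \equiv \bot\), is covered for free). What your route buys is a short, self-contained derivation with an explicit dependency chain inside the paper itself; what the paper's route buys is brevity and direct alignment with the cited literature.
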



Then, we define for each concept expression \(C\) and interpretation \(\I\)
a specific concept in \(\Lambda_\I\) which is called the lower approximation
of \(C\) in \(\I\). We recall that, for $ X\subseteq \Delta^\I $, we write \(\mmsc{X}{\I}\)
as a shorthand for \(\mmsc{X}{\I}[\newd{X}{\I}]\). 

\begin{definition}[Lower Approximation (adapted from Definition 5.4 in \cite{Distel2011})]
\label{def:approx}
    Let \(C\) be an \(\ourEL\) concept expression and \(\I = (\Delta^\I,
    \cdot^\I)\) a model. Also let \(\NC \cup \NR\) be a finite signature and 
    \(\ourEL(\NC, \NR)\) the set of all \(\ourEL\) concept expressions over  
    \(\NC \cup \NR\).
    Then, there are concept names \(U \subseteq \NC\) and pairs \(\Pi \subseteq \NR
    \times \ourEL(\NC, \NR)\) such that:

    \[C = \bigsqcap U \sqcap \bigsqcap_{(r, E) \in \Pi} \exists
    r.E\]
   
    We define the lower approximation of \(C\) in \(\I\)  as:
    \begin{align*}
        &\app(C, \I) =\\
        &\qquad\begin{cases}
            \bigsqcap U \sqcap \bigsqcap_{(r, E) \in \Pi} \exists
            r.\mmsc{E^\I}{\I} & \text{ if } C \neq \bot,\\
            \bot & \text{otherwise.}
        \end{cases}
    \end{align*}
    
\end{definition}

Concept expressions built according to Definition~\ref{def:approx} are always
elements of \(\Lambda_\I\) because they are a conjunction of elements in
\(M_\I\) (Definition~\ref{def:setofconceptexpressions:MI}).
Next, with a straightforward, but nevertheless important, adaptation of the
Lemma 5.8 from \cite{Distel2011} we prove that the lower approximation of
a concept and the concept itself have the same extension.

\begin{lemma}
\label{lem:appSub}
    Let \(C\) be an \(\ourEL\) concept expression and \(\I = (\Delta^\I,
    \cdot^\I)\) a model. It holds that
    \[\mmsc{C^\I}{\I}^\I = \app(C, \I)^\I = C^\I.\]
\end{lemma}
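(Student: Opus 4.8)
The plan is to establish the two equalities $\mmsc{C^\I}{\I}^\I = C^\I$ and $\app(C,\I)^\I = C^\I$ separately; chaining them gives exactly the statement. The first is nothing more than Lemma~\ref{lem:r6p}, which already asserts $\mmsc{C^\I}{\I}^\I = C^\I$ for every \ourEL concept expression $C$, so no additional work is needed on that front. The whole substance of the lemma therefore reduces to showing $\app(C,\I)^\I = C^\I$.

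For this second equality I would argue by cases on whether $C = \bot$. If $C = \bot$, then $\app(C,\I) = \bot$ by Definition~\ref{def:approx}, and both sides equal $\emptyset = \bot^\I$, so the equality is immediate. If $C \neq \bot$, I would invoke the normal form guaranteed by Definition~\ref{def:approx} and write $C = \bigsqcap U \sqcap \bigsqcap_{(r,E)\in\Pi}\exists r.E$ with $U \subseteq \NC$ and $\Pi \subseteq \NR \times \ourEL(\NC,\NR)$. Taking extensions turns the conjunction into an intersection:
\[ C^\I = \bigcap_{A\in U}A^\I \cap \bigcap_{(r,E)\in\Pi}(\exists r.E)^\I. \]

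The key step is then a term-by-term rewrite: by the second identity of Lemma~\ref{lem:monotonicity} we have $(\exists r.\mmsc{E^\I}{\I})^\I = (\exists r.E)^\I$ for each pair $(r,E)\in\Pi$. Substituting each $(\exists r.E)^\I$ by the equal set $(\exists r.\mmsc{E^\I}{\I})^\I$ leaves the overall intersection unchanged, and the resulting expression is precisely the extension of $\app(C,\I) = \bigsqcap U \sqcap \bigsqcap_{(r,E)\in\Pi}\exists r.\mmsc{E^\I}{\I}$. Hence $\app(C,\I)^\I = C^\I$. (If one prefers to swap conjuncts one at a time rather than argue directly on the intersection, the first identity of Lemma~\ref{lem:monotonicity} justifies each such swap inside a larger conjunction.)

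I expect essentially no obstacle here, since both workhorse facts---Lemmas~\ref{lem:r6p} and~\ref{lem:monotonicity}---are already available; the argument is an unfolding of Definition~\ref{def:approx} followed by a routine application of Lemma~\ref{lem:monotonicity}. The only points deserving a word of care are the degenerate case $C = \bot$ and the observation that the rewrite takes place at the level of extensions, i.e.\ of intersections, so that conjuncts with equal extensions may be exchanged freely without tracking any syntactic subsumption.
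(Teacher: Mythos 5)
Your proposal is correct and takes essentially the same route as the paper's proof: a case split on \(C = \bot\), unfolding the normal form of Definition~\ref{def:approx}, a conjunct-wise application of Lemma~\ref{lem:monotonicity} to obtain \(\app(C, \I)^\I = C^\I\), and Lemma~\ref{lem:r6p} for the MMSC equality. The only (immaterial) difference is organizational: the paper first derives \(C^\I = \app(C, \I)^\I\) via Lemma~\ref{lem:monotonicity} and then applies Lemma~\ref{lem:r6p} to chain in \(\mmsc{C^\I}{\I}^\I\), whereas you establish the two equalities independently and concatenate them.
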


\begin{proof}
    If \(C = \bot\) then \(\mmsc{C^\I}{\I}^\I = \app(C, \I)^\I
    = \emptyset\).
    Otherwise, there are concept names \(U \subseteq \NC\) and pairs
    \(\Pi \in \NR \times \ourEL(\NC, \NR)\) such that

    \[C = \bigsqcap U \sqcap \bigsqcap_{(r, E) \in \Pi} \exists
    r.E\]



    Using Lemma~\ref{lem:monotonicity} we obtain:

    \begin{align*}
        C^\I &= (\bigsqcap U \sqcap \bigsqcap_{(r, E) \in \Pi}
        \exists r.E)^\I\\
        {} &= (\bigsqcap U \sqcap \bigsqcap_{(r, E) \in \Pi} \exists
        r.\mmsc{E^\I}{\I})^\I\\
        {} &= (\app(C, \I))^\I
    \end{align*}

    Finally, we can apply Lemma~\ref{lem:r6p} obtaining
    \(\mmsc{C^\I}{\I}^\I = \app(C, \I)^\I\).
\end{proof}

Using these results, we can conclude that for each MMSC there is a concept expression in
\(\Lambda_\I\) with the same extension in \(\I\). 
With this observation we can 
we can proceed to Theorem~\ref{thm:b2}'s proof.




\thmbase*

\begin{proof}
    As \(\Lambda_\I\) is finite, so it is \(\base(\I)\). The concept inclusions
    are clearly sound and the soundness of the equivalences is due to 
    Lemma~\ref{lem:r6p}.


    Let \(\J = (\Delta^\J, \cdot^\J)\) be an arbitrary interpretation such that
    \(\J \models \base(\I)\).
    For completeness, we prove that for any \(\ourEL\) concept expression \(C\),
    \(\J \models C \equiv \mmsc{C^\I}{\I}\).
    We prove this claim by induction of the structure of \(C\).
	
	\paragraph{Base case:} If \(C = \bot\) or \(C = A\) where $A\in\NC$,
	then $C\in\Lambda_\I$, by definition of $\Lambda_\I$. 
	 %
    Then, by definition of $\base(\I)$, we have that \( C \equiv
    \mmsc{C^\I}{\I}
	\in \base(\I)\). 
	
	\paragraph{Step case (\(\sqcap\)):} Suppose \(C = E \sqcap F\) and 
	the claim holds for $E$ and $F$. By the 
	inductive hypothesis, 
	\(\base(\I) \models E \equiv
    \mmsc{E^\I}{\I}\) and \(\base(\I) \models F \equiv
    \mmsc{F^\I}{\I}\). 
	Hence, for all interpretations $ \J $ such that $ \J \models \base(\I) $,
	we have that \(E^\J
    = \mmsc{E^\I}{\I}^\J\) and \(F^\J
    = \mmsc{F^\I}{\I}^\J\).
%
	%
    By Lemma~\ref{lem:appSub}, 
    there are 
	\(\overline{E},\overline{F}\in\Lambda_\I\)
    such that \(\mmsc{E^\I}{\I}^\I = \overline{E}^\I\) and 
    \(\mmsc{F^\I}{\I}^\I = \overline{F}^\I\). 
    Moreover, by Lemma~\ref{lem:r6p}, 
    \(\mmsc{E^\I}{\I}^\I = E^\I\) and
    \(\mmsc{F^\I}{\I}^\I = F^\I\).
	Therefore 
	\((\overline{E} \sqcap \overline{F})^\I = 
	\overline{E}^\I \cap \overline{F}^\I =
	E^\I \cap F^\I = (E\sqcap F)^\I \).
	
	As
	\(\overline{E} \sqcap \overline{F} \in \Lambda_\I\)
    (up to logical equivalence),   \(\overline{E} \sqcap \overline{F} \equiv
    \mmsc{(\overline{E} \sqcap \overline{F})^\I}{\I} \in \base(\I)\) (again up to logical equivalence).
	Since \(\J\) is a model of \(\base(\I)\), by
	Lemma~\ref{lem:monotonicity}:
	\begin{align*}
	\left(\overline{E} \sqcap \overline{F}\right)^\J &=
	\mmsc{(\overline{E} \sqcap \overline{F})^\I}{\I}^\J\\
    {} &= \mmsc{(E \sqcap F)^\I}{\I}^\J\\
    {} &= \mmsc{C^\I}{\I}^\J.
	\end{align*}
	
	To prove that $C^\J
    = \mmsc{C^\I}{\I}^\J$, in the following, we write $\overline{C}$ as a shorthand for $\overline{E} \sqcap \overline{F}$
	and
	show that $  \overline{C}^\J = C^\J $.
    Since $ \overline{E}  \in \Lambda_\I$,
    we have that $ \base(\I) \models \overline{E} \equiv
    \mmsc{\overline{E}^\I}{\I} $.
    Moreover, as \(\mmsc{E^\I}{\I}^\I = \overline{E}^\I\),
    we have that
    \[ \base(\I) \models \overline{E} \equiv
        \mmsc{\mmsc{E^\I}{\I}^\I}
    {\I}. \]
    By Lemma~\ref{lem:r5}, it follows that 
    \[\emptyset \models \mmsc{\mmsc{E^\I}{\I}^\I}
        {\I} \equiv \mmsc{E^\I}{\I}.\]
    Therefore,
    $ \base(\I) \models \overline{E} \equiv \mmsc{E^\I}{\I}$
    and as 
    $ \base(\I) \models E \equiv \mmsc{E^\I}{\I}$,
    then $ \base(\I) \models E \equiv \overline{E}$.
    Similarly we obtain 
    $ \base(\I) \models \overline{F} \equiv F $
    and that $ \base(\I) \models \overline{C} \equiv C $.
    As $ \J $ was  an arbitrarily chosen model of $ \base(\I) $, 
    we  conclude that 
    $ \base(\I) \models \overline{C} \equiv  
    \mmsc{C^\I}{\I} $ and  
    $\base(\I) \models \overline{C} \equiv C$.

	\paragraph{Step case (\(\exists\)):} In this case, \(C = \exists r.E\)
	for some \(r \in \NR\) and \(\ourEL\) concept expression \(E\). Let
	\(\J\) be an interpretation such that \(\J \models \base(\I)\). We
	know that:
	
	\begin{align*}
	x \in C^\J &\iff x \in (\exists r.E)^\J\\
	{} &\iff \exists y \in E^\J : (x, y) \in r^\J.
	\end{align*}
	By our induction hypothesis, \(\base(\I) \models E \equiv
    \mmsc{E^\I}{\I} \), hence:
	\begin{align*}
        x \in C^\J &\iff \exists y \in \mmsc{E^\I}{\I}^\J : (x, y) \in r^\J\\
        {} &\iff x \in (\exists r.\mmsc{E^\I}{\I})^\J.
	\end{align*}
	
    In short, we proved that $ C^\J= (\exists r.\mmsc{E^\I}{\I})^\J $.
    Next, as \(\exists r.\mmsc{E^\I}{\I} \in
	M_\I\), we know that 
	\begin{align*}
    &	\exists r.\mmsc{E^\I}{\I}  \equiv  \\
	& \mmsc{\exists r.\mmsc{E^\I}{\I}^\I}{\I}
	\in \base(\I)
	\end{align*}
	%
    With Lemma~\ref{lem:monotonicity} we obtain:
	
	\begin{align*}
        (\exists r.\mmsc{E^\I}{\I})^\J &=
        (\mmsc{\exists 	r.\mmsc{E^\I}{\I}^\I}{\I})^\J\\ 
        {} &= (\mmsc{(\exists r.E)^\I}{\I})^\J\\
        {} &= (\mmsc{C^\I}{\I})^\J.
	\end{align*}
	
    Thus, \(C^\J = (\mmsc{C^\I}{\I})^\J\). Since \(\J\) was
	chosen arbitrarily, we can conclude that \(\base(\I) \models C \equiv
    \mmsc{C^\I}{\I}\). 

    Now, we prove that if \(\I \models C \sqsubseteq D\), then
    \(\base(\I) \models \mmsc{C^\I}{\I} \sqsubseteq
    \mmsc{D^\I}{\I}\). Let \(\J\) be a model of
    \(\base(\I)\). We know from Lemmas~\ref{lem:r6p}
    and~\ref{lem:appSub}   that there are \(U, V \subseteq M_\I\) such that
    \(C^\I = (\bigsqcap U)^\I\) and \(D^\I = (\bigsqcap V)^\I\). From the
    definition of \(\base(\I)\), we obtain \( \mmsc{(\bigsqcap
    U)^\I}{\I} \sqsubseteq \mmsc{(\bigsqcap
    V)^\I}{\I} \in \base(\I)\). Therefore:

    \[
        \J \models \mmsc{(\bigsqcap U)^\I}{\I}
    \sqsubseteq \mmsc{(\bigsqcap V)^\I}{\I}
    \]

    Replacing \((\bigsqcap U)^\I\) with \(C^\I\) and \((\bigsqcap V)^\I\)
    with \(D^\I\) yields:

    \[
        \J \models \mmsc{C^\I}{\I} \sqsubseteq
        \mmsc{D^\I}{\I}.
    \]

    Therefore, using the fact that \(\J \models C \equiv \mmsc{C^\I}{\I}\) for
    every \(\ourEL\) concept expression \(C\) (proved earlier)
    we can conclude that \(\J \models C \sqsubseteq D\).

    Since all the required concept inclusions hold in an arbitrary model of
    \(\base(\I)\), whenever they hold in \(\I\) we have that
    \(\base(\I)\) is also complete for the \(\ourEL\) CIs.
\end{proof}

\section{Proofs for Section~\ref{sec:mvf}}

In the following we present the proofs related to the  computation of the MVF function.
In particular, we provide proofs for
the relationship between 
the condensation graph and the MVF function
(Lemma~\ref{lem:mvfCondense}), and the correctness of Algorithm~\ref{alg:mvf}
(Lemma~\ref{lem:maxWeight}). 

\mvfCondense*

\begin{proof}
    First we prove that every path \(\dgpath{w}^* = V_1, \dots, V_m\) in
    \(\dgraph^*\) starting
    at \(\scc(\dgraph, v)\) induces a walk in \(\dgraph\) starting at \(v\) with
    \(\lend(\dgpath{w}) = \weight(\dgpath{w}^*)\).  Let \(v_1 = v\). For each
    \(1 \leq i < m\), the induced
    walk must: visit \(v_i\), then pass
    through all vertices in \(V_i\) (repeating vertices whenever needed), then
    visit a vertex \(u_i \in V_i\) such that there is an edge \((u_i, r,
    v_{i+1}) \in E\) with \(v_{i+1} \in V_{i + 1}\) (this is possible due to the
    definitions of SCCs and condensation). When the walk reaches a vertex
    \(u_{m-1}\), it must visit \(v_m\) and 
    pass through every vertex in \(V_m\) before stopping.
    Such walk visits every vertex in \(\bigcup_{i=1}^m V_i\), thus
    \(\lend(\dgpath{w}) = \weight(\dgpath{w}^*)\).

    Now let \(\dgpath{w}\) be a walk in \(\dgraph\) starting at \(v\) which is
    induced (as explained earlier) by a path \(\dgpath{w}^*\) in \(\dgraph^*\) starting at
    \(\scc(\dgraph, v)\) with maximum weight. Assume that there is a walk
    \(\overline{\dgpath{w}}\) in \(\dgraph\) starting at \(v\)  such that
    \(\lend(\overline{\dgpath{w}}) > \lend(\dgpath{w})\). Due to the definitions
    of SCC and condensation we know that there is a path
    \(\overline{\dgpath{w}}^*\) in \(\dgraph^*\) starting at \(\scc(\dgraph,
    v)\) such that \(\lend{\overline{\dgpath{w}}} \leq
    \weight(\overline{\dgpath{w}}^*)\). However, this would imply that:
    \(\weight(\dgpath{w}^*) = \lend(\dgpath{w}) < \lend{\overline{\dgpath{w}}}
    \leq \weight(\overline{\dgpath{w}}^*)\), which is a contradiction since we
    assume that \(\dgpath{w}^*\) has maximal weight. Therefore, no walk in
    \(\dgraph\) starting at \(v\) can visit more vertices than
    \(\weight(\dgpath{w}^*)\).


    Since we have shown that for every path \(\dgpath{w}^*\) in
    \(\dgraph^*\) starting at \(\scc(\dgraph, v)\), there is a walk
    \(\dgpath{w}\) in \(\dgraph\) starting at \(v\), with \(\lend(\dgpath{w})
    = \weight(\dgpath{w}^*)\), we can conclude that the statement of this lemma
    holds.
\end{proof}

\lemMaxWeight*

\begin{proof}
	Let \(\dgraph^* = (V^*, E^*)\) be the condensation of \(\dgraph\). If 
	\(W' \in V^*\) is unreachable from \(\scc(\dgraph, v)\) then \({wgt}[V']\) will
    remain null as it will never be visited. Otherwise, \(W'\) will be visited
    in some call of \({\sf maxWeight}\). If it has no successors,
    the loop in
    Line~\ref{line:successors} will not do anything, and thus \({wgt}[W']
    = |W'|\) as expected.  Instead, 	
	if \(\scc(\dgraph, v)\) has successors, then the maximum
	weight of a path starting at \(\scc(\dgraph, v)\) in \(\dgraph^*\) is given by \(|\scc(\dgraph, v)|\)
	plus the maximum value computed among its successors. This equation holds
	because \(\dgraph^*\) is a DAG.  
    Since, the loop in
    Line~\ref{line:successors} forces the maximum weights of the successors of
    \(W'\) to be calculated first, the value returned in
    Line~\ref{line:recursiveReturn} is correct.	
\end{proof}

\fi

\end{document}